\spnewtheorem{observation}[theorem]{Observation}{\bfseries}{\itshape}
\renewcommand{\qed}{$\Box$}
\renewenvironment{proof}
{\noindent{\bf Proof.}\ }
{\hfill\qed\par\bigskip}
\newenvironment{proofof}[1]
{\noindent{\bf Proof of #1.}\ }
{\hfill\qed\par\bigskip}
\newcounter{claim}
	{\refstepcounter{claim}\vspace{1ex}\noindent {(\it\arabic{claim}) {#1}{}}\it}{\vspace{1ex}}
	{\noindent {}{#1}{}}{This proves~(\arabic{claim}).\vspace{1ex}}
\begin{document}

\title{Unique perfect phylogeny is $NP$-hard}

\author{Michel Habib\inst{1} and Juraj Stacho\inst{2}}

\institute{
LIAFA -- CNRS and Universit\'e Paris Diderot -- Paris VII,\\
Case 7014, 75205 Paris Cedex 13, France
(\email{habib@liafa.jussieu.fr})
\and
Caesarea Rothschild Institute, University of Haifa\\
Mt. Carmel, 31905 Haifa, Israel
(\email{stacho@cs.toronto.edu})
}

\maketitle

\begin{abstract}
We answer, in the affirmitive, the following question proposed by
Mike Steel as a \$100 challenge: {\em ``Is the following problem $NP$-hard?
Given a ternary\footnote[4]{The original formulation uses the term ``binary'',
in the sense of ``rooted binary tree'', but in this contex the two are
equivalent.} phylogenetic $X$-tree ${\cal T}$ and a collection $\cal Q$ of
quartet subtrees on $X$, is ${\cal T}$ the only tree that displays~$\cal
Q$?''}~\cite{phyl-book,steel-web}
\end{abstract}

\section{Introduction}

One of the major efforts in molecular biology has been the computation of
phylogenetic trees, or {\em phylogenies}, which describe the evolution of a set
of species from a common ancestor. A phylogenetic tree for a set of species is a
tree in which the leaves represent the species from the set and the internal
nodes represent the (hypothetical) ancestral species. One standard model for
describing the species is in terms of {\em characters}, where a character is an
equivalence relation on the species set, partitioning it into different {\em
character states}. In this model, we also assign character states to the
(hypothetical) ancestral species. The desired property is that for each state of
each character, the set of nodes in the tree having that character state forms a
connected subgraph. When a phylogeny has this property, we say it is {\em
perfect}. The Perfect Phylogeny problem \cite{gusfield} then asks {\em for a
given set of characters defining a species set, does there exist a perfect
phylogeny?} Note that we allow that states of some characters are unknown for
some species; we call such characters {\em partial}, otherwise we speak of {\em
full} characters.  This approach to constructing phylogenies has been studied
since the 1960s \cite{60s1,60s3,60s4,60s5,60s2} and was given a precise
mathematical formulation in the 1970s \cite{70s1,70s2,70s3,70s4}. In particular,
Buneman \cite{buneman} showed that the Perfect Phylogeny problem reduces to a
specific graph-theoretic problem, the problem of finding a chordal completion
of a graph that respects a prescribed colouring.  In fact, the two problems are
polynomially equivalent \cite{warnow1990}. Thus, using this formulation, it has
been proved that the Perfect Phylogeny problem is $NP$-hard in \cite{twostrikes}
and independently in \cite{steelnphard}.  These two results rely on the fact
that the input may contain partial characters. In fact, the characters in these
constructions only have two states.  If we insist on full characters, the
situation is different as for any fixed number $r$ of character states, the
problem can be solved in time polynomial \cite{agarwala1994} in the size of the
input (and exponential in $r$).  In fact, for $r=2$ (or $r=3$), the solution
exists if and only if it exists of every pair (or triple) of characters
\cite{70s4,threestate}.  Also, when the number of characters is $k$ (even if
there are partial characters), the complexity \cite{morris45} is polynomial in
the number of species (and exponential~in~$k$).

Another common formulation of this problem is the problem of a {\em consensus
tree} \cite{dekker,gordon86,steelnphard}, where a collection of subtrees with
labeled leaves is given (for instance, the leaves correspond to species of a
partial character).  Here, we ask for a (phylogenetic) tree such that each of
the input subtrees can be obtained by contracting edges from the tree (we say
that the tree {\em displays} the subtree). It turns out that the problem is
equivalent \cite{phyl-book} even if we only allow particular input subtrees, the
so-called {\em quartet trees} which have exactly six vertices and four leaves.
In fact, any ternary phylogenetic tree can be uniquely described by a collection
of quartet trees \cite{phyl-book}.  However, a collection of quartet trees does
not necessarily uniquely describe a ternary phylogenetic tree. 

This leads to a natural question: {\em  what is the complexity of deciding
whether or not a collection of quartet trees uniquely describes a (ternary)
phylogenetic tree?} This question was posed in \cite{phyl-book}, later
conjectured to be $NP$-hard and listed on M. Steel's personal webpage
\cite{steel-web} where he offers \$100 for the first proof of $NP$-hardness.  In
this paper, we answer this question by showing that the problem is indeed
$NP$-hard. In particular, we prove the following theorem.

\begin{theorem}\label{thm:unique-phyl}
It is $NP$-hard to determine, given a ternary phylogenetic $X$-tree ${\cal T}$
and a collection $\cal Q$ of quartet subtrees on $X$, whether or not ${\cal T}$
is the only phylogenetic tree that displays $\cal Q$.
\end{theorem}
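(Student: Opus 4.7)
The natural approach is a polynomial-time reduction from \textsc{3-Sat} to the complementary problem, which asks whether there exists a phylogenetic tree ${\cal T}'\ne{\cal T}$ that also displays~${\cal Q}$. That complement is in $NP$ --- a certificate is ${\cal T}'$ together with, for each quartet in ${\cal Q}$, the edge of ${\cal T}'$ that witnesses it --- so once we show it is $NP$-complete, the $NP$-hardness of Theorem~\ref{thm:unique-phyl} follows because any polynomial decision procedure for the uniqueness question would also decide the complement.

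Given a \textsc{3-Sat} formula $\phi$ with variables $x_1,\ldots,x_n$ and clauses $C_1,\ldots,C_m$, I would take the leaf set $X$ as a disjoint union of a constant-sized \emph{variable block} $V_i$ for each $x_i$, a constant-sized \emph{clause block} $W_j$ for each $C_j$, and a handful of \emph{anchor leaves}. The tree ${\cal T}$ is crafted so that each $V_i$ is a small local subtree which, once ${\cal Q}$ is restricted to $V_i$, admits exactly two admissible topologies: one interpreted as $x_i=\text{false}$ (realised by ${\cal T}$) and one as $x_i=\text{true}$. All of ${\cal T}$ outside the variable blocks is pinned down by a generous collection of \emph{rigidity quartets} that force every tree ${\cal T}'$ displaying ${\cal Q}$ to coincide with ${\cal T}$ on every quartet not lying wholly inside some~$V_i$. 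Consequently any candidate ${\cal T}'\ne{\cal T}$ is determined by a nonempty subset $F\subseteq\{1,\ldots,n\}$ of variable blocks flipped relative to ${\cal T}$, i.e.\ by an assignment $\alpha\ne\alpha_\emptyset$, where $\alpha_\emptyset$ is the baseline realised by ${\cal T}$.

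The clause gadgets sit in the blocks $W_j$: for each clause $C_j$ I would include a small number of quartets on four leaves --- one from $W_j$ and one from each variable block of the three literals of $C_j$ --- chosen so that (a)~they are all displayed by ${\cal T}$, and (b)~the tree ${\cal T}'$ obtained by flipping any nonempty $F$ continues to display all of them if and only if the induced assignment $\alpha$ satisfies $C_j$. With the baseline $\alpha_\emptyset$ chosen not to satisfy $\phi$ --- for instance all-false, which one can always arrange by a minor preprocessing step on the formula --- the existence of a phylogenetic tree ${\cal T}'\ne{\cal T}$ displaying ${\cal Q}$ becomes equivalent to the satisfiability of $\phi$.

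The principal difficulty is in engineering the clause gadget so that (a) and (b) coexist: the four leaves of a clause quartet live in different blocks, and flipping a single variable block must change the topology of the quartet in precisely the direction required to encode satisfaction of the clause, while the baseline topology must display the quartet unconditionally. Equally delicate is setting up the rigidity quartets so that they completely lock every other aspect of ${\cal T}$ without interfering with the action of the clause gadgets, and verifying that each variable gadget really does admit exactly two local topologies rather than three or more. Carrying out these three designs simultaneously --- variable gadgets, rigidity quartets, and clause gadgets --- and proving that no large-scale rearrangement of ${\cal T}$ can defeat them is where I expect the bulk of the technical work to reside.
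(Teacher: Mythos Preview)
Your high-level plan contains a structural inconsistency that cannot be repaired without changing the source problem of the reduction.

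You require simultaneously that (a) the baseline tree ${\cal T}$, corresponding to an assignment $\alpha_\emptyset$ that does \emph{not} satisfy $\phi$, displays every clause quartet, and (b) for every nonempty flip set $F$, the tree ${\cal T}_F$ displays the quartets attached to $C_j$ if and only if $\alpha_F$ satisfies $C_j$. But by your own design each quartet for $C_j$ draws its four leaves only from $W_j$ and from the three variable blocks occurring in $C_j$; hence whether it is displayed by ${\cal T}_F$ depends solely on the restriction of $F$ to those three variables. Now take a clause $C_j$ not satisfied by $\alpha_\emptyset$ --- at least one exists, by your preprocessing --- and any nonempty $F$ that avoids all three variables of $C_j$. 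Locally at $C_j$ the tree ${\cal T}_F$ is identical to ${\cal T}$, so by (a) the $C_j$-quartets are displayed; yet $\alpha_F$ agrees with $\alpha_\emptyset$ on the variables of $C_j$ and hence fails $C_j$, so by (b) they must \emph{not} be displayed. The contradiction is intrinsic: a gadget that sees only its own three variable blocks cannot tell ``global flip set empty'' from ``global flip set nonempty but disjoint from my variables''.

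The way out is to let the baseline assignment \emph{satisfy} $\phi$: then (a) is the $F=\emptyset$ instance of (b) and the tension vanishes. But this forces you to receive a satisfying assignment as part of the input, so the reduction must be from the \emph{uniqueness} version of a satisfiability problem rather than from \textsc{3-Sat} itself. That is exactly what the paper does: it reduces from ``given an instance $I$ of \textsc{one-in-three-3sat} together with a satisfying assignment $\sigma$, is $\sigma$ the only one?'' (NP-hard by Theorem~\ref{thm:unique-3sat}), builds ${\cal Q}_I$ and the tree ${\cal T}_\sigma$, and shows that $\sigma$ is unique iff ${\cal T}_\sigma$ is. Rather than reasoning directly about alternative trees, the paper routes the argument through the Semple--Steel criterion (Theorems~\ref{thm:semplesteel} and~\ref{thm:sandwich}): it proves a bijection between satisfying assignments of $I$ and minimal chordal sandwiches of $({\rm int^*}({\cal Q}_I),{\rm forb}({\cal Q}_I))$ (Theorem~\ref{thm:one}), and separately verifies that ${\cal T}_\sigma$ is distinguished by ${\cal Q}_I$ (Theorem~\ref{thm:unique-trees}). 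The choice of \textsc{one-in-three-3sat} over \textsc{3-Sat} is not cosmetic --- the paper notes explicitly that the clause gadget needed for the bijection does not go through for ordinary \textsc{3-Sat}.
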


We prove the theorem by describing a polynomial-time reduction from the
uniqueness problem for {\sc one-in-three-3sat}, which is $NP$-hard by the
following result of \cite{unique}. (Note that \cite{unique} gives a complete
complexity characterization of uniqueness for boolean satisfaction problems
similar to that of Shaefer \cite{shaefer}.)

\begin{theorem}{\rm\cite{unique}}\label{thm:unique-3sat}
It is $NP$-hard to decide, given an instance $I$ to {\sc one-in-three-3sat}, and
a truth assignment $\sigma$ that satisfies $I$, whether or not $\sigma$ is the
unique satisfying truth assignment for $I$.
\end{theorem}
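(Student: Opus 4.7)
The plan is to establish NP-hardness by reducing the uniqueness problem for $3$-CNF SAT --- given a $3$-CNF formula $\psi$ and a satisfying assignment $\tau$, decide whether $\tau$ is the only satisfying assignment of $\psi$ --- to the corresponding uniqueness problem for one-in-three-3SAT. That UNIQUE $3$-CNF SAT is NP-hard is classical: UNIQUE SAT is $\mathsf{DP}$-complete (Papadimitriou), and since $\mathsf{NP}\subseteq\mathsf{DP}$, every $\mathsf{DP}$-complete problem is NP-hard. Given an instance $(\psi,\tau)$, I would construct in polynomial time a one-in-three-3SAT instance $I$ together with a satisfying assignment $\sigma$, in such a way that the satisfying assignments of $I$ are in bijection with those of $\psi$, and $\sigma$ corresponds to $\tau$ under this bijection. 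Then $\tau$ is the unique solution of $\psi$ if and only if $\sigma$ is the unique solution of $I$, transporting the NP-hardness.

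The key ingredient is a parsimonious reduction from $3$-CNF SAT to one-in-three-3SAT, that is, a polynomial-time reduction that preserves the number of satisfying assignments. Each clause $(\ell_1\vee\ell_2\vee\ell_3)$ of $\psi$ is replaced by a small collection of one-in-three clauses on $\ell_1,\ell_2,\ell_3$ and a few fresh auxiliary variables, designed so that (i) for every assignment of $\ell_1,\ell_2,\ell_3$ satisfying the original disjunction there is exactly one satisfying assignment of the auxiliaries, and (ii) for the forbidden all-false assignment there is none. Negated literals are handled by attaching to each variable $x_i$ a fresh partner $x_i'$ via an additional one-in-three gadget forcing $x_i+x_i'=1$ with uniquely determined auxiliaries. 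Composing these gadgets produces $I$ whose satisfying assignments are in bijection with those of $\psi$, and $\tau$ lifts to an explicitly computable $\sigma$ which is given to the uniqueness algorithm as part of the reduced instance.

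The main obstacle is achieving this parsimoniousness throughout the transformation. The textbook Schaefer-style reduction from $3$-CNF SAT to one-in-three-3SAT is not immediately parsimonious: in the standard gadget the auxiliary variables typically admit several consistent settings for a given assignment to the original variables, which spuriously multiplies solutions and destroys the bijection on which the uniqueness correspondence rests. Overcoming this requires a careful redesign of each gadget so that every auxiliary variable's value is forced by the values of the original variables, plus a short case analysis verifying that every satisfying assignment of $I$ is uniquely determined by its restriction to the variables of $\psi$. Once this has been established, the bijection between the solution sets of $\psi$ and $I$ yields that $\sigma$ is the unique satisfying assignment of $I$ exactly when $\tau$ is the unique satisfying assignment of $\psi$, and the theorem follows from the NP-hardness of UNIQUE $3$-CNF SAT.
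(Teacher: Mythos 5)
The paper does not actually prove this statement: it is imported from Juban's dichotomy theorem for unique satisfiability of Boolean constraint languages (\cite{unique}), so there is no in-paper proof to compare against. Your overall strategy --- a parsimonious (solution-count-preserving) reduction from the analogous uniqueness problem for 3-CNF-SAT --- is a legitimate alternative route, and it is in the same spirit as the preprocessing the paper itself performs in Section~\ref{sec:constr}, where repeated variables are eliminated by transformations that explicitly preserve the number of satisfying assignments. However, there are two genuine problems with the proposal as written.

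First, your justification of the hardness of the source problem is incorrect. UNIQUE SAT is \emph{not} known to be $DP$-complete; that is a well-known open question (Valiant--Vazirani give $DP$-hardness only under randomized reductions). Worse, the problem you actually reduce from --- uniqueness of a \emph{given} satisfying assignment $\tau$ --- is in $coNP$ (a certificate for non-uniqueness is a second satisfying assignment), so it cannot be $DP$-hard unless $NP=coNP$; citing $DP$-completeness of a different problem (uniqueness \emph{without} a given assignment, which is not known to lie in $coNP$) does not establish what you need. The correct route is to show that the ``another solution'' problem for 3-SAT (given $\psi$ and a satisfying $\tau$, does a second satisfying assignment exist?) is $NP$-complete; this requires its own small reduction (e.g., relativizing an arbitrary 3-CNF by a fresh literal so that a designated trivial assignment satisfies the new formula and every other satisfying assignment encodes a solution of the original), after which ``$NP$-hard'' must be read in the Turing-reduction sense --- exactly as it must be for the paper's main theorem, whose decision problem is likewise in $coNP$. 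Second, the parsimonious clause gadget, which you correctly identify as the crux, is never exhibited or verified: you only list the properties it must have. Such gadgets do exist (they underlie the $\#P$-completeness of counting {\sc one-in-three-3sat} solutions, which the paper invokes for its counting corollary), so this gap is fillable, but as it stands the central lemma of your argument is asserted rather than proved.
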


Our construction in the reduction is essentially a modification of the
construction of \cite{twostrikes} which proves $NP$-hardness of the Perfect
Phylogeny problem. Recall that the construction of \cite{twostrikes} produces
instances ${\cal Q}$ that have a perfect phylogeny if and only if a particular
boolean formula $\varphi$ is satisfiable.  We immediately observed that these
instances ${\cal Q}$ have, in addition, the property that $\varphi$ has a unique
satisfying assignment if and only if there is a unique minimal restricted
chordal completion of the partial partition intersection graph of $\cal Q$ (for
definitions see Section \ref{sec:prelim}). This is precisely one of the two
necessary conditions for uniqueness of perfect phylogeny as proved by Semple and
Steel in \cite{semplesteel} (see Theorem \ref{thm:semplesteel}). Thus by
modifying the construction of \cite{twostrikes} to also satisfy the other
condition of uniqueness of \cite{semplesteel}, we obtained the construction that
we present in this paper. Note that, however, unlike \cite{twostrikes} which
uses {\sc 3sat}, we had to use a different $NP$-hard problem in order for the
construction to work correctly. Also, to prove that the construction is
correct, we employ a variant of the characterization of \cite{semplesteel} that
uses the more general chordal sandwich problem \cite{sandwich} instead of the
restricted chordal completion problem (see Theorem \ref{thm:sandwich}).  In
fact, by way of Theorems \ref{thm:cell-inter} and \ref{thm:sand2split}, we
establish a direct connection between the problem of perfect phylogeny and the
chordal sandwich problem, which apparently has not been yet observed.  (Note
that the connection to the (restricted) chordal completion problem of coloured
graphs as mentioned above \cite{buneman,warnow1990} is a special case of this.)
Using this result, we are able to present a much simplified
proof~of~Theorem~\ref{thm:unique-phyl}.

Finally, as a corollary, we obtain the following result.

\begin{corollary}[Chordal sandwich]
The Unique chordal sandwich problem is $NP$-hard. Counting the number of
minimal chordal sandwiches is $\#P$-complete.
\end{corollary}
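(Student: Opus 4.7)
My plan is to observe that the reduction already constructed to prove Theorem~\ref{thm:unique-phyl} passes through the chordal sandwich problem via Theorems~\ref{thm:cell-inter} and~\ref{thm:sand2split}, so both parts of the corollary should fall out by stopping the reduction at this intermediate stage. Concretely, given an instance $(I,\sigma)$ of unique {\sc one-in-three-3sat}, the construction produces a collection ${\cal Q}$ of quartet trees and a tree ${\cal T}$ displaying ${\cal Q}$; using the translation from quartets to partition intersection graphs and then to sandwich instances, I obtain a pair $(G_1,G_2)$ whose minimal chordal sandwiches correspond bijectively to the satisfying truth assignments of $I$, with the sandwich induced by ${\cal T}$ corresponding to $\sigma$. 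Under this correspondence, $\sigma$ is the unique satisfying assignment of $I$ if and only if the sandwich induced by ${\cal T}$ is the unique minimal chordal sandwich of $(G_1,G_2)$, which together with Theorem~\ref{thm:unique-3sat} yields NP-hardness of the Unique chordal sandwich problem.

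For the \#P-completeness claim, I would first check that counting minimal chordal sandwiches is in \#P, which is immediate: chordality is decidable in linear time, and minimality is checked by testing, for each fill edge, whether its removal destroys chordality. On the hard side, counting satisfying assignments of {\sc one-in-three-3sat} is \#P-complete (a standard consequence of the counting dichotomy for boolean satisfaction problems, in the spirit of~\cite{unique}). The reduction sketched in the first paragraph is already parsimonious---the bijection between satisfying assignments and minimal chordal sandwiches is supplied directly by Theorems~\ref{thm:cell-inter} and~\ref{thm:sand2split} applied to the specific quartet construction---so the \#P-hardness of counting minimal chordal sandwiches transfers along the same reduction, giving \#P-completeness.

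The main obstacle I anticipate is verifying that the correspondence coming out of the reduction is a genuine bijection, not merely a uniqueness-preserving map. This should follow from the structural nature of Theorems~\ref{thm:cell-inter} and~\ref{thm:sand2split}, which relate individual sandwich solutions to individual phylogenetic trees, rather than just existence to existence. The point I would need to spell out carefully is that no two distinct satisfying assignments of $I$ can give rise to the same minimal chordal sandwich, which should be transparent once one checks that the variable-assignment gadgets in the quartet construction induce pairwise distinct sets of fill edges for distinct assignments.
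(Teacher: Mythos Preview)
Your overall plan is correct and matches the paper's own argument: the corollary is derived by ``stopping'' the reduction for Theorem~\ref{thm:unique-phyl} at the chordal-sandwich stage, using Theorem~\ref{thm:unique-3sat} for the first part and the \#P-completeness of counting {\sc one-in-three-3sat} solutions (the paper cites~\cite{counting-sat}) for the second.

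One correction, though: the bijection between satisfying assignments of $I$ and minimal chordal sandwiches is \emph{not} supplied by Theorems~\ref{thm:cell-inter} and~\ref{thm:sand2split}. Those theorems only translate between minimal restricted chordal completions of ${\rm int}({\cal Q})$ and minimal chordal sandwiches of $({\rm int^*}({\cal Q}),{\rm forb}({\cal Q}))$; they say nothing about satisfying assignments. The one-to-one correspondence you need is exactly the content of Theorem~\ref{thm:one}, whose proof (Section~\ref{sec:proof-one}) is the substantial combinatorial work establishing that each satisfying assignment $\sigma$ yields a distinct minimal sandwich $G^*_\sigma$ and that every minimal sandwich arises this way. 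Once you cite Theorem~\ref{thm:one}, your anticipated ``main obstacle'' evaporates: the bijection is stated explicitly there, so no further check that distinct assignments yield distinct fill-edge sets is required. The detour through the tree ${\cal T}_\sigma$ and Theorems~\ref{thm:cell-inter}/\ref{thm:sand2split} is unnecessary for this corollary.
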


The first part follows directly from Theorems \ref{thm:unique-3sat} and
\ref{thm:one}, while the second part follows from Theorem \ref{thm:one} and
\cite{counting-sat}. (Note that \cite{counting-sat} gives a complete complexity
characterization for the problem of counting satisfying assignments for boolean
satisfaction problems, just like \cite{unique} gives for uniqueness as mentioned
above).

The paper is structured as follows. First, in Section \ref{sec:prelim}, we
describe some preliminary definitions and results needed for our construction of
the reduction. In particular, we describe, based on \cite{semplesteel},
necessary and sufficient conditions for the existence of a unique perfect
phylogeny in terms of the minimal chordal sandwich problem (cf.
\cite{sandwich-strongly,sandwich}). The proof of this characterization is
postponed until Section~\ref{sec:sandwich}.  In Section~\ref{sec:constr}, we
describe the actual construction and state one of the two uniqueness conditions
(Theorem~\ref{thm:one}) relating minimal chordal sandwiches to satisfying
assignments of an instance $I$ of {\sc one-in-three-3sat}. The proof is
presented later in Section \ref{sec:proof-one}.  In
Section~\ref{sec:unique-trees}, we describe and prove the other uniqueness
condition (Theorem~\ref{thm:unique-trees}) relating satisfying assignments of
$I$ to phylogenetic trees. In Section~\ref{sec:proof-unique-phyl}, we put these
results together to prove Theorem~\ref{thm:unique-phyl}.

\section{Preliminaries}\label{sec:prelim}
We mostly follow the terminology of \cite{semplesteel,phyl-book} and
graph-theoretical notions~of~\cite{west-book}.

Let $X$ be a non-empty set.  An {\em  $X$-tree} is a pair $(T,\phi)$ where $T$
is tree and $\phi:X\rightarrow V(T)$ is a mapping such that
$\phi^{-1}(v)\neq\emptyset$ for all vertices $v\in V(T)$ of degree at most two.
An $X$-tree $(T,\phi)$ is {\em ternary} if all internal vertices of $T$ have
degree three.  Two $X$-trees $(T_1,\phi_1)$, $(T_2,\phi_2)$ are {\em isomorphic}
if there exists an isomorphism $\psi:V(T_1)\rightarrow V(T_2)$ between $T_1$ and
$T_2$ that satisfies $\phi_2=\psi\circ\phi_1$.

An $X$-tree $(T,\phi)$ is a {\em phylogenetic $X$-tree} (or a {\em free
$X$-free} in \cite{semplesteel}) if $\phi$ is bijection between $X$ and the set
of leaves of $T$. 

A {\em partial partition} of $X$ is a partition of a non-empty subset of $X$
into at least two sets. If $A_1$, $A_2$, \ldots, $A_t$ are these sets, we call
them {\em cells} of this partition, and denote the partition
$A_1|A_2|\ldots|A_t$. If $t=2$, we call the partition a {\em partial split}. A
partial split $A_1|A_2$ is trivial if $|A_1|=1$ or $|A_2|=1$.

A {\em quartet tree} is a ternary phylogenetic tree with a label set of size
four, that is, a ternary tree ${\cal T}$ with 6 vertices, 4 leaves labeled
$a,b,c,d$, and with only one non-trivial partial  split $\{a,b\}|\{c,d\}$ that
it displays.  Note that such a tree is unambiguously defined by this partial
split. Thus, in the subseqent text, we identify the quartet tree ${\cal T}$ with
the partial split $\{a,b\}|\{c,d\}$, that is, we say that $\{a,b\}|\{c,d\}$ is
both a quartet tree and a partial split.



Let ${\cal T}=(T,\phi)$ be an $X$-tree, and let $\pi=A_1|A_2|\ldots|A_t$ be a
partial partition of $X$.  We say that ${\cal T}$ {\em displays} $\pi$ if there
is a set of edges $F$ of $T$ such that, for all distinct $i,j\in\{1\ldots t\}$,  the sets
$\phi(A_i)$ and $\phi(A_j)$ are subsets of the vertex sets of different
connected components of $T-F$.  We say that an edge $e$ of $T$ is {\em
distinguished} by $\pi$ if every set of edges that displays $\pi$ in ${\cal T}$
contains $e$.

Let $\cal Q$ be a collection of partial partitions of $X$.  An $X$-tree ${\cal
T}$ {\em displays} $\cal Q$ if it displays every partial partition in $\cal Q$.
An $X$-tree ${\cal T}=(T,\phi)$ is {\em distinguished} by ${\cal Q}$ if every
internal edge of $T$ is distinguished by some partial partition in $\cal Q$; we
also say that $\cal Q$ {\em distinguishes} $\cal T$. The set $\cal Q$ {\em
defines} ${\cal T}$ if $\cal T$ displays $\cal Q$, and all other $X$-trees that
display $\cal Q$ are isomorphic to $\cal T$. Note that if $\cal Q$ defines $\cal
T$, then $\cal T$ is necessarily a ternary phylogenetic $X$-tree, since
otherwise ``resolving'' any vertex either of degree four or more, or with
multiple labels results in a non-isomorphic $X$-tree that also displays $\cal Q$
(also, see Proposition 2.6 in \cite{semplesteel}).

The {\em partial partition intersection graph} of $\cal Q$, denoted by ${\rm
int}({\cal Q})$, is a graph whose vertex set is $\{ (A,\pi)~|~{\rm
where~}$A${\rm~is~a~cell~of~}\pi\in{\cal Q}\}$ and two vertices $(A,\pi)$,
$(A',\pi')$ are adjacent just if the intersection of $A$ and $A'$ is non-empty.

A graph is {\em chordal} if it contains no induced cycle of length four or more.
A {\em chordal completion} of a graph $G=(V,E)$ is a chordal graph $G'=(V,E')$
with $E\subseteq E'$.  A {\em restricted chordal completion} of ${\rm int}({\cal
Q})$ is a chordal completion $G'$ of ${\rm int}({\cal Q})$ with the property
that if $A_1$,$A_2$ are cells of $\pi\in{\cal Q}$, then $(A_1,\pi)$ is not
adjacent to $(A_2,\pi)$ in $G'$. A restricted chordal completion $G'$ of ${\rm
int}({\cal Q})$ is {\em minimal} if no proper subgraph of $G'$ is a restricted
chordal completion of ${\rm int}({\cal Q})$.

The problem of perfect phylogeny is equivalent to the problem of determining the
existence of an $X$-tree that display the given collection ${\cal Q}$ of partial
partitions. In \cite{buneman}, it was given the following graph-theoretical
characterization.

\begin{theorem}{\rm \cite{buneman,phyl-book,steelnphard}}\label{thm:buneman}
Let ${\cal Q}$ be a set of partial partitions of a set $X$. Then there exists an
$X$-tree that displays ${\cal Q}$ if and only if there exists a restricted
chordal completion of ${\rm int}({\cal Q})$.
\end{theorem}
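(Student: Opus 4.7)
The plan is to prove both implications by using the classical correspondence between chordal graphs and subtree intersection graphs (Gavril's theorem), adapted to keep track of the partition structure of $\cal Q$.

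For the forward implication, suppose an $X$-tree ${\cal T}=(T,\phi)$ displays $\cal Q$. For each partition $\pi\in{\cal Q}$, fix an edge set $F_\pi$ of $T$ witnessing that $\cal T$ displays $\pi$, so that the cells of $\pi$ are separated into distinct components of $T-F_\pi$. To each vertex $(A,\pi)$ of ${\rm int}({\cal Q})$ associate the subtree $T_{A,\pi}$ of $T$ induced by the component of $T-F_\pi$ containing $\phi(A)$. Let $G'$ be the graph on $V({\rm int}({\cal Q}))$ in which $(A,\pi)$ is adjacent to $(A',\pi')$ iff $V(T_{A,\pi})\cap V(T_{A',\pi'})\neq\emptyset$. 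I would then verify three things: (i) $G'$ is chordal, since it is the intersection graph of a family of subtrees of a tree; (ii) $G'$ extends ${\rm int}({\cal Q})$, because if $A\cap A'\neq\emptyset$ then any $x$ in the intersection gives $\phi(x)\in V(T_{A,\pi})\cap V(T_{A',\pi'})$; and (iii) the restricted condition holds, because for two cells $A_1,A_2$ of the same $\pi$ the corresponding subtrees live in different components of $T-F_\pi$ and are therefore disjoint.

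For the reverse implication, suppose $G'$ is a restricted chordal completion of ${\rm int}({\cal Q})$. I would take a clique tree $T'$ of $G'$, whose nodes are the maximal cliques of $G'$ and for which, for every vertex $v$ of $G'$, the set $T'_v$ of cliques containing $v$ is a subtree of $T'$. For each $x\in X$, the set of vertices $(A,\pi)$ of ${\rm int}({\cal Q})$ with $x\in A$ forms a clique in ${\rm int}({\cal Q})$ (any two share $x$), hence a clique in $G'$, and so is contained in some maximal clique $K_x$ of $G'$; pick one such $K_x$ as a node of $T'$ and define $\phi(x)=K_x$. After pruning or suppressing redundant nodes so that the resulting tree $T$ is an $X$-tree (every node of degree $\leq 2$ has non-empty preimage under $\phi$), I would show that $(T,\phi)$ displays each $\pi\in{\cal Q}$. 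For a fixed $\pi=A_1|\dots|A_t$, the subtrees $T'_{(A_i,\pi)}$ are pairwise vertex-disjoint in $T'$ by the restricted property, so one can choose an edge set $F_\pi$ of $T'$ (edges on the paths between these subtrees) whose removal puts the $T'_{(A_i,\pi)}$ into distinct components; the images $\phi(A_i)$ land inside these components by construction.

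The main obstacle I anticipate is the reverse direction: passing cleanly from the clique tree $T'$ of $G'$ to a genuine $X$-tree. One must handle nodes of $T'$ with no preimage and of low degree (they have to be suppressed without destroying the separation property for any $\pi$), and one must verify that the chosen $F_\pi$ actually separates the cells of $\pi$ and not accidentally more. The restricted condition is exactly what makes this possible: it guarantees that the subtrees associated with different cells of the same $\pi$ are disjoint in $T'$, so the separating edge set for $\pi$ can be found by standard tree arguments. With this in place, both implications fit together and Theorem~\ref{thm:buneman} follows.
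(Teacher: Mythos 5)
The paper does not prove Theorem~\ref{thm:buneman} at all --- it is quoted as a known result of Buneman and Steel, so there is no in-paper proof to compare against. Your argument is the standard (and correct) one: the forward direction realizes the completion as the intersection graph of subtrees of $T$ (components of $T-F_\pi$), and the reverse direction builds the $X$-tree from a clique tree of $G'$, with the restricted condition forcing the subtrees $T'_{(A_i,\pi)}$ of a common $\pi$ to be vertex-disjoint. The only points needing a little more care are exactly the ones you flag: when choosing $F_\pi$ in the reverse direction you must not cut through a third subtree $T'_{(A_k,\pi)}$ (taking $F_\pi$ to be \emph{all} edges of $T'$ lying in no $T'_{(A_i,\pi)}$ sidesteps this cleanly), and the suppression of unlabeled degree-$\leq 2$ nodes must be checked to preserve the display property, which it does since such nodes lie in no $\phi(A_i)$. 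With those routine verifications written out, the proof is complete.
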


Of course, the $X$-tree in the above theorem might not be unique. For the
problem of uniqueness, Semple and Steel \cite{semplesteel,phyl-book} describe
necessary and sufficient conditions for when a collection of partial partitions
defines an $X$-tree.

\begin{theorem}{\rm \cite{semplesteel}}\label{thm:semplesteel}
Let ${\cal Q}$ be a collection of partial partitions of a set $X$. Let $\cal T$ be a
ternary phylogenetic $X$-tree.  Then ${\cal Q}$ defines $\cal T$ if and only if:
\vspace{-1ex}
\begin{enumerate}[(i)]
\item $\cal T$ displays ${\cal Q}$ and  is distinguished by $\cal Q$, and
\item there is a unique minimal restricted chordal completion of ${\rm
int}({\cal Q})$.
\end{enumerate}
\end{theorem}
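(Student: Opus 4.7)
The plan is to prove the two directions separately, using Theorem \ref{thm:buneman} as a bridge between $X$-trees that display $\cal Q$ and restricted chordal completions of ${\rm int}({\cal Q})$. The underlying mantra is that clique-trees of chordal graphs correspond to $X$-trees, and that making the chordal completion as small as possible corresponds to making the $X$-tree as refined (i.e.\ as close to ternary) as possible.

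For the forward direction, assume $\cal Q$ defines $\cal T$. That $\cal T$ displays $\cal Q$ is immediate. For the distinguishing property in (i), I would argue contrapositively: if some internal edge $e$ of $T$ is not distinguished by any $\pi\in{\cal Q}$, then every $\pi$ that currently uses $e$ in its displaying edge set admits an alternative edge set avoiding $e$. Contracting $e$ then produces a degree-four vertex which can be reopened in any of the three possible ternary resolutions, each resulting in a ternary phylogenetic $X$-tree displaying $\cal Q$ but not isomorphic to $\cal T$ -- contradicting that $\cal Q$ defines $\cal T$. For (ii), let $G_1, G_2$ be two distinct minimal restricted chordal completions of ${\rm int}({\cal Q})$; applying the constructive direction of Theorem \ref{thm:buneman} to each produces $X$-trees ${\cal T}_1$, ${\cal T}_2$ displaying $\cal Q$, and the distinctness of the minimal completions together with the tight clique-tree correspondence forces ${\cal T}_1 \not\cong {\cal T}_2$, contradicting uniqueness.

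For the reverse direction, assume (i) and (ii). Given any $X$-tree ${\cal T}'$ displaying $\cal Q$, Theorem \ref{thm:buneman} assigns to it some restricted chordal completion of ${\rm int}({\cal Q})$, and by passing to a minimal subgraph we may assume it is a minimal such completion $G^*$; by (ii) this $G^*$ is unique. Applying the same construction to $\cal T$ yields a restricted chordal completion which, using the distinguishing condition (i) to rule out superfluous fill edges, is itself minimal, hence also equals $G^*$. The final step is to reconstruct the (ternary phylogenetic) $X$-tree from $G^*$: the minimal vertex separators of $G^*$ produce a canonical tree skeleton, and condition (i) forces the unique ternary resolution of this skeleton -- any alternative resolution would fail to distinguish some edge that must be distinguished by a $\pi\in{\cal Q}$. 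Performing this reconstruction from both $\cal T$ and ${\cal T}'$ must therefore produce the same ternary phylogenetic $X$-tree up to isomorphism.

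The main obstacle is to pin down precisely the correspondence ``ternary phylogenetic $X$-tree $\leftrightarrow$ minimal restricted chordal completion''; Buneman's theorem gives only the existential version, and upgrading it to a bijection between isomorphism classes on one side and minimal completions on the other requires both a canonical recipe for building a completion from a tree (so that refined trees give minimal completions) and a canonical recipe for recovering a tree from a completion (via clique-trees, which are not unique in general). I expect that reformulating the problem in terms of the chordal sandwich framework promised in the introduction (Theorems \ref{thm:sandwich} and \ref{thm:cell-inter}) will make this correspondence essentially automatic, since minimality in the sandwich setting corresponds to inclusion-minimality of chordal graphs between prescribed bounds and admits a direct translation to phylogenetic trees, sidestepping the ad hoc refinement arguments the direct approach requires.
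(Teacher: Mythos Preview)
The paper does not prove Theorem~\ref{thm:semplesteel}. It is quoted verbatim from Semple and Steel~\cite{semplesteel} (note the citation attached to the theorem header) and is used as a black box: the paper's own contribution is Theorem~\ref{thm:cell-inter}, and Theorem~\ref{thm:sandwich} is then obtained as an immediate corollary of \ref{thm:semplesteel} and \ref{thm:cell-inter} combined. So there is no ``paper's own proof'' to compare your proposal against; if you want the actual argument you must go to~\cite{semplesteel}.

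As for the sketch itself: the forward direction is essentially right in spirit, and your contraction/re-resolution argument for the distinguishing condition is the standard one. The real gap, which you correctly flag yourself, is the passage in both directions between ternary phylogenetic $X$-trees and \emph{minimal} restricted chordal completions. Buneman's Theorem~\ref{thm:buneman} is purely existential and does not give you a bijection; to upgrade it you need (a) that the completion associated to a ternary tree satisfying (i) is already minimal, and (b) that a minimal completion has a \emph{unique} clique-tree (up to the leaf-labelling), so that two trees mapping to the same minimal completion must be isomorphic. Neither of these is automatic---clique-trees of chordal graphs are not unique in general---and your proposal does not supply the missing argument. Your closing remark that the sandwich framework of Theorems~\ref{thm:cell-inter} and~\ref{thm:sandwich} would make this ``essentially automatic'' is backwards: in this paper Theorem~\ref{thm:sandwich} is \emph{derived from} Theorem~\ref{thm:semplesteel}, so invoking it here would be circular.
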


In order to simplify our construction, we now describe a variant of the
above theorem that, instead, deals with the notion of chordal sandwich.

Let $G_1=(V,E_1)$ and $G_2=(V,E_2)$ be two graphs on the same set of vertices
with $E_1\cap E_2=\emptyset$.  A {\em chordal sandwich}\footnote[4]{In this
formulation, $E_1$ are the {\em forced} edges and $E_2$ are the {\em forbidden}
edges. See \cite{sandwich} for further details on different ways of specifying
the input to this problem.} of $(G_1$,$G_2)$
is a chordal graph $G'=(V,E')$ with $E_1\subseteq E'$ and $E'\cap
E_2=\emptyset$. A chordal sandwich $G'$ of $(G_1$,$G_2)$ is {\em minimal} if no
proper subgraph of $G'$ is a chordal sandwich of $(G_1$,$G_2)$.

The {\em cell intersection graph} of ${\cal Q}$, denoted by ${\rm int^*}({\cal
Q})$, is the graph whose vertex set is $\{A~|~{\rm
where~}$A${\rm~is~a~cell~of~}\pi\in{\cal Q}\}$ and two vertices $A$, $A'$ are
adjacent just if the intersection of $A$ and $A'$ is non-empty.  Let ${\rm
forb}({\cal Q})$ denote the graph whose vertex set is that of ${\rm int^*}({\cal
Q})$ in which there is an edge between $A$ and $A'$ just if $A$,$A'$ are cells
of some $\pi\in{\cal Q}$.

The correspondence between the partial partition intersection graph and the cell
intersection graph is captured by the following theorem.

\begin{theorem}\label{thm:cell-inter}
Let ${\cal Q}$ be a collection of partial partitions of a set $X$.  Then there
is a one-to-one correspondence between the minimal restricted chordal
completions of ${\rm int}({\cal Q})$ and the minimal chordal sandwiches of
$({\rm int^*}({\cal Q}),{\rm forb}({\cal Q}))$.

\end{theorem}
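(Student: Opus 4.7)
My plan is to exhibit the one-to-one correspondence explicitly via a pair of mutually inverse maps, a \emph{lift} and a \emph{projection}. For each cell $A$ occurring in some $\pi\in{\cal Q}$, write $[A]=\{(A,\pi)~|~A \text{ is a cell of } \pi\}\subseteq V({\rm int}({\cal Q}))$; these classes partition $V({\rm int}({\cal Q}))$. Given a graph $H$ on the cells of ${\cal Q}$, the lift $\widehat H$ is the clique-blowup of $H$ in which each $[A]$ becomes a clique and $[A],[A']$ are completely joined iff $AA'\in E(H)$. Given a graph $G'$ on $V({\rm int}({\cal Q}))$ in which each $[A]$ is a clique and any two distinct classes are either completely joined or completely unjoined (I call such $G'$ \emph{saturated}), the projection is the natural quotient graph on the cells. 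My goal is to prove that lift and projection restrict to mutually inverse bijections between minimal chordal sandwiches of $({\rm int^*}({\cal Q}),{\rm forb}({\cal Q}))$ and minimal restricted chordal completions of ${\rm int}({\cal Q})$.

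The easy part is to verify that $\widehat H$ is a restricted chordal completion whenever $H$ is a chordal sandwich. Chordality of $\widehat H$ follows from the standard substitution lemma (clique-blowup preserves chordality in both directions). Moreover $\widehat H\supseteq{\rm int}({\cal Q})$ because $A\cap A'\neq\emptyset$ forces $AA'\in E({\rm int^*}({\cal Q}))\subseteq E(H)$, and $\widehat H$ contains no forbidden edge $(A_1,\pi)(A_2,\pi)$ because the latter forces $A_1A_2\in{\rm forb}({\cal Q})$ and hence $A_1A_2\notin E(H)$. The corresponding statement for the projection is immediate once its input is saturated.

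The crux, and the main obstacle, is the following key lemma: every minimal restricted chordal completion $G'$ of ${\rm int}({\cal Q})$ is saturated. Within-class cliqueness is automatic since all edges inside each $[A]$ belong to ${\rm int}({\cal Q})$. For uniform between-class adjacency, I argue by contradiction. Suppose $e=(A,\pi_A)(B,\pi_B)\in E(G')$ while $(A,\pi_A')(B,\pi_B)\notin E(G')$ for some twin $\pi_A'\neq\pi_A$. Any edge of ${\rm int}({\cal Q})$ between $[A]$ and $[B]$ forces all such edges (they share the witness $A\cap B\neq\emptyset$), so $e$ must be a fill edge with $A\cap B=\emptyset$. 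By minimality of $G'$, deleting $e$ creates a chordless cycle $C$ in $G'-e$ of length $\geq 4$ using $(A,\pi_A),(B,\pi_B)$ consecutively, and in $G'$ the only chord of $C$ is $e$. A short length argument on the two $e$-arcs forces $C$ to be a four-cycle $(A,\pi_A),v_1,(B,\pi_B),v_2,(A,\pi_A)$. The twin $(A,\pi_A')$ must lie off $C$ (else one of $v_1,v_2$ would equal $(A,\pi_A')$ and directly supply the missing edge). When $v_1,v_2$ both meet $A$, the twin relation forces $(A,\pi_A')v_1$ and $(A,\pi_A')v_2$ into $E(G')$ as original edges, exhibiting a chordless four-cycle $(A,\pi_A'),v_1,(B,\pi_B),v_2,(A,\pi_A')$ in $G'$ and contradicting chordality. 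The remaining subcases, where some $(A,\pi_A)v_i$ is itself a fill edge, are handled by iterating the argument on this newly exposed asymmetric fill edge, or by running the symmetric argument with the twins of $(B,\pi_B)$. The parallel assertion that $AB\in{\rm forb}({\cal Q})$ forces non-adjacency of $[A]$ and $[B]$ is the same argument with $\pi_A'$ replaced by a partition containing both $A$ and $B$ as cells.

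With saturation established, the bijection follows: for a minimal $G'$, saturation yields $G'=\widehat{G_0}$ where $G_0$ is its projection, and $G_0$ is a minimal sandwich since any proper sub-sandwich would lift to a proper restricted chordal sub-completion of $G'$; conversely, a minimal sandwich $H$ lifts to a minimal restricted chordal completion $\widehat H$, because any proper restricted chordal sub-completion of $\widehat H$ contains a minimal one, which by the key lemma is of the form $\widehat{H'}$ for some $H'\subsetneq H$, a proper sub-sandwich of $H$.
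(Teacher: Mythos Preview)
Your overall architecture matches the paper's: ${\rm int}({\cal Q})$ is the clique-blowup of ${\rm int^*}({\cal Q})$ along the classes $[A]$, and the whole theorem reduces to the \emph{saturation lemma}---that every minimal restricted chordal completion $G'$ is constant on pairs of classes. Where you diverge from the paper is in the proof of this lemma, and that is where the gap lies.

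Your argument for saturation runs as follows: take an asymmetric fill edge $e=(A,\pi_A)(B,\pi_B)$ with $(A,\pi'_A)(B,\pi_B)\notin E(G')$; minimality gives a $4$-cycle $(A,\pi_A),v_1,(B,\pi_B),v_2$; if both $(A,\pi_A)v_i$ happen to be original edges you are done, and otherwise you ``iterate'' on the new asymmetric fill edge $(A,\pi_A)v_i$. The problem is that this iteration has no termination mechanism. Each step produces a new asymmetric fill edge incident to $(A,\pi_A)$, a new $4$-cycle, and possibly yet another asymmetric fill edge; nothing decreases, and revisiting a vertex does not by itself yield a contradiction. The alternative you mention, ``running the symmetric argument with the twins of $(B,\pi_B)$'', does not help either, since $(B,\pi_B)$ may have no twin at all (there may be only one $\pi$ with $B$ as a cell), and even when it does you face the same non-terminating recursion. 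As written, the key lemma is not proved.

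The paper closes exactly this gap with a clean global argument rather than an edge-by-edge one. It first proves (Lemma~\ref{lem:sand2}) that for \emph{any} graph $G$ and any minimal chordal completion $G'$, the implication $N_G(u)\subseteq N_G(v)\Rightarrow N_{G'}(u)\subseteq N_{G'}(v)$ holds; the proof sets $B=N_{G'}(u)\setminus N_{G'}(v)$, shows via your $4$-cycle observation that no vertex of $B$ (nor of its $G'$-boundary inside $N_{G'}(v)$) is simplicial in the induced chordal subgraph on $B\cup(N_{G'}(u)\cap N_{G'}(v))\cup\{v\}$, and then invokes Dirac's theorem (a non-complete chordal graph has two non-adjacent simplicial vertices) to force $B=\emptyset$. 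Since any two vertices of the same class $[A]$ are true twins in ${\rm int}({\cal Q})$, this immediately gives saturation in one stroke and replaces your iteration entirely. Once you have that lemma, the rest of your write-up (lift, projection, and the minimality transfer in both directions) goes through.
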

(The proof of this theorem is presented as Section \ref{sec:sandwich}.)

This combined with Theorem \ref{thm:buneman} yields that there exists a
phylogenetic $X$-tree that displays ${\cal Q}$ if and only if there exists a
chordal sandwich of $({\rm int^*}({\cal Q}),{\rm forb}({\cal Q}))$.
Conversely, we can express every instance to the chordal sandwich problem
as a corresponding instance to the problem of perfect phylogeny as follows.

\begin{theorem}\label{thm:sand2split}
Let $(G_1,G_2)$ be an instance to the chordal sandwich problem.  Then there
is a collection ${\cal Q}$ of partial splits such that there is a one-to-one
correspondence between the minimal chordal sandwiches of $(G_1,G_2)$ and the
minimal restricted chordal completions of ${\rm int}({\cal Q})$.  In particular,
there exists a chordal sandwich for $(G_1,G_2)$ if and only if there exists a
phylogenetic tree that \mbox{displays}~${\cal Q}$.
\end{theorem}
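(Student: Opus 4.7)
The plan is to encode the chordal sandwich instance $(G_1, G_2)$ directly as a collection of partial splits. Let $V = V(G_1) = V(G_2)$ and introduce pairwise distinct elements $\{x_v : v \in V\}$, $\{x_e : e \in E_1\}$, and $\{y_v : v \in V\}$, comprising the ground set $X$. For each $v \in V$, set $A_v = \{x_v\} \cup \{x_e : e \in E_1, v \in e\}$, so that $A_u \cap A_v$ contains $x_{uv}$ precisely when $uv \in E_1$. Take ${\cal Q}$ to consist of the partial split $A_u \,|\, A_v$ for each $uv \in E_2$, together with an auxiliary split $A_v \,|\, \{y_v\}$ for each $v \in V$ (the latter ensures that every $A_v$ occurs as a cell of some $\pi \in {\cal Q}$). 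The hypothesis $E_1 \cap E_2 = \emptyset$ yields $A_u \cap A_v = \emptyset$ whenever $uv \in E_2$, so all these partial splits are well-defined.

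A direct inspection then shows that ${\rm int^*}({\cal Q})$ is the disjoint union of a copy of $G_1$ (on the cells $A_v$) with $|V|$ isolated singletons $\{y_v\}$, while ${\rm forb}({\cal Q})$ is the disjoint union of a copy of $G_2$ (on the cells $A_v$) with the matching $\{\, \{A_v, \{y_v\}\} : v \in V\,\}$. The central step, and the only real obstacle, is to show that the auxiliary $y$-vertices introduce no spurious minimal solutions: in any minimal chordal sandwich $G^*$ of $({\rm int^*}({\cal Q}), {\rm forb}({\cal Q}))$, each $\{y_v\}$ must be isolated. None of the edges at $\{y_v\}$ in $G^*$ is forced; and deleting all of them simultaneously preserves chordality, because any induced cycle of length at least four in the resulting graph avoids $\{y_v\}$ and is therefore also an induced cycle of $G^*$, contradicting chordality of $G^*$. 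Hence minimality forces each $\{y_v\}$ to be isolated, and the obvious restriction map (with inverse ``extend by isolated $y$-vertices'') provides a bijection between the minimal chordal sandwiches of $({\rm int^*}({\cal Q}), {\rm forb}({\cal Q}))$ and those of $(G_1, G_2)$.

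Composing this bijection with the correspondence provided by Theorem \ref{thm:cell-inter} yields the desired one-to-one correspondence between minimal chordal sandwiches of $(G_1, G_2)$ and minimal restricted chordal completions of ${\rm int}({\cal Q})$. The ``in particular'' claim then follows from Theorem \ref{thm:buneman}, since the existence of a restricted chordal completion of ${\rm int}({\cal Q})$ is equivalent to the existence of a phylogenetic $X$-tree displaying ${\cal Q}$.
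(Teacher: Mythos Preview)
Your proof is correct and follows the same overarching strategy as the paper: encode each vertex $v$ by a set whose pairwise intersections recover $E_1$-adjacency, use the $E_2$-edges to generate the splits, and then invoke Theorem~\ref{thm:cell-inter} and Theorem~\ref{thm:buneman}. The execution differs in two places. First, the paper takes the cell for $v$ to be $F_v = \{e \in E_1 : v \in e\}$ and, to guarantee that $F_u = F_v$ implies $u = v$, first reduces to the case where every component of $G_1$ has at least three vertices; you sidestep this reduction by adjoining the distinguishing element $x_v$ to each cell. Second, the paper only generates splits from $E_2$-edges and asserts that $V({\rm int^*}({\cal Q})) = \{F_v : v \in V\}$, which tacitly requires every vertex to have an $E_2$-neighbour; your auxiliary splits $A_v \,|\, \{y_v\}$ make this honest for every $v$, at the price of the extra singleton cells $\{y_v\}$. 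Your argument that these cells stay isolated in every minimal sandwich is sound: deleting all edges at a given $\{y_v\}$ leaves the induced subgraph on the remaining vertices unchanged, so no long induced cycle is created, and minimality then forbids any such edges. The restrict/extend bijection is then routine. In sum, your version is a bit more self-contained (no preliminary reduction on $G_1$) but needs the extra bookkeeping for the $y$-vertices; the paper's version is terser once its reduction is granted.
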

\begin{proof}
Without loss of generality, we may assume that each connected component of $G_1$
has at least three vertices. (We can safely remove any component with two or
less vertices without changing the number of minimal chordal completions, since
every such component is already chordal.)

As usual, $G_1=(V,E_1)$ and $G_2=(V,E_2)$ where $E_1\cap E_2=\emptyset$.  We
define the collection ${\cal Q}$ of partial splits (of the set $E_1$) as
follows: for every edge $xy\in E_2$, we construct the partial split $F_x|F_y$,
where $F_x$ are the edges of $E_1$ incident to $x$, and $F_y$ are the edges of
$E_1$ incident to $y$.  By definition, the vertex set of the graph ${\rm
int^*}({\cal Q})$ is precisly $\{F_v~|~v\in V\}$.  Further, it can be easily
seen that the mapping $\psi$ that, for each $v\in V$, maps $v$ to $F_v$ is an
isomorphism between $G_1$ and ${\rm int^*}({\cal Q})$.  (Here, one only needs to
verify that $F_u=F_v$ implies $u=v$; for this we use that each component of
$G_1$ has at least three vertices.) Moreover, ${\rm forb}({\cal Q})$ is
precisely $\{\psi(x)\psi(y)~|~xy\in E_2\}$ by definition.  Therefore, by
Theorem~\ref{thm:cell-inter}, there is a one-to-one correspondence between the
minimal chordal sandwiches of $(G_1,G_2)$ are the minimal restricted chordal
completions of ${\rm int}({\cal Q})$. This proves the first part of the claim;
the second part follows directly from Theorem~\ref{thm:buneman}.\end{proof}

As an immediate corollary, we obtain the following desired characterization.

\begin{theorem}\label{thm:sandwich}
Let ${\cal Q}$ be a collection of partial partitions of a set $X$. Let $\cal T$
be a ternary phylogenetic $X$-tree.  Then ${\cal Q}$ defines $\cal T$ if and
only if:
\vspace{-0.5ex}
\begin{enumerate}[(i)]
\item $\cal T$ displays ${\cal Q}$ and  is distinguished by $\cal Q$, and
\item there is a unique minimal chordal sandwich of $\Big({\rm int^*}({\cal Q})$,
${\rm forb}({\cal Q})\Big)$.
\end{enumerate}
\vspace{-2.5ex}
\end{theorem}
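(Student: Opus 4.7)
The plan is to derive Theorem~\ref{thm:sandwich} as a direct corollary of Theorems~\ref{thm:semplesteel} and~\ref{thm:cell-inter}, essentially by transporting the characterization of Semple and Steel across the bijection established in Theorem~\ref{thm:cell-inter}. Since condition~(i) is stated in identical terms in both Theorem~\ref{thm:semplesteel} and Theorem~\ref{thm:sandwich}, the only work is to show that condition~(ii) of the former is equivalent to condition~(ii) of the latter.

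I would proceed as follows. First, I would invoke Theorem~\ref{thm:semplesteel} to reduce the statement ``${\cal Q}$ defines $\cal T$'' to the conjunction of~(i) and the existence of a unique minimal restricted chordal completion of ${\rm int}({\cal Q})$. Next, I would apply Theorem~\ref{thm:cell-inter}, which provides a one-to-one correspondence between the minimal restricted chordal completions of ${\rm int}({\cal Q})$ and the minimal chordal sandwiches of $({\rm int^*}({\cal Q}),{\rm forb}({\cal Q}))$. Since any bijection between two finite families of objects preserves cardinality, the family of minimal restricted chordal completions is a singleton if and only if the family of minimal chordal sandwiches is a singleton. This yields the equivalence of the two versions of condition~(ii) and completes the proof.

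There is essentially no obstacle here beyond invoking the two theorems in the right order: the real content lies in Theorem~\ref{thm:cell-inter}, whose proof is deferred to Section~\ref{sec:sandwich}. The only minor point to be careful about is that one must rely on the fact that the correspondence in Theorem~\ref{thm:cell-inter} is a genuine bijection (not merely a surjection or a correspondence that could identify two distinct minimal completions with the same minimal sandwich), so that uniqueness is transferred faithfully in both directions. Given the phrasing ``one-to-one correspondence'' in Theorem~\ref{thm:cell-inter}, this is immediate, and the proof of Theorem~\ref{thm:sandwich} reduces to a single sentence of invocation.
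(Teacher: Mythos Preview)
Your proposal is correct and matches the paper's own treatment exactly: the paper presents Theorem~\ref{thm:sandwich} as an immediate corollary of Theorems~\ref{thm:semplesteel} and~\ref{thm:cell-inter} without further argument, and your write-up simply spells out that one-line deduction.
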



\section{Construction}\label{sec:constr}

Consider an instance $I$ to {\sc one-in-three-3sat}. That is, $I$ consists of
$n$ variables $v_1,\ldots,v_n$ and $m$ clauses ${\cal C}_1,\ldots, {\cal C}_m$
each of which is a disjunction of exactly three {\em literals} (i.e., variables
$v_i$ or their negations $\overline{v_i}$).

By standard arguments, we may assume that no variable appears twice in the same
clause, since otherwise we can replace the instance $I$ with an equivalent
instance with this property. In particular, we can replace each clause of the
form ${v_i}\vee \overline {v_i}\vee {v_j}$ by clauses ${v_i}\vee x\vee {v_j}$
and $\overline {v_i}\vee\overline x\vee {v_j}$ where $x$ is a new variable, and
replace each clause of the form ${v_i}\vee {v_i}\vee {v_j}$ by clauses
${v_i}\vee {v_j}\vee x$, ${v_i}\vee\overline {v_j}\vee\overline x$, and
$\overline {v_i}\vee\overline {v_j}\vee x$ where $x$ is again a new
variable.  Note that these two transformation preserve the number of satisfying
assignments, since in the former the new variable $x$ has always the truth value
of $\overline{v_i}$ while in the latter $x$ is always false in any satisfying
assignment of this modified~instance.

In what follows, we describe a collection ${\cal Q}_I$ of quartet trees arising
from the instance $I$, and prove the following theorem.  (We present the proof
as Section~\ref{sec:proof-one}.)

\begin{theorem}\label{thm:one}
There is a one-to-one correspondence between satisfying assignments of the
instance $I$ and minimal chordal sandwiches of $({\rm int^*}({\cal Q}_I),{\rm
forb}({\cal Q}_I))$.
\end{theorem}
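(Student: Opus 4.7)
The plan is to establish the bijection by analyzing the gadget structure of the (yet-to-be-described) collection $\mathcal{Q}_I$ and the combinatorics of triangulating the sandwich graph $({\rm int^*}(\mathcal{Q}_I),{\rm forb}(\mathcal{Q}_I))$. Since $\mathcal{Q}_I$ will be built from variable gadgets (one per $v_i$) and clause gadgets (one per $\mathcal{C}_j$), each contributing certain quartet splits, the vertices of ${\rm int^*}(\mathcal{Q}_I)$ partition into gadget-local vertex sets connected by shared elements. I would begin by giving a local description: for each variable gadget, identify a ``choice subgraph'' that is a (forced) long induced cycle (or a structure of such cycles) in ${\rm int^*}(\mathcal{Q}_I)$ whose only allowed triangulations into a chordal graph correspond to exactly two families of fill-in edges, called the \emph{true} and \emph{false} completions. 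For each clause gadget, identify a subgraph whose chordal sandwiches exist precisely when one of the three incident variable gadgets is set \emph{true} and the other two \emph{false}.

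For the forward direction, given a satisfying assignment $\sigma$ of $I$, I would define a candidate sandwich $G_\sigma$ by taking ${\rm int^*}(\mathcal{Q}_I)$ together with, for each variable gadget, the fill-in edges corresponding to $\sigma(v_i)$, and for each clause gadget, the unique fill-in pattern compatible with the one-true-literal configuration. I would then verify in turn: (a) no forbidden edge from ${\rm forb}(\mathcal{Q}_I)$ is used (this is where the \emph{exactly-one} condition enters); (b) the resulting graph is chordal, by checking each gadget locally and noting that gadget interfaces are small enough that no new induced cycles cross gadget boundaries; and (c) $G_\sigma$ is minimal, by exhibiting for every fill-in edge a four-or-more-cycle that becomes chordless upon its removal.

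For the backward direction, given a minimal chordal sandwich $G'$, I would argue locally in each variable gadget that the restriction of $G'$ must coincide with either the true or the false completion (using minimality to rule out strict supersets and chordality plus forbidden edges to rule out anything else), thereby extracting a truth assignment $\sigma_{G'}$. I would then show, using the clause gadget analysis, that $\sigma_{G'}$ must satisfy the exactly-one-in-three condition for each clause $\mathcal{C}_j$; otherwise some induced cycle of length $\ge 4$ cannot be triangulated without crossing into ${\rm forb}(\mathcal{Q}_I)$. Finally, the maps $\sigma\mapsto G_\sigma$ and $G'\mapsto \sigma_{G'}$ are mutually inverse by construction, yielding the bijection.

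The main obstacle will be the backward direction, specifically proving that minimality of $G'$ forces a \emph{globally consistent} true/false choice at each variable gadget, even when the variable appears in many clauses. The danger is that a minimal chordal sandwich might make different local choices at different occurrences of $v_i$ and still triangulate everything without forbidden edges; ruling this out requires designing the construction so that consistency is enforced by additional quartets connecting the occurrences, and then carrying out a careful case analysis of induced cycles that span a variable gadget together with its clause neighbourhoods. This step, rather than the forward direction or minimality verification, is where the bulk of the technical work will lie.
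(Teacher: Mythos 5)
Your architecture matches the paper's in outline: a canonical completion $G^*_\sigma$ per satisfying assignment, an extraction of an assignment from any sandwich, and clause gadgets whose triangulability enforces the exactly-one-in-three condition (the paper does this last step with 5- and 6-cycle arguments in Lemma~\ref{lem:sigma4}, much as you anticipate). You have also correctly located one crux: forcing a globally consistent choice at each variable. In the actual construction this is not done by a long induced cycle with two triangulations, but by a hub vertex $B=\{\mu,\delta\}$ adjacent to every variable and clause gadget; Lemma~\ref{lem:star}(a) shows with a single 4-cycle $\{S^j_{v_i},A_i,S^{j'}_{\overline{v_i}},B\}$ that one cannot simultaneously avoid making $B$ adjacent to the ``true side'' of two different occurrences of $v_i$, so the consistent choice is recorded once, at $B$, rather than reconciled occurrence by occurrence.

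The genuine gap is in how you plan to close the bijection. Proving that $G^*_\sigma$ is \emph{a} minimal sandwich by exhibiting, for each fill-in edge, a cycle that becomes chordless upon its removal (the criterion of Lemma~\ref{lem:sand1}) does not give you what the correspondence actually needs: that $G^*_\sigma$ is the \emph{unique} minimal sandwich compatible with $\sigma$, equivalently that every chordal sandwich whose $B$-adjacencies encode $\sigma$ must contain all of $G^*_\sigma$. Without this forcing statement, two distinct minimal sandwiches could induce the same assignment and the map from sandwiches to assignments would not be injective; moreover your backward step ``minimality rules out strict supersets'' is circular unless you first know $G^*_\sigma\subseteq G'$, since minimality of $G'$ only forbids proper sub-sandwiches of $G'$, not proper super-sandwiches of the gadget-local completion. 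The paper devotes its two longest lemmas (Lemmas~\ref{lem:sigma1} and~\ref{lem:sigma2}) to exactly this propagation argument, deriving every edge of $G^*_\sigma$ from chordality via Lemmas~\ref{lem:4cycle}--\ref{lem:6cycle}; it then proves chordality of $G^*_\sigma$ by writing down an explicit perfect elimination ordering (Lemma~\ref{lem:sigma3}) rather than by a local gadget-plus-interface argument, and obtains minimality and uniqueness simultaneously from the forcing. Your plan would need to be reorganized around such a forcing lemma for the bijection to go through.
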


To simplify the presentation, we shall denote literals by capital letters $X$,
$Y$, etc., and indicate their negations by $\overline X$, $\overline Y$, etc.
(For instance, if $X=v_i$ then $\overline X=\overline{v_i}$, and if
$X=\overline{v_i}$ then $\overline X=v_i$.)

A {\em truth assignment} for the instance $I$ is a mapping
$\sigma:\{v_1,\ldots,v_n\}\rightarrow \{0,1\}$ where 0 and 1 represent {\em
false} and {\em true}, respectively.  To simplify the notation, we write $v_i=0$
and $v_i=1$ in place of $\sigma(v_i)=0$ and $\sigma(v_i)=1$, respectively, and
extend this notation to literals $X$,$Y$, etc., i.e., write $X=0$ and $X=1$ in
place of $\sigma(X)=0$ and $\sigma(X)=1$, respectively.  A truth assignment
$\sigma$ is a {\em satisfying assignment for $I$} if in each clause ${\cal C}_j$
exactly one the three literals evalues to true.  That is, for each clause ${\cal
C}_j=X\vee Y\vee Z$, either $X=1$, $Y=0$, $Z=0$, or $X=0$, $Y=1$, $Z=0$, or
$X=0$, $Y=0$, $Z=1$.

For each $i\in\{1\ldots n\}$, we let $\Delta_i$ denote all indices $j$ such that
$v_i$ or $\overline {v_i}$ appears in the clause ${\cal C}_j$.
Let ${\cal X}_I$ be the set consisting of the following~elements:
\vspace{-1ex}

\begin{enumerate}[a)]
\item $\alpha_{v_i}$, $\alpha_{\overline{v_i}}$ for each $i\in \{1\ldots n\}$,
\vspace{0.5ex}
\item $\beta^j_{v_i}$, $\beta^j_{\overline{v_i}}$ for each $i\in \{1\ldots n\}$
and each $j\in \Delta_i$,
\item $\gamma^j_1$, $\gamma^j_2$, $\gamma^j_3$, $\lambda^j$ for each
$j\in\{1\ldots m\}$,\vspace{0.5ex}
\item $\delta$ and $\mu$.
\vspace{-0.5ex}
\end{enumerate}

Consider the following collection of 2-element subsets of ${\cal X}_I$:
\vspace{-1ex}

\begin{enumerate}[a)]

\item 
$B=\Big\{\mu,\delta\Big\}$,
\vspace{0.5ex}

\item for each $i\in\{1,\ldots,n\}$:

$H_{v_i}$=$\{\alpha_{v_i}$, $\delta\Big\}$, 
$H_{\overline{v_i}}$=$\{\alpha_{\overline{v_i}}$, $\delta\Big\}$,
$A_i=\Big\{\alpha_{v_i}, \alpha_{\overline{v_i}}\Big\}$,

$S^j_{v_i}=\Big\{\alpha_{v_i},\beta^j_{v_i}\Big\}$,
$S^j_{\overline{v_i}}=\Big\{\alpha_{\overline{v_i}}, \beta^j_{\overline{v_i}}\Big\}$
for all $j\in\Delta_i$

\item for each $j\in\{1\ldots m\}$ where $C_j=X\vee Y\vee Z$:

\begin{tabular}{@{}lll}
$K^j_{\overline X}=\Big\{\beta^j_X,\gamma^j_1\Big\}$,&
$K^j_{\overline Y}=\Big\{\beta^j_Y,\gamma^j_2\Big\}$,&
$K^j_{\overline Z}=\Big\{\beta^j_Z,\gamma^j_3\Big\}$,\\

$K^j_X=\Big\{\beta^j_{\overline {X}}, \lambda^j\Big\}$,&
$K^j_Y=\Big\{\beta^j_{\overline {Y}}, \lambda^j\Big\}$,&
$K^j_Z=\Big\{\beta^j_{\overline {Z}}, \lambda^j\Big\}$,\\

$L^j_X=\Big\{\beta^j_{\overline {X}},\gamma^j_2\Big\}$,&
$L^j_Y=\Big\{\beta^j_{\overline {Y}},\gamma^j_3\Big\}$,&
$L^j_Z=\Big\{\beta^j_{\overline {Z}},\gamma^j_1\Big\}$,\\

$D^j_1=\Big\{\gamma^j_1,\lambda^j\Big\}$,&
$D^j_2=\Big\{\gamma^j_2,\lambda^j\Big\}$,&
$D^j_3=\Big\{\gamma^j_3,\lambda^j\Big\}$,\quad
$F^j=\Big\{\lambda^j,\mu\Big\}$
\end{tabular}

\end{enumerate}

The collection ${\cal Q}_I$ of quartet trees is defined as follows:
\medskip

\lefteqn{{\cal Q}_I=\bigcup_{i\in\{1\ldots n\}}\Big\{A_i|B\Big\}\cup
\hspace{-0.3em}
\bigcup_{j\in\{1\ldots m\}}\hspace{-0.3em}\Big\{D^j_1|B, D^j_2|B, D^j_3|B\Big\}}
\vspace{0.3ex}

\lefteqn{\cup\bigcup_{\substack{i\in\{1\ldots n\}\\j,j'\in\Delta_i}}\Big\{
S^j_{v_i}|S^{j'}_{\overline{v_i}}\Big\}
\cup
\hspace{-1.7em}\bigcup_{\substack{i\in\{1\ldots n\}\\j,j'\in\Delta_i{\rm ~and~}j<j'}}
\hspace{-1.7em}
\Big\{S^j_{v_i}|K^{j'}_{\overline{v_i}}, S^j_{\overline
{v_i}}|K^{j'}_{v_i}\Big\}\cup
\hspace{-1.7em}
\bigcup_{\substack{i\in\{1\ldots n\}\\ j\in\Delta_i~{\rm and}~
j<j'\leq m}}\hspace{-1.7em}\Big\{K^j_{\overline{v_i}}|F^{j'}, K^j_{v_i}|F^{j'}\Big\}}
\vspace{0.5ex}
 
\lefteqn{\cup\bigcup_{\substack{1\leq i'<i\leq n\\
j\in\Delta_i}}\Big\{
H_{v_{i'}}|S^j_{v_i}, H_{\overline{v_{i'}}}|S^j_{v_i},
H_{v_{i'}}|S^j_{\overline{v_i}},
H_{\overline{v_{i'}}}|S^j_{\overline{v_i}}\Big\}\cup
\bigcup_{\substack{i\in\{1\ldots n\}\\j\in\{1\ldots m\}}}
\Big\{H_{\overline{v_i}}|F^j, H_{v_i}|F^j\Big\}}

\lefteqn{\cup\bigcup_{\substack{j\in\{1\ldots m\}\\{\rm where~} {\cal C}_j=X\vee Y\vee Z}}\left\{
\begin{minipage}{0.67\textwidth}
\begin{tabular}{@{}l@{~}l@{~}l@{~}l@{~}l@{}l}
$K^j_{\overline X}| K^j_X$,&$K^j_{\overline Y}| K^j_Y$,&$K^j_{\overline Z}|K^j_Z$,&
$K^j_{\overline X}| L^j_X$,&$K^j_{\overline Y}| L^j_Y$,&$K^j_{\overline
Z}|L^j_Z$\vspace{0.5ex}\\
$S^j_Y|K^j_X$,&$S^j_Z|K^j_Y$,&$S^j_X|K^j_Z$,&
$S^j_Z|L^j_X$,&$S^j_X|L^j_Y$,&$S^j_Y|L^j_Z$\end{tabular}
\end{minipage}\right\}}
\bigskip

\begin{figure}[t!]
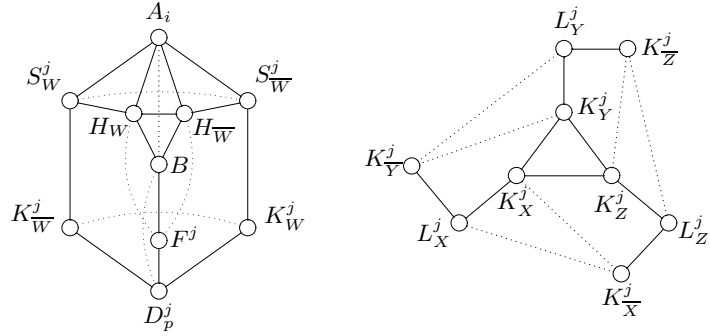

\centering
\small 
\raisebox{-2ex}{$\xy/r4pc/:
(-0.7,-0.5)*[o][F]{\phantom{S}}="k'";
(0.7,-0.5)*[o][F]{\phantom{S}}="k";
(-0.7,0.5)*[o][F]{\phantom{S}}="s";
(0.7,0.5)*[o][F]{\phantom{S}}="s'";
(0,-1)*[o][F]{\phantom{S}}="d";
(0,1)*[o][F]{\phantom{S}}="a";
(0,0)*[o][F]{\phantom{S}}="b";
(-0.2,0.4)*[o][F]{\phantom{S}}="h";
(0.2,0.4)*[o][F]{\phantom{S}}="h'";
(0,-0.6)*[o][F]{\phantom{S}}="f";
{\ar@{-} "k";"d"};
{\ar@{-} "k'";"d"};
{\ar@{-} "s";"a"};
{\ar@{-} "s'";"a"};
{\ar@{-} "s";"k'"};
{\ar@{-} "s'";"k"};
{\ar@{-} "s";"h"};
{\ar@{-} "h";"h'"};
{\ar@{-} "h'";"s'"};
{\ar@{-} "h'";"a"};
{\ar@{-} "h";"a"};
{\ar@{-} "h";"b"};
{\ar@{-} "h'";"b"};
{\ar@{-} "f";"b"};
{\ar@{-} "f";"d"};
{\ar@{.} "a";"b"};
{\ar@{.}@/_0.5pc/ "b";"d"};
{\ar@{.}@/^0.3pc/ "s";"s'"};
{\ar@{.}@/_0.5pc/ "k";"k'"};
{\ar@{.}@/_0.6pc/ "h";"f"};
{\ar@{.}@/^0.6pc/ "h'";"f"};
"a"+(0,0.2)*{A_i};
"d"+(0,-0.2)*{D^j_p};
"s"+(-0.2,0.2)*{S^j_W};
"s'"+(0.2,0.2)*{S^j_{\overline W}};
"h"+(-0.19,-0.1)*{H_W};
"h'"+(0.23,-0.13)*{H_{\overline W}};
"f"+(0.2,0.03)*{F^j};
"b"+(0.16,0.0)*{B};
"k"+(0.3,0.1)*{K^j_W};
"k'"+(-0.3,0.1)*{K^j_{\overline W}};
\endxy$}
\qquad
$\xy/r2pc/:
(-0.75,-0.5)*[o][F]{\phantom{S}}="x";
(-1.65,-1.25)*[o][F]{\phantom{S}}="lx";
(0,0.5)*[o][F]{\phantom{S}}="y";
(0,1.5)*[o][F]{\phantom{S}}="ly";
(0.75,-0.5)*[o][F]{\phantom{S}}="z";
(1.65,-1.25)*[o][F]{\phantom{S}}="lz";
(0.9,-2.05)*[o][F]{\phantom{S}}="x'";
(-2.4,-0.35)*[o][F]{\phantom{S}}="y'";
(1,1.5)*[o][F]{\phantom{S}}="z'";
{\ar@{-} "x";"lx"};
{\ar@{-} "lx";"y'"};
{\ar@{-} "y";"ly"};
{\ar@{-} "ly";"z'"};
{\ar@{-} "z";"lz"};
{\ar@{-} "lz";"x'"};
{\ar@{-} "x";"y"};
{\ar@{-} "y";"z"};
{\ar@{-} "z";"x"};
{\ar@{.} "x'";"x"};
{\ar@{.} "x'";"lx"};
{\ar@{.} "y'";"y"};
{\ar@{.} "y'";"ly"};
{\ar@{.} "z'";"z"};
{\ar@{.} "z'";"lz"};
"x'"+(0,-0.4)*{K^j_{\overline X}};
"y'"+(-0.45,0.1)*{K^j_{\overline Y}};
"z'"+(0.5,0)*{K^j_{\overline Z}};
"lx"+(-0.4,-0.15)*{L^j_X};
"x"+(0,-0.4)*{K^j_X};
"ly"+(0.1,0.45)*{L^j_Y};
"y"+(0.5,0.1)*{K^j_Y};
"lz"+(0.4,-0.1)*{L^j_Z};
"z"+(0,-0.45)*{K^j_Z};
\endxy$
\caption{Two configurations from of the graph ${\rm int^*}({\cal Q}_I)$.\label{fig:1}}
\end{figure}

Note that in each clause ${\cal C}_j=X\vee Y\vee Z$ there is a particular type
of symmetry between the literals $X$, $Y$, and $Z$.  In particular, if we
replace, in the above, the incices $X$, $Y$, $Z$ and 1, 2, 3 as follows: $X
\rightarrow Y \rightarrow Z\rightarrow X$ and $1\rightarrow 2 \rightarrow
3\rightarrow 1$, we obtain precisely the same definition of ${\cal Q}_I$ as the
above.  We shall refer to this as the {\em rotational symmetry} between $X$,
$Y$, $Z$.

\section{Unique trees}\label{sec:unique-trees}
Let $T_I$ be the tree defined as follows: (for illustration, see Figures \ref{fig:3b} and
\ref{fig:3a})\vspace{1ex}

\noindent $V(T_I)=\Big\{y_0,y_1,y'_1,\ldots,y_n,y'_n\Big\}
\cup\Big\{a_1,a'_1,\ldots,a_n,a'_n\Big\}
\cup\Big\{u_0,u_1,\ldots,u_m\Big\}\\
\mbox{}\hspace{4em}\cup\Big\{x^j_1,x^j_2,x^j_3,x^j_4,x^j_5,x^j_6,
b^j_1, b^j_2, b^j_3, g^j_1, g^j_2, g^j_3, \ell^j
\Big\}_{j=1}^m\cup
\Big\{c^j_i, z^j_i~|~j\in \Delta_i\Big\}_{i=1}^n$

\noindent $E(T_I)=\Big\{y_1y'_1, y_2y'_2,\ldots,y_ny'_n\Big\} \cup$
$\Big\{a_1y'_1, a_2y'_2,\ldots a_ny'_n\Big\}$
$\cup\Big\{c^j_i z^j_i~|~j\in\Delta_i\Big\}_{i=1}^n$\\
$\cup\Big\{y_0y_1, y_1y_2, y_2y_3, \ldots, y_{n-1}y_n\Big\}\cup\Big\{y_nu_1,
u_1u_2, u_2u_3, \ldots, u_{m-1}u_m, u_mu_0\Big\}$\\
$\cup\Big\{u_jx^j_1,x^j_1x^j_2,x^j_2x^j_3,x^j_2x^j_4,x^j_4x^j_5,x^j_4x^j_6,
b^j_1 x^j_6, b^j_2 x^j_3, b^j_3 x^j_5, g^j_1 x^j_6, g^j_2 x^j_1, g^j_3 x^j_3,
\ell^j x^j_5 \Big\}_{j=1}^m$\\
$\cup\Big\{a'_iz_i^{j_1}, z_i^{j_1}z_i^{j_2}, \ldots, z_i^{j_{t-1}}z_i^{j_t},
z_i^{j_t}y'_i~|~\mbox{where $j_1<j_2<\ldots<j_t$ are elements of
$\Delta_i$}\Big\}_{i=1}^n$

\vspace{2ex}

Let $\sigma$ be a satisfying assignment for the instance $I$, and let
$\phi_\sigma$ be the mapping of ${\cal X}_I$ to $V(T_I)$ defined as
follows:

\begin{figure}[b!]
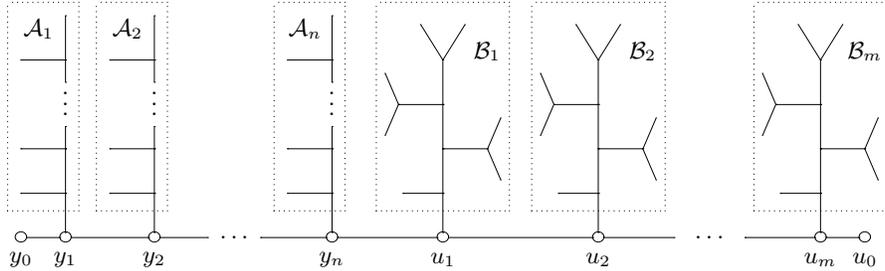

\mbox{}\hfill
$\xy/r1.4pc/:
(0,0)*[o][F]{\phantom{s}}="delta";
(1,0)*[o][F]{\phantom{s}}="y1";
(3,0)*[o][F]{\phantom{s}}="y2";
(7,0)*[o][F]{\phantom{s}}="yn";
(9.5,0)*[o][F]{\phantom{s}}="u1";
(13,0)*[o][F]{\phantom{s}}="u2";
(18,0)*[o][F]{\phantom{s}}="um";
(19,0)*[o][F]{\phantom{s}}="mu";
(4.8,0)*{\ldots};
(15.5,0)*{\ldots};
"u1"+(0,-0.5)*{u_1};
"u2"+(0,-0.5)*{u_2};
"um"+(0,-0.5)*{u_m};
"y1"+(0,-0.5)*{y_1};
"y2"+(0,-0.5)*{y_2};
"yn"+(0,-0.5)*{y_n};
"delta"+(0,-0.5)*{y_0};
"mu"+(0,-0.5)*{u_0};
{\ar@{-} "delta";"y1"};
{\ar@{-} "y1";"y2"};
{\ar@{-} "y2";"y2"+(1.2,0)};
{\ar@{-} "yn";"yn"-(1.6,0)};
{\ar@{-} "yn";"u1"};
{\ar@{-} "u1";"u2"};
{\ar@{-} "u2";"u2"+(1.7,0)};
{\ar@{-} "um";"um"-(1.7,0)};
{\ar@{-} "um";"mu"};
{\ar@{-} "y1";"y1"+(0,1)};
{\ar@{-} "y1"+(0,1);"y1"+(-1,1)};
{\ar@{-} "y1"+(0,2);"y1"+(-1,2)};
{\ar@{-} "y1"+(0,4);"y1"+(-1,4)};
{\ar@{-} "y1"+(0,4);"y1"+(0,5)};
{\ar@{-} "y1"+(0,1);"y1"+(0,2)};
{\ar@{-} "y1"+(0,2);"y1"+(0,2.5)};
{\ar@{-} "y1"+(0,3.5);"y1"+(0,4)};
"y1"+(0,3.2)*{\vdots};
{\ar@{-} "y2";"y2"+(0,1)};
{\ar@{-} "y2"+(0,1);"y2"+(-1,1)};
{\ar@{-} "y2"+(0,2);"y2"+(-1,2)};
{\ar@{-} "y2"+(0,4);"y2"+(-1,4)};
{\ar@{-} "y2"+(0,4);"y2"+(0,5)};
{\ar@{-} "y2"+(0,1);"y2"+(0,2)};
{\ar@{-} "y2"+(0,2);"y2"+(0,2.5)};
{\ar@{-} "y2"+(0,3.5);"y2"+(0,4)};
"y2"+(0,3.2)*{\vdots};
{\ar@{-} "yn";"yn"+(0,1)};
{\ar@{-} "yn"+(0,1);"yn"+(-1,1)};
{\ar@{-} "yn"+(0,2);"yn"+(-1,2)};
{\ar@{-} "yn"+(0,4);"yn"+(-1,4)};
{\ar@{-} "yn"+(0,4);"yn"+(0,5)};
{\ar@{-} "yn"+(0,1);"yn"+(0,2)};
{\ar@{-} "yn"+(0,2);"yn"+(0,2.5)};
{\ar@{-} "yn"+(0,3.5);"yn"+(0,4)};
"yn"+(0,3.2)*{\vdots};
{\ar@{-} "u1";"u1"+(0,1)};
{\ar@{-} "u1"+(0,1);"u1"+(0,2)};
{\ar@{-} "u1"+(0,1);"u1"+(-0.9,1)};
{\ar@{-} "u1"+(0,2);"u1"+(1,2)};
{\ar@{-} "u1"+(1,2);"u1"+(1.3,1.3)};
{\ar@{-} "u1"+(1,2);"u1"+(1.3,2.7)};
{\ar@{-} "u1"+(0,2);"u1"+(0,3)};
{\ar@{-} "u1"+(0,3);"u1"+(-1,3)};
{\ar@{-} "u1"+(-1,3);"u1"+(-1.3,3.7)};
{\ar@{-} "u1"+(-1,3);"u1"+(-1.3,2.3)};
{\ar@{-} "u1"+(0,3);"u1"+(0,4)};
{\ar@{-} "u1"+(-0.5,4.8);"u1"+(0,4)};
{\ar@{-} "u1"+(0.5,4.8);"u1"+(0,4)};
{\ar@{-} "u2";"u2"+(0,1)};
{\ar@{-} "u2"+(0,1);"u2"+(0,2)};
{\ar@{-} "u2"+(0,1);"u2"+(-0.9,1)};
{\ar@{-} "u2"+(0,2);"u2"+(1,2)};
{\ar@{-} "u2"+(1,2);"u2"+(1.3,1.3)};
{\ar@{-} "u2"+(1,2);"u2"+(1.3,2.7)};
{\ar@{-} "u2"+(0,2);"u2"+(0,3)};
{\ar@{-} "u2"+(0,3);"u2"+(-1,3)};
{\ar@{-} "u2"+(-1,3);"u2"+(-1.3,3.7)};
{\ar@{-} "u2"+(-1,3);"u2"+(-1.3,2.3)};
{\ar@{-} "u2"+(0,3);"u2"+(0,4)};
{\ar@{-} "u2"+(-0.5,4.8);"u2"+(0,4)};
{\ar@{-} "u2"+(0.5,4.8);"u2"+(0,4)};
{\ar@{-} "um";"um"+(0,1)};
{\ar@{-} "um"+(0,1);"um"+(0,2)};
{\ar@{-} "um"+(0,1);"um"+(-0.9,1)};
{\ar@{-} "um"+(0,2);"um"+(1,2)};
{\ar@{-} "um"+(1,2);"um"+(1.3,1.3)};
{\ar@{-} "um"+(1,2);"um"+(1.3,2.7)};
{\ar@{-} "um"+(0,2);"um"+(0,3)};
{\ar@{-} "um"+(0,3);"um"+(-1,3)};
{\ar@{-} "um"+(-1,3);"um"+(-1.3,3.7)};
{\ar@{-} "um"+(-1,3);"um"+(-1.3,2.3)};
{\ar@{-} "um"+(0,3);"um"+(0,4)};
{\ar@{-} "um"+(-0.5,4.8);"um"+(0,4)};
{\ar@{-} "um"+(0.5,4.8);"um"+(0,4)};
{\ar@{.} "y1"+(0.3,0.6);"y1"+(-1.3,0.6)};
{\ar@{.} "y1"+(-1.3,5.3);"y1"+(-1.3,0.6)};
{\ar@{.} "y1"+(-1.3,5.3);"y1"+(0.3,5.3)};
{\ar@{.} "y1"+(0.3,5.3);"y1"+(0.3,0.6)};
"y1"+(-0.6,4.7)*{{\cal A}_1};
{\ar@{.} "y2"+(0.3,0.6);"y2"+(-1.3,0.6)};
{\ar@{.} "y2"+(-1.3,5.3);"y2"+(-1.3,0.6)};
{\ar@{.} "y2"+(-1.3,5.3);"y2"+(0.3,5.3)};
{\ar@{.} "y2"+(0.3,5.3);"y2"+(0.3,0.6)};
"y2"+(-0.6,4.7)*{{\cal A}_2};
{\ar@{.} "yn"+(0.3,0.6);"yn"+(-1.3,0.6)};
{\ar@{.} "yn"+(-1.3,5.3);"yn"+(-1.3,0.6)};
{\ar@{.} "yn"+(-1.3,5.3);"yn"+(0.3,5.3)};
{\ar@{.} "yn"+(0.3,5.3);"yn"+(0.3,0.6)};
"yn"+(-0.6,4.7)*{{\cal A}_n};
{\ar@{.} "u1"+(1.5,0.6);"u1"+(-1.5,0.6)};
{\ar@{.} "u1"+(-1.5,5.3);"u1"+(-1.5,0.6)};
{\ar@{.} "u1"+(-1.5,5.3);"u1"+(1.5,5.3)};
{\ar@{.} "u1"+(1.5,5.3);"u1"+(1.5,0.6)};
"u1"+(1.0,4.2)*{{\cal B}_1};
{\ar@{.} "u2"+(1.5,0.6);"u2"+(-1.5,0.6)};
{\ar@{.} "u2"+(-1.5,5.3);"u2"+(-1.5,0.6)};
{\ar@{.} "u2"+(-1.5,5.3);"u2"+(1.5,5.3)};
{\ar@{.} "u2"+(1.5,5.3);"u2"+(1.5,0.6)};
"u2"+(1.0,4.2)*{{\cal B}_2};
{\ar@{.} "um"+(1.5,0.6);"um"+(-1.5,0.6)};
{\ar@{.} "um"+(-1.5,5.3);"um"+(-1.5,0.6)};
{\ar@{.} "um"+(-1.5,5.3);"um"+(1.5,5.3)};
{\ar@{.} "um"+(1.5,5.3);"um"+(1.5,0.6)};
"um"+(1.0,4.2)*{{\cal B}_m};
\endxy$
\hfill\mbox{}
\caption{The tree $T_I$.\label{fig:3b}}
\end{figure}

\begin{enumerate}[a)]
\item for each $i\in\{1\ldots n\}$:

if $v_i=1$, then $\phi_\sigma(\alpha_{v_i})=a_i$,
$\phi_\sigma(\alpha_{\overline{v_i}})=a'_i$, and
$\phi_\sigma(\beta^j_{\overline{v_i}})=c^j_i$ for all $j\in\Delta_i$,

if $v_i=0$, then $\phi_\sigma(\alpha_{\overline{v_i}})=a_i$,
$\phi_\sigma(\alpha_{v_i})=a'_i$, and
$\phi_\sigma(\beta^j_{v_i})=c^j_i$ for all $j\in\Delta_i$,

\item for each $j\in\{1\ldots m\}$ where ${\cal C}_j=X\vee Y\vee Z$:

if $X=1$, then 
$\phi_\sigma(\beta^j_X)=b^j_1$,
$\phi_\sigma(\beta^j_{\overline Y})=b^j_2$,
$\phi_\sigma(\beta^j_{\overline Z})=b^j_3$,\\
\mbox{}\hspace{6.5em}$\phi_\sigma(\gamma^j_1)=g^j_1$,~
$\phi_\sigma(\gamma^j_2)=g^j_2$,~
$\phi_\sigma(\gamma^j_3)=g^j_3$,~~
$\phi_\sigma(\lambda^j)=\ell_j$,

if $Y=1$, then 
$\phi_\sigma(\beta^j_Y)=b^j_1$,
$\phi_\sigma(\beta^j_{\overline Z})=b^j_2$,
$\phi_\sigma(\beta^j_{\overline X})=b^j_3$,\\
\mbox{}\hspace{6.5em}$\phi_\sigma(\gamma^j_2)=g^j_1$,~
$\phi_\sigma(\gamma^j_3)=g^j_2$,~
$\phi_\sigma(\gamma^j_1)=g^j_3$,~~
$\phi_\sigma(\lambda^j)=\ell_j$,

if $Z=1$, then 
$\phi_\sigma(\beta^j_Z)=b^j_1$,
$\phi_\sigma(\beta^j_{\overline X})=b^j_2$,
$\phi_\sigma(\beta^j_{\overline Y})=b^j_3$,\\
\mbox{}\hspace{6.5em}$\phi_\sigma(\gamma^j_3)=g^j_1$,~
$\phi_\sigma(\gamma^j_1)=g^j_2$,~
$\phi_\sigma(\gamma^j_2)=g^j_3$,~~
$\phi_\sigma(\lambda^j)=\ell_j$,
\item $\phi_\sigma(\delta)=y_0$ and $\phi_\sigma(\mu)=u_0$.
\end{enumerate}

\begin{figure}[t!]
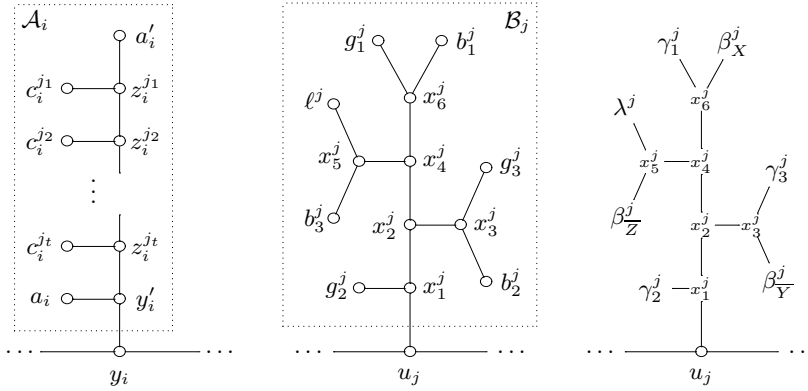

\mbox{}\hfill
$\xy/r1.66pc/:
(1,0)*[o][F]{\phantom{s}}="y1";
(1,1)*[o][F]{\phantom{s}}="y'1";
(0,1)*[o][F]{\phantom{s}}="a1";
(1,6)*[o][F]{\phantom{s}}="a'1";
(1,5)*[o][F]{\phantom{s}}="z11";
(1,4)*[o][F]{\phantom{s}}="z12";
(1,2)*[o][F]{\phantom{s}}="z1t";
(0,5)*[o][F]{\phantom{s}}="c11";
(0.5,3.2)*{\vdots};
(0,4)*[o][F]{\phantom{s}}="c12";
(0,2)*[o][F]{\phantom{s}}="c1t";
(-0.9,0)*{\ldots};
(2.9,0)*{\ldots};
{\ar@{-} "y1";"y'1"};
{\ar@{-} "y'1";"a1"};
{\ar@{-} "y'1";"z1t"};
{\ar@{-} "z12";"z11"};
{\ar@{-} "z12";"z12"-(0,0.6)};
{\ar@{-} "z1t";"z1t"+(0,0.6)};
{\ar@{-} "z11";"a'1"};
{\ar@{-} "z11";"c11"};
{\ar@{-} "z12";"c12"};
{\ar@{-} "z1t";"c1t"};
{\ar@{-} "y1";"y1"+(1.5,0)};
{\ar@{-} "y1";"y1"-(1.5,0)};
"a1"+(-0.5,0)*{a_i};
"c11"+(-0.5,0)*{c^{j_1}_i};
"c12"+(-0.5,0)*{c^{j_2}_i};
"c1t"+(-0.5,0)*{c^{j_t}_i};
"z11"+(0.5,0)*{z^{j_1}_i};
"z12"+(0.5,0)*{z^{j_2}_i};
"z1t"+(0.5,0)*{z^{j_t}_i};
"y'1"+(0.5,0)*{y'_i};
"y1"+(0,-0.5)*{y_i};
"a'1"+(0.5,0)*{a'_i};
{\ar@{.} "y1"+(1,0.4);"y1"+(-2,0.4)};
{\ar@{.} "y1"+(-2,6.6);"y1"+(-2,0.4)};
{\ar@{.} "y1"+(-2,6.6);"y1"+(1,6.6)};
{\ar@{.} "y1"+(1,6.6);"y1"+(1,0.4)};
"y1"+(-1.6,6.3)*{{\cal A}_i};
\endxy
\qquad
\xy/r2pc/:
(0.8,0)*[o][F]{\phantom{s}}="u1";
(0.8,1)*[o][F]{\phantom{s}}="x11";
(0.8,2)*[o][F]{\phantom{s}}="x12";
(1.6,2)*[o][F]{\phantom{s}}="x13";
(0.8,3)*[o][F]{\phantom{s}}="x14";
(0,3)*[o][F]{\phantom{s}}="x15";
(0.8,4)*[o][F]{\phantom{s}}="x16";
(0,1)*[o][F]{\phantom{s}}="g12";
(-0.4,2.1)*[o][F]{\phantom{s}}="b13";
(2.0,1.1)*[o][F]{\phantom{s}}="b12";
(2.0,2.9)*[o][F]{\phantom{s}}="g13";
(1.3,4.9)*[o][F]{\phantom{s}}="b11";
(-0.4,3.9)*[o][F]{\phantom{s}}="l1";
(0.3,4.9)*[o][F]{\phantom{s}}="g11";
{\ar@{-} "u1";"x11"};
{\ar@{-} "x11";"x12"};
{\ar@{-} "x12";"x13"};
{\ar@{-} "x12";"x14"};
{\ar@{-} "x15";"x14"};
{\ar@{-} "x16";"x14"};
{\ar@{-} "x11";"g12"};
{\ar@{-} "x13";"g13"};
{\ar@{-} "x13";"b12"};
{\ar@{-} "x15";"b13"};
{\ar@{-} "x15";"l1"};
{\ar@{-} "x16";"b11"};
{\ar@{-} "x16";"g11"};
{\ar@{-} "u1";"u1"+(1.2,0)};
{\ar@{-} "u1";"u1"+(-1.2,0)};
(-0.8,0)*{\ldots};
(2.4,0)*{\ldots};
"u1"+(0,-0.4)*{u_j};
"x11"+(0.4,0.05)*{x^j_1};
"x12"+(-0.4,0)*{x^j_2};
"x13"+(0.4,0.05)*{x^j_3};
"x14"+(0.4,0.05)*{x^j_4};
"x15"+(-0.45,0.1)*{x^j_5};
"x16"+(0.4,0.05)*{x^j_6};
"g11"+(-0.35,0.05)*{g^j_1};
"g12"+(-0.35,0.05)*{g^j_2};
"g13"+(0.4,0.1)*{g^j_3};
"b11"+(0.4,0)*{b^j_1};
"b12"+(0.4,0)*{b^j_2};
"b13"+(-0.3,0)*{b^j_3};
"l1"+(-0.3,0.1)*{\ell^j};
{\ar@{.} "u1"+(2,0.4);"u1"+(-2,0.4)};
{\ar@{.} "u1"+(-2,5.5);"u1"+(-2,0.4)};
{\ar@{.} "u1"+(-2,5.5);"u1"+(2,5.5)};
{\ar@{.} "u1"+(2,5.5);"u1"+(2,0.4)};
"u1"+(1.7,5.2)*{{\cal B}_j};
\endxy
\qquad
\xy/r2pc/:
(0.8,0)*[o][F]{\phantom{s}}="u1";
(0.8,1)*+[o]{\phantom{s}}="x11";
(0.8,2)*+[o]{\phantom{s}}="x12";
(1.6,2)*+[o]{\phantom{s}}="x13";
(0.8,3)*+[o]{\phantom{s}}="x14";
(0,3)*+[o]{\phantom{s}}="x15";
(0.8,4)*+[o]{\phantom{s}}="x16";
(0,1)*++[o]{\phantom{s}}="g12";
(-0.4,2.1)*++[o]{\phantom{s}}="b13";
(2.0,1.1)*++[o]{\phantom{s}}="b12";
(2.0,2.9)*++[o]{\phantom{s}}="g13";
(1.3,4.9)*++[o]{\phantom{s}}="b11";
(-0.4,3.9)*++[o]{\phantom{s}}="l1";
(0.3,4.9)*++[o]{\phantom{s}}="g11";
{\ar@{-} "u1";"x11"};
{\ar@{-} "x11";"x12"};
{\ar@{-} "x12";"x13"};
{\ar@{-} "x12";"x14"};
{\ar@{-} "x15";"x14"};
{\ar@{-} "x16";"x14"};
{\ar@{-} "x11";"g12"};
{\ar@{-} "x13";"g13"};
{\ar@{-} "x13";"b12"};
{\ar@{-} "x15";"b13"};
{\ar@{-} "x15";"l1"};
{\ar@{-} "x16";"b11"};
{\ar@{-} "x16";"g11"};
{\ar@{-} "u1";"u1"+(1.2,0)};
{\ar@{-} "u1";"u1"+(-1.2,0)};
(-0.8,0)*{\ldots};
(2.4,0)*{\ldots};
"u1"+(0,-0.4)*{u_j};
"x11"*{_{x^j_1}};
"x12"*{_{x^j_2}};
"x13"*{_{x^j_3}};
"x14"*{_{x^j_4}};
"x15"*{_{x^j_5}};
"x16"*{_{x^j_6}};
"g11"*{\gamma^j_1};
"g12"*{\gamma^j_2};
"g13"*{\gamma^j_3};
"b11"*{\beta^j_X};
"b12"*{\beta^j_{\overline Y}};
"b13"*{\beta^j_{\overline Z}};
"l1"*{\lambda^j};
\endxy$
\hfill\mbox{}
\caption{{\em a)} the subtree ${\cal A}_i$ for the variable
$v_i$, {\em b)} the subtree ${\cal B}_j$ for the clause ${\cal C}_j$, {\em c)}
the subtree for ${\cal C}_j=X\vee Y\vee Z$ and assignment
$\sigma(X)=1$, $\sigma(Y)=\sigma(Z)=0$\label{fig:3a}}
\end{figure}

\begin{theorem}\label{thm:unique-trees}
If $\sigma$ is a satisfying assignment for $I$, then
${\cal T}_\sigma=(T_I,\phi_\sigma)$ is a ternary phylogenetic ${\cal X}_I$-tree
that displays ${\cal Q}_I$ and is distinguished by ${\cal Q}_I$.
\end{theorem}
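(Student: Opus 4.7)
The plan is to verify three things in turn: (a) that $(T_I,\phi_\sigma)$ is a ternary phylogenetic ${\cal X}_I$-tree; (b) that it displays every quartet in ${\cal Q}_I$; and (c) that every internal edge of $T_I$ is distinguished by some quartet in ${\cal Q}_I$. Throughout, the rotational symmetry among the literals $X,Y,Z$ of each clause reduces the clause-local case analysis by a factor of three, and the symmetry $\alpha_{v_i}\leftrightarrow\alpha_{\overline{v_i}}$ (together with the corresponding relabelling of the $\beta^j$'s and of $a_i,a'_i$) collapses the cases $\sigma(v_i)=0$ and $\sigma(v_i)=1$. Part (a) is by direct inspection: every vertex of $T_I$ other than $y_0,u_0,a_i,a'_i,c^j_i,b^j_p,g^j_p,\ell^j$ has degree exactly three, and the count $|{\cal X}_I|=2+2n+10m$ equals the number of leaves. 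A block-by-block check then shows $\phi_\sigma$ is a bijection onto the leaves; the only subtle point is that $\beta^j_L$ lands at $c^j_i$ exactly when $\sigma(L)=0$ and at some $b^j_p$ otherwise, and the \textit{exactly one true literal per clause} property of $\sigma$ guarantees consistency between the clause-mapping and the variable-mapping on each $\beta^j_L$.

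For part (b), I would go family by family through the definition of ${\cal Q}_I$ and exhibit, for each quartet, a single edge whose removal separates the two cells in $T_I$. The relevant cuts are $y_iy'_i$ (isolating ${\cal A}_i$) for $A_i|B$; an edge on the $x^j_1$-to-$\ell^j$ path for $D^j_p|B$; $y'_iz^{j_t}_i$ (splitting ${\cal A}_i$ between $a_i$ and $\{a'_i\}\cup\{c^k_i\}_k$) for $S^j_{v_i}|S^{j'}_{\overline{v_i}}$; a specific $z$-chain edge inside ${\cal A}_i$ for $S^j|K^{j'}$ (the constraint $j<j'$ in the definition is engineered to match the chain order from $a'_i$ upward toward $y'_i$); a backbone edge $u_{j'-1}u_{j'}$ or $y_nu_1$ for $K^j|F^{j'}$; a backbone edge $y_iy_{i+1}$ for $H_{v_{i'}}|S^j_{v_i}$ with $i'<i$; and various internal edges of ${\cal B}_j$ for the twelve clause-internal quartets. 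Each check is a short case split on $\sigma(v_i)$ and on the unique true literal of ${\cal C}_j$.

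For part (c)---which I expect to be the main substantive step---the approach is to pair each internal edge $e$ of $T_I$ with a quartet whose \emph{middle path} (the path in $T_I$ between the two Steiner paths of the cells) consists of exactly the single edge $e$; this is the necessary and sufficient condition for a quartet split to distinguish $e$. The pairings are: $y_iy'_i\leftrightarrow A_i|B$; the top chain edge $y'_iz^{j_t}_i\leftrightarrow$ an $S^{j'}_{v_i}|S^{j_t}_{\overline{v_i}}$ quartet; each inner chain edge $z^{j_k}_iz^{j_{k+1}}_i\leftrightarrow$ an $S^{j_k}|K^{j_{k+1}}$ quartet (with the signs chosen by $\sigma(v_i)$, using that consecutive $j_k<j_{k+1}$ in $\Delta_i$ correspond to consecutive $z$'s in the chain); each backbone edge $y_ky_{k+1}\leftrightarrow H_{v_k}|S^j_L$ for the literal $L\in\{v_{k+1},\overline{v_{k+1}}\}$ with $\phi_\sigma(\beta^j_L)\in{\cal B}_j$; $y_nu_1$ and each $u_ju_{j+1}\leftrightarrow$ an $H|F$ or $K|F$ quartet; $u_jx^j_1\leftrightarrow D^j_p|B$ for the unique $p$ with $\phi_\sigma(\gamma^j_p)=g^j_2$ (the $\gamma$-leaf attached directly to $x^j_1$); $x^j_6x^j_4\leftrightarrow K^j_{\overline L}|K^j_L$ for the true literal $L$ of ${\cal C}_j$; $x^j_4x^j_5\leftrightarrow S^j_L|K^j_W$ for appropriate $L,W$ (with $L$ the true literal); and the remaining three internal edges of each ${\cal B}_j$ are pinned down analogously by the remaining clause-internal $K|K$, $K|L$, $S|K$, $S|L$ quartets via the rotational symmetry. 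The main obstacle will be the sheer amount of bookkeeping, since each pairing depends on $\sigma$ through the placement of $\beta^j_L$ and $\gamma^j_p$ and there are many sub-cases; but the construction has been engineered so that every internal edge of $T_I$ is the unique middle-path edge of at least one quartet in ${\cal Q}_I$, making the argument essentially a systematic verification enabled by the clause-level rotational symmetry.
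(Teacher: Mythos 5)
Your proposal is correct and follows essentially the same route as the paper's proof: the same three-stage verification (bijectivity of $\phi_\sigma$ and ternarity, a family-by-family exhibition of separating edges for part (b), and the same edge-to-quartet pairings for part (c), including $y_iy'_i\leftrightarrow A_i|B$, $u_jx^j_1\leftrightarrow D^j_p|B$ with $\phi_\sigma(\gamma^j_p)=g^j_2$, the $z$-chain edges matched to $S|K$ and $S|S$ quartets, and the clause-gadget edges matched to the $K|K$, $K|L$, $S|K$, $S|L$ quartets). Your explicit ``middle path of length one'' criterion for distinguishing an edge is a clean and valid formalization of the argument the paper carries out pointwise.
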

\begin{proof}
Let $\sigma$ be a satisfying assignment for $I$, i.e., for each clause ${\cal
C}_j=X\vee Y\vee Z$, either $X=1$, $Y=Z=0$, or $Y=1$, $X=Z=0$, or $Z=1$,
$X=Y=0$. For each $i\in\{1\ldots n\}$, let ${\cal A}_i=\{a_i$,\,$a'_i$,\,$y'_i$,
$z^{j_1}_i$,\,\ldots, $z^{j_t}_i$,$c^{j_1}_i$,\,\ldots, $c^{j_t}_i\}$ where
$\Delta_i=\{j_1,\ldots,j_t\}$, and for each $j\in\{1\ldots m\}$, let ${\cal
B}_j=\{x^j_1$,\,$x^j_2$,\,$x^j_3$,\,$x^j_4$,\,$x^j_5$,\,$x^j_6$,
$g^j_1$,\,$g^j_2$,\,$g^j_3$,\,$b^j_1$,\,$b^j_2$,\,$b^j_3$,\,$\ell^j\}$.

\newpage
\mbox{}\vskip -6ex

It is not difficult to see that $\phi_\sigma$ defines a bijection between the
elements of ${\cal X}_I$ and the leaves of $T_I$. For instance, for each
$i\in\{1\ldots n\}$, we note that $\{\phi(\alpha_{v_i})$,
$\phi(\alpha_{\overline{v_i}})\}=\{a_i$, $a'_i\}$, and for each $j\in\Delta_i$,
either $\phi_\sigma(\beta^j_{v_i})=c^j_i$ and
$\phi_\sigma(\beta^j_{\overline{v_i}})\in\{b^j_1,b^j_2,b^j_3\}$, or
$\phi_\sigma(\beta^j_{\overline{v_i}})=c^j_i$ and
$\phi_\sigma(\beta^j_{v_i})\in\{b^j_1,b^j_2,b^j_3\}$.  Also, for each
$j\in\{1\ldots m\}$, we have $\phi_\sigma(\lambda^j)=\ell^j$, and
$\{\phi_\sigma(\gamma^j_1), \phi_\sigma(\gamma^j_2),
\phi_\sigma(\gamma^j_3)\}=\{g^j_1,g^j_2,g^j_3\}$.
Further, it can be readily verified that $T_I$ is a ternary tree.  Thus, ${\cal
T}_\sigma=(T_I,\phi_\sigma)$ is indeed a ternary phylogenetic ${\cal X}_I$-tree.
First, we show that it displays ${\cal Q}_I$.

Consider $A_i|B$ for $i\in\{1\ldots n\}$. Recall that
$A_i=\{\alpha_{v_i},\alpha_{\overline{v_i}}\}$, $B=\{\delta,\mu\}$, and that
$\{\phi_\sigma(\alpha_{v_i}),\phi_\sigma(\alpha_{\overline{v_i}})\}=\{a_i,a'_i\}$.
Also, $\phi_\sigma(\delta)=y_0$ and $\phi_\sigma(\mu)=u_0$.  Observe that
$a_i,a'_i\in{\cal A}_i$. Hence, both $a_i, a_i'$ are in one connected component
of $T_I-y_iy'_i$ whereas $y_0,u_0$ are in another component. Thus, ${\cal
T}_\sigma$ indeed displays $A_i|B$.

Next, consider $D^j_p|B$ for $j\in\{1\ldots m\}$ and $p\in\{1\ldots 3\}$.
Recall that $D^j_p=\{\gamma^j_p,\lambda^j\}$, and
$\phi_\sigma(\gamma^j_p)\in{\cal B}_j$, $\phi_\sigma(\lambda^j)\in{\cal B}_j$.
Also, $B=\{\delta,\mu\}$ and $\phi_\sigma(\delta)=y_0$, $\phi_\sigma(\mu)=u_0$.
Thus both $\phi_\sigma(\gamma^j_p)$, $\phi_\sigma(\lambda^j)$ are in one
component of $T_I-u_jx^j_1$ whereas $y_0,u_0$ are in another component. This shows
that ${\cal T}_\sigma$ displays $D^j_p|B$.

Now, we look at $S^j_{v_i}|S^{j'}_{\overline{v_i}}$ where $i\in\{1\ldots n\}$
and $j,j'\in\Delta_i$. Recall that $S^j_{v_i}=\{\alpha_{v_i},\beta^j_{v_i}\}$
and $S^{j'}_{\overline{v_i}}=\{\alpha_{\overline{v_i}},
\beta^{j'}_{\overline{v_i}}\}$.  By symmetry, we may assume that $v_i=1$. Then
$\phi_\sigma(\alpha_{v_i})=a_i$, $\phi_\sigma(\alpha_{\overline{v_i}})=a'_i$,
$\phi_\sigma(\beta^j_{v_i})\in{\cal B}_j$, and
$\phi_\sigma(\beta^{j'}_{\overline{v_i}})=c^{j'}_i$.  Let $j_t$ denote the
largest element in $\Delta_i$. Then, both $a'_i$,$c^{j'}_i$ are in one component
of $T_I-y'_iz^{j_t}_i$ whereas $a_i$ and $\phi_\sigma(\beta^j_{v_i})$ are in a
different component. Thus, ${\cal T}_\sigma$ displays
$S^j_{v_i}|S^{j'}_{\overline{v_i}}$.

Next, consider $S^j_{v_i}|K^{j'}_{\overline{v_i}}$ and
$S^j_{\overline{v_i}}|K^{j'}_{v_i}$ for $i\in\{1\ldots n\}$ and
$j,j'\in\Delta_i$ where $j<j'$.  Recall that $K^{j'}_{\overline{v_i}}\subseteq
\{\beta^{j'}_{v_i},\gamma^{j'}_1,\gamma^{j'}_2,\gamma^{j'}_3,\lambda^{j'}\}$,
$K^{j'}_{v_i}\subseteq \{\beta^{j'}_{\overline{v_i}}, \gamma^{j'}_1,
\gamma^{j'}_2, \gamma^{j'}_3, \lambda^{j'}\}$,
$S^j_{v_i}=\{\alpha_{v_i},\beta^j_{v_i}\}$  and
$S^j_{\overline{v_i}}=\{\alpha_{\overline{v_i}},\beta^j_{v_i}\}$.  Again, by
symmetry, we assume $v_i=1$. So, $\phi_\sigma(\alpha_{v_i})=a_i$,
$\phi_\sigma(\alpha_{\overline{v_i}})=a'_i$,
$\phi_\sigma(\beta^j_{\overline{v_i}})=c^j_i$,
$\phi_\sigma(\beta^{j'}_{\overline{v_i}})=c^{j'}_i$,
$\phi_\sigma(\beta^j_{v_i})\in{\cal B}_j$, and
$\{\phi_\sigma(\beta^{j'}_{v_i}),\phi_\sigma(\gamma^{j'}_1),
\phi_\sigma(\gamma^{j'}_2), \phi_\sigma(\gamma^{j'}_3),
\phi_\sigma(\lambda^{j'})\}\subseteq {\cal B}_{j'}$.  Let $j_1<j_2<\ldots<j_t$
be the elements of $\Delta_i$. Since $j\in\Delta_i$, let $k$ be such that
$j=j_k$. We conclude $k<t$, since $j<j'$ and $j'\in\Delta_i$.  Thus, the
elements of $\phi_\sigma(S^j_{\overline{v_i}})$ and $\phi_\sigma(K^{j'}_{v_i})$,
respectively are in different components of $T_I-z^{j_k}_iz^{j_{k+1}}_i$. Further,
observe that~$\phi_\sigma(K^{j'}_{\overline{v_i}})\subseteq{\cal B}_{j'}$, and
since $j\neq j'$, the elements of $\phi_\sigma(S^j_{v_i})$ and
$\phi_\sigma(K^{j'}_{\overline{v_i}})$ are in different components of
$T_I-u_{j'}x^{j'}_1$. This proves that ${\cal T}_\sigma$ displays both
$S^j_{v_i}|K^{j'}_{\overline{v_i}}$ and $S^j_{\overline{v_i}}|K^{j'}_{v_i}$.

Now, consider $K^j_{\overline{v_i}}|F^{j'}$ and $K^j_{v_i}|F^{j'}$ for
$i\in\{1\ldots n\}$ and $j<j'$ where $j\in\Delta_i$.  Again, recall that
$K^j_{\overline{v_i}}\subseteq
\{\beta^j_{v_i},\gamma^j_1,\gamma^j_2,\gamma^j_3,\lambda^j\}$,
$K^j_{v_i}\subseteq \{\beta^j_{\overline{v_i}}, \gamma^j_1, \gamma^j_2,
\gamma^j_3, \lambda^j\}$, and that $F^{j'}=\{\lambda^{j'},\mu\}$.  So,
$\phi_\sigma(K^j_{\overline{v_i}})\cup \phi_\sigma(K^j_{v_i})\subseteq {\cal
A}_i\cup {\cal B}_j$ whereas $\phi_\sigma(F^{j'})\subseteq {\cal
B}_{j'}\cup\{u_0\}$.  Since $j<j'\leq m$, we conclude that
$\phi_\sigma(K^j_{\overline{v_i}})\cup \phi_\sigma(K^j_{v_i})$ and
$\phi_\sigma(F^{j'})$ are in different components of $T_I-u_ju_{j+1}$.  Thus
${\cal T}_\sigma$ displays both $K^j_{\overline{v_i}}|F^{j'}$ and
$K^j_{v_i}|F^{j'}$.

Next, we consider $H_{v_{i'}}|S^j_{v_i}$, $H_{\overline{v_{i'}}}|S^j_{v_i}$,
$H_{v_{i'}}|S^j_{\overline{v_i}}$, and
$H_{\overline{v_{i'}}}|S^j_{\overline{v_i}}$ for $1\leq i'<i\leq n$ and
$j\in\Delta_i$.  Recall that $H_{v_{i'}}=\{\alpha_{v_{i'}},\delta\}$,
$H_{\overline{v_{i'}}}=\{\alpha_{\overline{v_{i'}}},\delta\}$,
$S^j_{v_i}=\{\alpha_{v_i},\beta^j_{v_i}\}$, and
$S^j_{\overline{v_i}}=\{\alpha_{\overline{v_i}},\beta^j_{\overline{v_i}}\}$.
So, $\phi_\sigma(S^j_{v_i})\cup\phi_\sigma(S^j_{\overline{v_i}})\subseteq {\cal
A}_i\cup {\cal B}_j$ whereas
$\phi_\sigma(H_{v_{i'}})\cup\phi_\sigma(H_{\overline{v_{i'}}})\subseteq {\cal
A}_{i'}\cup\{\delta\}$. Thus, since $i'<i\leq n$, we conclude that
$\phi_\sigma(S^j_{v_i})\cup\phi_\sigma(S^j_{\overline{v_i}})$ and
$\phi_\sigma(H_{v_{i'}})\cup\phi_\sigma(H_{\overline{v_{i'}}})$ are in different
components of $T_I-y_{i'}y_{i'+1}$. This proves that ${\cal T}_\sigma$ displays
all the four quartet trees $H_{v_{i'}}|S^j_{v_i}$,
$H_{\overline{v_{i'}}}|S^j_{v_i}$, $H_{v_{i'}}|S^j_{\overline{v_i}}$ and
$H_{\overline{v_{i'}}}|S^j_{\overline{v_i}}$.

Similarly, we consider $H_{\overline{v_i}}|F^j$ and $H_{v_i}|F^j$ for
$i\in\{1\ldots n\}$ and $j\in\{1\ldots m\}$. Recall that
$H_{v_i}=\{\alpha_{v_i},\delta\}$,
$H_{\overline{v_i}}=\{\alpha_{\overline{v_i}},\delta\}$, and
$F^j=\{\lambda^j,\mu\}$. Hence, it follows that
$\{\phi_\sigma(H_{\overline{v_i}})\cup \phi_\sigma(H_{v_i})\}\subseteq {\cal
A}_i\cup\{\delta\}$ and $\phi_\sigma(F^j)\subseteq {\cal B}_j\cup\{\mu\}$. Thus,
we conclude that $\phi_\sigma(H_{\overline{v_i}})\cup\phi_\sigma(H_{v_i})$ and
$\phi_\sigma(F^j)$ are in different components of $T_I-y_nu_1$. This proves that
${\cal T}_\sigma$ displays both $H_{\overline{v_i}}|F^j$ and $H_{v_i}|F^j$.

Finally, we consider the clause ${\cal C}_j=X\vee Y\vee Z$ for $j\in\{1\ldots
m\}$. Since $\sigma$ is a satisfying assignment, and by the rotational symmetry
between $X$, $Y$, and $Z$, we may assume that $X=1$, $Y=0$, and $Z=0$.  Let
$i_X$ be the index such that $X=v_{i_X}$ or $X=\overline{v_{i_X}}$, let $i_Y$ be
such that $Y=v_{i_Y}$ or $Y=\overline{v_{i_Y}}$, and let $i_Z$ be such that
$Z=v_{i_Z}$ or $Z=\overline{v_{i_Z}}$. Note that $i_X$, $i_Y$, $i_Z$ are
all distinct, since we assume that no variable appears more than once in each
clause. Thus we have that $\phi_\sigma(\beta^j_X)=b^j_1$,
$\phi_\sigma(\beta^j_{\overline Y})=b^j_2$, $\phi_\sigma(\beta^j_{\overline
Z})=b^j_3$, $\phi_\sigma(\gamma^j_1)=g^j_1$, $\phi_\sigma(\gamma^j_2)=g^j_2$,
$\phi_\sigma(\gamma^j_3)=g^j_3$, and \mbox{$\phi_\sigma(\lambda^j)=\ell_j$}.
(See Figure~\ref{fig:3a}c.) Also, 
$\{\phi_\sigma(\alpha_X),\phi_\sigma(\alpha_{\overline
X}),\phi_\sigma(\beta^j_{\overline X})\}\subseteq {\cal A}_{i_X}$,
$\{\phi_\sigma(\alpha_Y),\phi_\sigma(\alpha_{\overline
Y}),\phi_\sigma(\beta^j_Y)\}\subseteq{\cal A}_{i_Y}$, and
$\{\phi_\sigma(\alpha_Z),\phi_\sigma(\alpha_{\overline
Z}),\phi_\sigma(\beta^j_Z) \}\subseteq {\cal A}_{i_Z}$.
First, consider $K^j_{\overline X}|K^j_X$ and $K^j_{\overline X}|L^j_X$.  Recall
that $K^j_{\overline X}=\{\beta^j_X,\gamma^j_1\}$, $K^j_X=\{\beta^j_{\overline
X},\lambda^j\}$, and $L^j_X=\{\beta^j_{\overline X},\gamma^j_2\}$.  Also, recall
that $\phi_\sigma(\beta^j_{\overline X})\in{\cal A}_{i_X}$.  Thus it follows
that $\phi_\sigma(K^j_X)\cup\phi_\sigma(L^j_X)$ and $\phi_\sigma(K^j_{\overline
X})$ are in different components of $T_I-x^j_4 x^j_6$.
Now, consider $K^j_{\overline Y}|K^j_Y$ and $K^j_{\overline Y}|L^j_Y$. Recall
that $K^j_{\overline Y}=\{\beta^j_Y,\gamma^j_2\}$, $K^j_Y=\{\beta^j_{\overline
Y},\lambda^j\}$, and $L^j_Y=\{\beta^j_{\overline Y},\gamma^j_3\}$ where
$\phi_\sigma(\beta^j_Y)\in{\cal A}_{i_Y}$.  Thus,
$\phi_\sigma(K^j_Y)\cup\phi_\sigma(L^j_Y)$ and $\phi_\sigma(K^j_{\overline Y})$
are in different components of $T_I-x^j_1 x^j_2$.
Similarly, consider $K^j_{\overline Z}|K^j_Z$ and $K^j_{\overline Z}|L^j_Z$.
Recall that $K^j_{\overline Z}=\{\beta^j_Z,\gamma^j_3\}$,
$K^j_Z=\{\beta^j_{\overline Z},\lambda^j\}$, and $L^j_Z=\{\beta^j_{\overline
Z},\gamma^j_1\}$ where $\phi_\sigma(\beta^j_Z)\in{\cal A}_{i_Z}$.  Thus,
$\phi_\sigma(K^j_Z)\cup\phi_\sigma(L^j_Z)$ and $\phi_\sigma(K^j_{\overline Z})$
are in different components of $T_I-x^j_2 x^j_4$.
Now, consider $S^j_Y|K^j_X$ and $S^j_Y|L^j_Z$. Recall that
$S^j_Y=\{\alpha_Y,\beta^j_Y\}$, $K^j_X=\{\beta^j_{\overline X},\lambda^j\}$ and
$L^j_Z=\{\beta^j_{\overline Z},\gamma^j_1\}$.  Also,
$\{\phi_\sigma(\alpha_Y),\phi_\sigma(\beta^j_Y)\}\subseteq{\cal A}_{i_Y}$
whereas $\phi_\sigma(\beta^j_X)\in{\cal A}_{i_X}$.  Thus, since $i_X\neq i_Y$,
we conclude that $\phi_\sigma(S^j_Y)$ and $\phi_\sigma(K^j_X)\cup
\phi_\sigma(L^j_Z)$  are in different components of $T_I-y_{i_Y} y'_{i_Y}$.
Similarly, we consider $S^j_Z|K^j_Y$ and $S^j_Z|L^j_X$.  Recall that
$S^j_Z=\{\alpha_Z,\beta^j_Z\}$, $K^j_Y=\{\beta^j_{\overline Y},\lambda^j\}$, and
$L^j_X=\{\beta^j_{\overline X},\gamma^j_2\}$.  Also,
$\{\phi_\sigma(\alpha_Z),\phi_\sigma(\beta^j_Z)\}\subseteq{\cal A}_{i_Z}$, and
$\phi_\sigma(\beta^j_{\overline X})\in{\cal A}_{i_X}$.  Thus, since $i_X\neq
i_Z$, we conclude that $\phi_\sigma(S^j_Z)$ and
$\phi_\sigma(K^j_Y)\cup\phi_\sigma(L^j_X)$ are in different components of
$T_I-y_{i_Z} y'_{i_Z}$.
Finally, consider $S^j_X|K^j_Z$ and $S^j_X|L^j_Y$. Recall that
$S^j_X=\{\alpha_X,\beta^j_X\}$, $K^j_Z=\{\beta^j_{\overline Z},\lambda^j\}$ and
$L^j_Y=\{\beta^j_{\overline Y},\gamma^j_3\}$ where
$\phi_\sigma(\alpha_X)\in{\cal A}_{i_X}$. Thus, $\phi_\sigma(S^j_X)$ and
$\phi_\sigma(K^j_Z)$ are in different components of $T_I-x^j_4 x^j_5$, whereas
$\phi_\sigma(S^j_X)$ and $\phi_\sigma(L^j_Y)$ are in different components of
$T_I-x^j_2 x^j_3$.
\medskip

This proves that ${\cal T}_\sigma$ displays ${\cal Q}_I$.  It remains to prove
that ${\cal T}_\sigma$ is distinguished by ${\cal Q}_I$.  First, consider the
edge $y_iy'_i$ for $i\in\{1\ldots n\}$.  Recall that
$A_i=\{\alpha_{v_i},\alpha_{\overline{v_i}}\}$ and $B=\{\delta,\mu\}$. By
definition, we have $\phi_\sigma(A_i)=\{a_i,a'_i\}$ and
$\phi_\sigma(B)=\{y_0,u_0\}$. Note that every connected subgraph of $T_I$ that
contains both $y_0$ and $u_0$ must also contain $y_i$, since it lies on the path
between $u_0$ and $y_0$ in $T_I$. Likewise, every connected subgraph of $T_I$
that contains $a_i,a_i'$ also contains $y'_i$. Thus, this shows that the edge
$y_iy'_i$ is distinguished by $A_i|B$ which is in ${\cal Q}_I$.  We similarly
consider the edge $u_j x^j_1$ for $j\in\{1\ldots m\}$. By the definition of
$\phi_\sigma$, we observe that there exists $p\in\{1,2,3\}$ such that
$\phi_\sigma(\gamma^j_p)=g^j_2$.  We recall that $B=\{\delta,\mu\}$ and
$D^j_p=\{\gamma^j_p,\lambda^j\}$.  Thus, $\phi_\sigma(B)=\{y_0,u_0\}$ and
$\phi_\sigma(D^j_p)=\{g^j_2,\ell^j\}$. Since $g^2_j$ is adjacent to $x^j_1$, and
$u_j$ lies on the path between $y_0$ and $u_0$, it follows that the edge
$u_jx^j_1$ is distinguished by $D^j_p | B$ which is in ${\cal Q}_I$.

Now, consider $i\in\{1\ldots n\}$, and let $j_1<j_2<\ldots<j_t$ be the elements
of $\Delta_i$.  Let $W\in\{v_i,\overline{v_i}\}$ be such that $W=1$.  Then we
have $\phi_\sigma(\alpha_{W})=a_i$, $\phi_\sigma(\alpha_{\overline{W}})=a'_i$,
and $\phi_\sigma(\beta^j_{\overline{W}})=c^j_i$ for all $j\in\Delta_i$.  Recall
that $S^j_{\overline{W}}=\{\alpha_{\overline{W}},\beta^j_{\overline{W}}\}$ and
$K^j_{W}\subseteq\{\beta^j_{\overline{W}}, \gamma^j_1, \gamma^j_2, \gamma^j_3,
\lambda^j\}$  where $\{\phi_\sigma(\gamma^j_1), \phi_\sigma(\gamma^j_2),
\phi_\sigma(\gamma^j_3), \phi_\sigma(\lambda^j)\} \subseteq{\cal B}_j$ for all
$j\in\Delta_i$.  Thus, for each $k\in\{1\ldots t-1\}$, it follows that
$\phi_\sigma(\beta^{j_k}_{\overline{W}})$ is adjacent to $z^{j_k}_i$ whereas
$\phi_\sigma(\beta^{j_{k+1}}_{\overline{W}})$ is adjacent to $z^{j_{k+1}}_i$.
This proves that the edge $z^{j_k}_iz^{j_{k+1}}_i$ is distinguished by
$S^{j_k}_{\overline{W}}|K^{j_{k+1}}_{W}$. Similarly, recall that
$S^j_{W}=\{\alpha_{W},\beta^j_{W}\}$ where $\phi_\sigma(\beta^j_{W})\in{\cal B}_j$ and
$\phi_\sigma(\alpha_{W})$ is adjacent to $y'_i$. Thus, the edge $z^{j_t}_i y'_i$
is distinguished by $S^{j_t}_{W} | S^{j_t}_{\overline{W}}$.  Further, if $i\geq
2$, then we recall that $H_{v_{i-1}}=\{\alpha_{v_{i-1}},\delta\}$ where
$\phi_\sigma(\alpha_{v_{i-1}})\in{\cal A}_{i-1}$ and $\phi_\sigma(\delta)=y_0$.
Thus $y_{i-1}y_i$ is distinguished by $H_{v_{i-1}}|S^{j_t}_W$.

Now, consider $j\in\{1,\ldots m\}$ where ${\cal C}_j=X\vee Y\vee Z$.  By the
rotational symmetry, we may assume that $X=1$ and $Y=Z=0$.  Thus
\mbox{$\phi_\sigma(\beta^j_X)=b^j_1$}, \mbox{$\phi_\sigma(\beta^j_{\overline
Y})=b^j_2$}, $\phi_\sigma(\beta^j_{\overline Z})=b^j_3$,
$\phi_\sigma(\gamma^j_1)=g^j_1$, $\phi_\sigma(\gamma^j_2)=g^j_2$,
$\phi_\sigma(\gamma^j_3)=g^j_3$, and $\phi_\sigma(\lambda^j)=\ell_j$.  (Again
see Figure~\ref{fig:3a}c.) Recall that $K^j_Y=\{\beta^j_{\overline
Y},\lambda^j\}$ and $K^j_{\overline Y}=\{\beta^j_Y,\gamma^j_2\}$ where
$\phi_\sigma(\beta^j_Y)\not\in{\cal B}_j$.  This shows that the edge $x^j_1
x^j_2$ is distinguished by $K^j_{\overline Y}|K^j_Y$. Recall that
$S^j_X=\{\alpha_X,\beta^j_X\}$, $L^j_Y=\{\beta^j_{\overline Y},\gamma^j_3\}$,
and $K^j_Z=\{\beta^j_{\overline Z},\lambda^j\}$ where
$\phi_\sigma(\alpha_X)\not\in{\cal B}_j$.  Thus, the edge $x^j_2x^j_3$ is
distiguished by $S^j_X | L^j_Y$ whereas the edge $x^j_4 x^j_5$ is distinguished
by $S^j_X | K^j_Z$.  Recall that $K^j_{\overline Z}=\{\beta^j_Z,\gamma^j_3\}$
and $L^j_Z=\{ \beta^j_{\overline Z},\gamma^j_1\}$ where
$\phi_\sigma(\beta^j_Z)\not\in{\cal B}_j$.  Thus, the edge $x^j_2 x^j_4$ is
distinguished by $K^j_{\overline Z}|L^j_Z$.  Recall that
$K^j_X=\{\beta^j_{\overline X},\lambda^j\}$ and $K^j_{\overline
X}=\{\beta^j_X,\gamma^j_1\}$ where $\phi_\sigma(\beta^j_X)\not\in{\cal B}_j$.
Thus, the edge $x^j_4 x^j_6$ is distinguished by $K^j_{\overline X}|K^j_X$.
Further, if $j<m$, recall that $F^{j+1}=\{\lambda^{j+1},\mu\}$ where
$\phi_\sigma(\lambda^{j+1})\in{\cal B}_{j+1}$ and $\phi_\sigma(\mu)=u_0$. Thus
$u_ju_{j+1}$ is distinguished by $K^j_X | F^{j+1}$.

Finally, recall that $H_{v_n}=\{\alpha_{v_n},\delta\}$ and
$F^1=\{\lambda^1,\mu\}$. So, $\phi_\sigma(H_{v_n})\subseteq {\cal
A}_n\cup\{y_0\}$ and $\phi_\sigma(F^1)\subseteq{\cal B}_j\cup\{u_0\}$.  Thus,
the edge $y_nu_1$ is distinguished by $H_{v_n}|F^1$. 

This concludes the proof.
\end{proof}

\section{Proof of Theorem \ref{thm:cell-inter}}\label{sec:sandwich}
To prove Theorem \ref{thm:cell-inter}, we need to introduce some additional tools.
The following is a standard property of minimal chordal completions.

\begin{lemma}\label{lem:sand1}
Let $G'$ be a chordal completion of $G$. Then $G'$ is a minimal chordal
completion of $G$ if and only if for all $uv\in E(G')\setminus E(G)$, the
vertices $u,v$ have at least two non-adjacent common neighbours in $G'$.
\end{lemma}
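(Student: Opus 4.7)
The plan is a direct two-direction argument, with both directions turning on the structure of induced $4$-cycles that appear in $G'$ after removing a single fill edge.

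For the forward direction, I would assume $G'$ is minimal and fix any fill edge $uv$. Minimality forces $G' - uv$ to fail to be a chordal completion of $G$; since $G \subseteq G' - uv$ is still true, the only possible failure is non-chordality, so $G' - uv$ contains an induced cycle $C$ of length at least four. As $G'$ itself is chordal, $C$ must have a chord in $G'$, and the unique edge of $G'$ missing from $G' - uv$ is $uv$, so $uv$ is a chord of $C$ in $G'$. Thus both $u,v$ lie on $C$ and are non-adjacent on $C$, and $uv$ splits $C$ into two arcs; each arc together with $uv$ is a cycle in $G'$. Any additional chord of such a sub-cycle in $G'$ would also be a chord of $C$ in $G' - uv$, contradicting the inducedness of $C$. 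Hence each sub-cycle has no chord in $G'$ other than $uv$, and by chordality of $G'$ each must be a triangle. Therefore each arc of $C$ has length exactly two, $C$ is a $4$-cycle $u,x,v,y,u$, and the two diagonals $uv$ and $xy$ are the only chord candidates; the first is in $G'$ and the second is not, since $C$ is induced in $G' - uv$. So $x,y$ are the required non-adjacent common neighbors.

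For the backward direction, I would observe that the hypothesis immediately exhibits, for every fill edge $uv$, an induced $4$-cycle on $\{u,x,v,y\}$ in $G' - uv$ (the four edges $ux, xv, vy, yu$ are in $G' - uv$ while $uv$ and $xy$ are not), so $G' - uv$ is not chordal. This establishes \emph{edge-minimality}: no single fill edge can be dropped while staying chordal. To promote this to inclusion-wise minimality, I would argue by contradiction. Suppose some chordal $G''$ satisfies $G \subseteq G'' \subsetneq G'$ and pick such $G''$ with $|E(G'')|$ maximum. Then for every $e = uv \in E(G') \setminus E(G'')$, the graph $G''+uv$ is non-chordal and its induced cycle must pass through $uv$; reconciling that cycle with the $4$-cycle $u,x,v,y$ given by the property on $G'$ exhibits a single fill edge of $G'$ whose removal preserves chordality, contradicting edge-minimality.

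The routine part is the forward cycle-chasing. The delicate point is the final step of the backward direction, which converts edge-minimality into full inclusion-wise minimality. This equivalence is a folklore fact about chordal completions (rooted in the elimination-ordering theory of Rose, Tarjan and Lueker), but a self-contained proof inside the paper must carefully match the cycle appearing in $G''+uv$ with the $C_4$ supplied by the property on $G'$. An alternative route is a short induction on $|V(G')|$ via simplicial vertices, once one notes that the hypothesis forces every simplicial vertex of $G'$ to be incident only to edges of $G$ (its neighbourhood being a clique would otherwise contradict the existence of non-adjacent common neighbors for any incident fill edge).
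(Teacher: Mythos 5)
Your forward direction is the paper's argument almost verbatim: minimality forces $G'-uv$ to be non-chordal, the induced cycle of $G'-uv$ must have $uv$ as its unique chord in $G'$, chordality of $G'$ then forces that cycle to be a $4$-cycle, and its two remaining vertices are the required non-adjacent common neighbours. For the backward direction the paper argues the contrapositive: if $G'$ is not minimal, it invokes \cite{rosetarjan} to produce a \emph{single} fill edge $uv$ with $G'-uv$ chordal, and observes that non-adjacent common neighbours of $u,v$ would create an induced $4$-cycle in $G'-uv$. This is logically the same as your route (hypothesis $\Rightarrow$ edge-minimality $\Rightarrow$ inclusion-wise minimality), and both hinge on the same Rose--Tarjan--Lueker fact that a chordal completion from which no single fill edge can be deleted is already inclusion-wise minimal. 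The one place where you go beyond the paper is your attempt to prove that promotion step from scratch, and that attempt has a real hole: the induced cycle of $G''+uv$ through $uv$ and the $4$-cycle on $\{u,x,v,y\}$ supplied by the hypothesis live in different graphs (the edges $ux,xv,vy,yu$ need not belong to $G''$), so ``reconciling'' them does not by itself exhibit a removable fill edge of $G'$; your alternative induction via simplicial vertices also needs care, since deleting a simplicial vertex can destroy the only non-adjacent pair of common neighbours of some remaining fill edge. Since the paper simply cites Rose--Tarjan for exactly this step, you should do the same (or supply the genuine, somewhat longer, argument); with that citation in place your proof is complete and coincides with the paper's.
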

\begin{proof}
Suppose that $G'$ is a minimal chordal completion.  Let $uv\in E(G')\setminus
E(G)$, and let $G''=G'-uv$. Since $G'$ is a minimal chordal completion and
$uv\not\in E(G)$, we conclude that $G''$ is not chordal.  Thus, there exists a
set $C\subseteq V(G')$ that induces a cycle in $G''$.  Since $G'$ is chordal,
$C$ does not induce a cycle in $G'$. This implies $u,v\in C$, and hence, $uv$ is
the unique chord of $G'[C]$. So, we conclude $|C|=4$, because otherwise $G'[C]$
contains an induced cycle. Let $x,y$ be the two vertices of $C\setminus\{u,v\}$.
Clearly, $xy\not\in E(G')$ and both $x$ and $y$ are common neighbours of $u,v$
as required.

Conversely, suppose that $G'$ is not a minimal chordal completion. Then by
\cite{rosetarjan}, there exists an edge $uv\in E(G')\setminus E(G)$ such that
$G'-uv$ is a chordal graph.  Therefore, $u,v$ do not have non-adjacent common
neighbours $x,y$ in $G'$, since otherwise $\{u,x,v,y\}$ induces a 4-cycle in
$G'-uv$, which is impossible since we assume that $G'-uv$ is chordal. That
concludes the proof.
\end{proof}

Using this tool, we prove the following two important lemmas.

\begin{lemma}\label{lem:sand2}
Let $G$ be a graph and $G'$ be a minimal chordal completion of $G$.
If $G$ contains vertices $u,v$ with $N_G(u)\subseteq N_G(v)$, then
also $N_{G'}(u)\subseteq N_{G'}(v)$.
\end{lemma}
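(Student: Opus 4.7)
The plan is to argue by contradiction. Suppose some $w\in N_{G'}(u)\setminus N_{G'}(v)$ exists. Since $N_G(u)\subseteq N_G(v)\subseteq N_{G'}(v)$ and $w\notin N_{G'}(v)$, we must have $uw\notin E(G)$, so $uw\in E(G')\setminus E(G)$ is a fill edge. Applying Lemma~\ref{lem:sand1} to $uw$ produces non-adjacent common neighbours $x,y\in N_{G'}(u)\cap N_{G'}(w)$ in $G'$, with $v\notin\{x,y\}$ since each of $x,y$ is adjacent to $w$ whereas $v$ is not.

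The central chordality argument is this: if both $vx,vy\in E(G')$, then $\{v,x,w,y\}$ induces a $4$-cycle in $G'$ (edges $vx,xw,wy,yv$; non-edges $vw,xy$), contradicting chordality. So at least one of $vx,vy$---say $vx$---is absent, forcing $x$ itself into $W:=N_{G'}(u)\setminus N_{G'}(v)$. To keep this recursion from simply exchanging one bad vertex for another, I would choose $w$ in advance to be a simplicial vertex of the chordal induced subgraph $G'[W]$ (which exists since $G'[W]$ is chordal). Under this choice, $N_{G'}(w)\cap W$ is a clique of $G'$, so $x$ and $y$ cannot both lie in $W$; combined with the preceding observation that at least one does, exactly one of the two---say $x$---lies in $W$ while the other, $y$, lies in $N_{G'}(v)$.

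This pins down a tight local configuration: $ux,uy,uw,wx,wy,vy\in E(G')$ and $vw,vx,xy\notin E(G')$. The main obstacle is producing an actual forbidden induced subgraph from this, since the natural candidate $\{v,x,w,y\}$ is foiled by $vx\notin E(G')$ and the five vertices $\{u,v,w,x,y\}$ alone do not furnish an induced cycle of length $\geq 4$. To finish, I would apply Lemma~\ref{lem:sand1} to the secondary fill edge $ux$ (which is a fill edge by the same argument as $uw$, since $x\in W$) to introduce further witnesses of $ux$, then use the simpliciality of $w$ in $G'[W]$ to control where these witnesses can lie. A careful case analysis, distinguishing whether $uv$ is itself a fill edge of $G'$, should then exhibit an induced $4$-cycle (or longer induced cycle) in $G'$, the required contradiction.
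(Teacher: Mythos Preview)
Your opening is correct: for any $w\in W:=N_{G'}(u)\setminus N_{G'}(v)$ the edge $uw$ is a fill edge, Lemma~\ref{lem:sand1} produces non-adjacent common neighbours $x,y\in N_{G'}(u)$ of $u$ and $w$, and if both $x,y$ were adjacent to $v$ then $\{v,x,w,y\}$ would be an induced $4$-cycle. So at least one witness lands back in $W$.

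The gap is in the finish. Choosing $w$ simplicial in $G'[W]$ controls the neighbours of $w$ inside $W$, and this indeed forces one witness, say $y$, out of $W$. But when you then apply Lemma~\ref{lem:sand1} to the secondary fill edge $ux$, the new witnesses are neighbours of $x$, not of $w$; the simpliciality of $w$ in $G'[W]$ says nothing about $N_{G'}(x)\cap W$, so both new witnesses may lie in $W$ and be non-adjacent. Your recursion has no termination mechanism, and the promised ``careful case analysis'' is not supplied. (The side remark about distinguishing whether $uv$ is a fill edge is a red herring: $N_G(u)\subseteq N_G(v)$ forces $uv\notin E(G)$, but whether $uv\in E(G')$ plays no role.)

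The paper sidesteps this difficulty by a global argument rather than a vertex chase. Set $B:=W$, $A:=N_{G'}(u)\cap N_{G'}(v)$, and let $A_1\subseteq A$ be the vertices with a neighbour in $B$. Look at $G_1:=G'[A_1\cup B\cup\{v\}]$. Your own computation shows that every $w\in B$ has two non-adjacent neighbours $x,y\in N_{G'}(u)$; since any such neighbour adjacent to $w$ lies in $A_1\cup B$, no vertex of $B$ is simplicial in $G_1$. Trivially no vertex of $A_1$ is simplicial either, being adjacent to $v$ and to some vertex of $B$. Hence $v$ is the only possible simplicial vertex of the chordal, non-complete graph $G_1$, contradicting Dirac's theorem that such a graph has two non-adjacent simplicial vertices. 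This packages exactly the local observation you made into a one-shot contradiction, with no iteration needed.
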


\begin{proof}
Let $u,v$ be vertices of $G$ with $N_G(u)\subseteq N(G_v)$. Let
$B=N_{G'}(u)\setminus N_{G'}(v)$ and $A=N_{G'}(u)\cap N_{G'}(v)$. Assume that
$B\neq\emptyset$, and let $A_1$ denote the vertices of $A$ with at least one
neighbour in $B$.  Look at the graph $G_1=G'[A_1\cup B\cup\{v\}]$.

By the definition of $A_1$ and $B$, the vertex $v$ is adjacent to each vertex of
$A_1$ and non-adjacent to each vertex of $B$. Hence, no vertex of $A_1$ is
simplicial in $G_1$, since it is adjacent to $v$ and at least one vertex in $B$.

Now, consider $w\in B$. By the definition of $B$, we have that $w$ is adjacent
in $G'$ to $u$ but not $v$. Thus, $uw$ is not an edge of $G$, since
$N_G(u)\subseteq N_G(v)$ and $N_{G}(v)\subseteq N_{G'}(v)$.  So, by Lemma
\ref{lem:sand1}, the vertices $u,w$ have non-adjacent common neighbours $x,y$ in
$G'$. Since $x,y$ are adjacent to $u$, we have $x,y\in A\cup B$.  In fact, since
$w$ has no neighbours in $A\setminus A_1$, we conclude $x,y\in A_1\cup B$. Thus,
$w$ is not a simplicial vertex in $G_1$, and hence, no vertex of $B$ is
simplicial~in~$G_1$.

This proves that no vertex of $G_1$, except possibly for $v$, is simplicial in
$G_1$.  Also, $G_1$ is not a complete graph, since $B\neq\emptyset$, and $v$ has no
neighbour in $B$. Recall that $G_1$ is chordal because $G'$ is. Thus, by the
result of Dirac \cite{dirac}, $G_1$ must contain at least two non-adjacent
simplicial vertices, but that is impossible.  Hence, we must conclude
$B=\emptyset$. In other words, $N_{G'}(u)\subseteq N_{G'}(v)$.
\end{proof}

\begin{lemma}\label{lem:sand3}
Let $G$ be a graph, and let $H$ be a graph obtained from $G$ by substituting
complete graphs for the vertices of $G$. Then there is a one-to-one
correspondence between minimal chordal completions of $G$ and $H$.
\end{lemma}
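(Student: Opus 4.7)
The plan is to exploit the fact that inside each blown-up clique $K_v$ of $H$, the vertices are pairwise \emph{twins}: they have identical closed neighbourhoods in $H$. By Lemma~\ref{lem:sand2}, applied twice (once in each direction to the containment $N_H(u)\subseteq N_H(u')$), this twin relation is preserved in any minimal chordal completion $H'$ of $H$. Consequently $H'$ is itself a blow-up along the same partition: there exists a well-defined graph $G'$ on $V(G)$ such that $ab\in E(G')$ iff some (equivalently, every) edge of $H'$ joins $K_a$ to $K_b$, and $H'$ is precisely the clique-substitution of $G'$. This observation will be the backbone of the bijection.

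I would first define the forward map $\Phi$ sending a minimal chordal completion $G'$ of $G$ to the blow-up $H':=\Phi(G')$ of $G'$. Chordality of $H'$ follows because an induced cycle of length $\geq 4$ in $H'$ cannot contain two vertices of the same $K_v$ (they would be adjacent), so its projection to $V(G)$ is an induced cycle of the same length in $G'$, contradicting chordality of $G'$. For minimality, pick $uu'\in E(H')\setminus E(H)$ with $u\in K_a$, $u'\in K_b$; then necessarily $a\neq b$ and $ab\in E(G')\setminus E(G)$, so Lemma~\ref{lem:sand1} furnishes non-adjacent common neighbours $x,y$ of $a,b$ in $G'$, and any representatives $x'\in K_x$, $y'\in K_y$ are non-adjacent common neighbours of $u,u'$ in $H'$; apply Lemma~\ref{lem:sand1} again.

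For the inverse map $\Psi$, start from a minimal chordal completion $H'$ of $H$. Using the twin-preservation argument from the first paragraph, define $G'$ on $V(G)$ by declaring $ab\in E(G')$ whenever the edges between $K_a$ and $K_b$ lie in $H'$; this is well defined because $N_{H'}(u)=N_{H'}(u')$ for $u,u'\in K_v$. Then $G'$ is a chordal completion of $G$: an induced cycle in $G'$ would lift to an induced cycle in $H'$ by choosing one representative per blown-up vertex. Minimality of $G'$ follows from Lemma~\ref{lem:sand1} applied in $H'$: for $ab\in E(G')\setminus E(G)$ and fixed $u\in K_a$, $u'\in K_b$, the two non-adjacent common neighbours of $u,u'$ in $H'$ lie in cliques $K_x,K_y$ with $xy\notin E(G')$, so $x,y$ are non-adjacent common neighbours of $a,b$ in $G'$.

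Finally, the equalities $\Phi\circ\Psi=\mathrm{id}$ and $\Psi\circ\Phi=\mathrm{id}$ are immediate from the constructions: both maps merely pass between a graph on $V(G)$ and its clique-substitution, and the twin-preservation ensures that $\Psi(\Phi(G'))$ and $\Phi(\Psi(H'))$ recover the original objects. The only genuine obstacle is the twin-preservation step, and this is handled exactly by Lemma~\ref{lem:sand2}; everything else is routine transfer of chordality and of the certificate supplied by Lemma~\ref{lem:sand1}.
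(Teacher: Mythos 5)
Your proof follows the same strategy as the paper's: use Lemma~\ref{lem:sand2} to show that minimal chordal completions of $H$ preserve the twin (blow-up) structure, define the correspondence as the clique-substitution map, transfer chordality in both directions, and certify minimality via the two-nonadjacent-common-neighbours criterion of Lemma~\ref{lem:sand1}. The only stylistic difference is that you re-derive the closure of chordality under clique substitution (the paper just cites it as a known fact), and there your parenthetical ``(they would be adjacent)'' is not by itself a complete reason to exclude two vertices of the same $K_v$ from an induced cycle of length $\geq 4$: adjacent vertices are allowed in such a cycle as long as they are consecutive, and to rule that case out you must additionally invoke the twin property to observe that the next vertex along the cycle would be adjacent to one of them but not the other. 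Everything else is correct and matches the paper's argument.
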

\begin{proof}
Let $v_1$,$v_2$,\ldots,$v_n$ be the vertices of $G$.  Since $H$ is obtained from
$G$ by substituting complete graphs, there is a partition $C_1\cup\ldots\cup C_n$
of $V(H)$ where each $C_i$ induces a complete graph in $H$, and
for every distinct $i,j\in\{1\ldots n\}$:\medskip

\noindent($\star$) each $x\in C_i$, $y\in C_j$ satisfy $v_iv_j\in E(G)$ if and
only if $xy\in E(H)$.\medskip

Let $G'$ be any graph with vertex set $V(G)$, and let $H'=\Psi(G')$ be the graph
constructed from $G'$ by, for each $i\in\{1\ldots n\}$, substituting $C_i$ for
the vertex $v_i$, and making $C_i$ into a complete graph. Thus, for every
distinct $i,j\in\{1\ldots n\}$ \medskip

\noindent($\star\star$) each $x\in C_i$, $y\in C_j$ satisfy $v_iv_j\in E(G')$ if
and only if $xy\in E(H')$.\medskip

We prove that $\Psi$ is a bijection between the minimal chordal completions of $G$
and $H$ which will yield the claim of the lemma.

Let $G'$ be a minimal chordal completion of $G$, and let $H'=\Psi(G')$.
Clearly, $H'$ is chordal, since $G'$ is chordal, and chordal graphs are closed
under the operation of substituting a complete graph for a vertex.  Also,
observe that $V(H)=V(H')$, and if $xy\in E(H)$, then either $x,y\in C_i$ for
some $i\in\{1\ldots n\}$, in which case $xy\in E(H')$, since $C_i$ induces a
complete graph in $H'$, or we have $x\in C_i$, $y\in C_j$ for distinct
$i,j\in\{1\ldots n\}$ in which case $v_iv_j\in E(G)$ by ($\star$) implying
$v_iv_j\in E(G')$, since $E(G)\subseteq E(G')$, and hence, $xy\in E(H')$ by
($\star\star$).  This proves that $E(H)\subseteq E(H')$, and thus, $H'$ is a
chordal completion of $H$.

To prove that $H'$ is a minimal chordal completion of $H$, it suffices, by
Lemma~\ref{lem:sand1}, to show that for all $xy\in E(H')\setminus E(H)$, the
vertices $x,y$ have at least two non-adjacent common neighbours in $H'$.
Consider $xy\in E(H')\setminus E(H)$, and let $i,j\in\{1\ldots n\}$ be such that
$x\in C_i$ and $y\in C_j$.  Since $xy\not\in E(H)$ and $C_i$ induces a complete
graph in $H$, we conclude $i\neq j$. Thus, by ($\star\star$), we have $v_iv_j\in
E(G')$, and so, $v_iv_j\in E(G')\setminus E(G)$ by ($\star$). Now, recall that
$G'$ is a minimal chordal completion of $G$. Thus, by Lemma \ref{lem:sand1}, the
vertices $v_i$, $v_j$ have non-adjacent common neighbours $v_k$, $v_{\ell}$ in
$G'$.  So, we let $w\in C_k$ and $z\in C_\ell$. By ($\star\star$), we conclude
$wz\not\in E(H')$, since $v_kv_\ell\not\in E(G')$.  Moreover, ($\star\star$)
also implies that $z,w$ are common neighbours of $x,y$, since $v_k$, $v_\ell$
are common neighbours of $v_i$, $v_j$. This proves that $x,y$ have non-adjacent
common neighbours, and thus shows that $H'$ is a minimal chordal completion of $H$.

Conversely, let $H'$ be a minimal chordal completion of $H$. Let $G'$ be the
graph with $V(G')=V(G)$ such that $v_iv_j\in E(G')$ if and only if there exists
$x\in C_i$, $y\in C_j$ with $xy\in E(H')$.  Let $i\in\{1\ldots n\}$ and consider
vertices $x,y\in C_i$ in the graph $H$.  Recall that $C_i$ induces a complete
graph in $H$.  This implies that $xy\in E(H)$ and both $x$ and $y$ are adjacent
in $H$ to every $z\in C_i\setminus\{x,y\}$. Further, by ($\star$), if $z\in C_j$
where $j\neq i$, then $x,y$ are both adjacent to $z$ if $v_iv_j\in E(G)$, and
$x,y$ are both non-adjacent to $z$ if $v_iv_j\not\in E(G)$.  This shows that
$N_H(x)=N_H(y)$, and hence, $N_{H'}(x)=N_{H'}(y)$ by Lemma \ref{lem:sand2} and
the fact that $H'$ is a minimal chordal completion of $H$. This proves that
$H'=\Psi(G')$, and hence, $G'$ is chordal. In fact,  $E(G)\subseteq E(G')$ by
($\star$) and ($\star\star$). Thus $G'$ is a chordal completion of $G$.

It remains to show that $G'$ is a minimal chordal completion of~$G$. Again, it
suffices to show that for each $v_iv_j\in E(G')\setminus E(G)$, the vertices
$v_i,v_j$ have non-adjacent common neighbours in $G'$.  Consider $v_iv_j\in
E(G')\setminus E(G)$, and let $x\in C_i$ and $y\in C_j$.  So, $i\neq j$ and
$xy\in E(H')$ by ($\star\star$). Further, $xy\in E(H')\setminus E(H)$ by
($\star$) and the fact that $v_iv_j\not\in E(G)$. So, the vertices $x,y$ have
non-adjacent common neighbours $w,z$ in $H'$ by Lemma \ref{lem:sand2} and the
fact that $H'$ is a minimal chordal completion of $H$.  Let $k,\ell\in\{1\ldots
n\}$ be such that $w\in C_k$ and $z\in C_\ell$.  Since $xz\in E(H')$ but
$wx\not\in E(H')$, we conclude by ($\star\star$) that $i\neq k$. By symmetry,
also $i\neq \ell$, $j\neq k$, and $j\neq \ell$.  Further, $k\neq\ell$, since
$wx\not\in E(H')$ and $C_k$ induces a complete graph in $H'$.  Thus,
($\star\star$) implies that $v_k$, $v_\ell$ are non-adjacent common neighbours
of $v_i$, $v_j$, since $w,z$ are non-adjacent common neighbours of $x,y$. This
proves that $G'$ is indeed a minimal chordal completion of $G$.

That concludes the proof.
\end{proof}

Now, we are finally ready to prove Theorem \ref{thm:cell-inter}.\medskip

\begin{proofof}{Theorem \ref{thm:cell-inter}}
We observe that the graph ${\rm int}({\cal Q})$ is obtained by substituting
complete graphs for the vertices of ${\rm int^*}({\cal Q})$.  Thus, by Lemma
\ref{lem:sand3}, there is a bijection $\Psi$ between the minimal chordal
completions of ${\rm int}({\cal Q})$ and ${\rm int^*}({\cal Q})$.

By translating the condition ($\star\star$) from the proof of Lemma
\ref{lem:sand3}, we obtain that if $G'$ is a minimal chordal completion of ${\rm
int^*}({\cal Q})$, then $H'=\Psi(G')$ is the graph whose vertex set is that of
${\rm int}({\cal Q})$ with the property that for all $A,A'\in V(G')$\medskip

\noindent($\star\star$) all meaningful $\pi,\pi'\in {\cal Q}$ satisfy $AA'\in
V(G')$ $\iff$ \mbox{$(A,\pi)(A',\pi')\in V(H')$}.\medskip

We show that $\Psi$ is a bijection between the minimal restricted chordal
completions of ${\rm int}({\cal Q})$ and the minimal chordal sandwiches of $({\rm
int^*}({\cal Q}),{\rm forb}({\cal Q}))$.

First, let $H'$ be a minimal restricted chordal completion of ${\rm int}({\cal
Q})$.  Then $G'=\Psi^{-1}(H')$ is a minimal chordal completion of ${\rm
int^*}({\cal Q})$. Consider two cells $A_1$,\,$A_2$ of $\pi\in {\cal Q}$. Since
$H'$ is a restricted chordal completion of ${\rm int}({\cal Q})$, we have that
$(A_1,\pi)$ is not adjacent to $(A_2,\pi)$ in $H'$. Thus, $A_1A_2\not\in E(G')$
by ($\star\star$). This shows that $G'$ contains no edge of ${\rm forb}({\cal
Q})$. Thus $G'$ is a minimal chordal sandwich of $({\rm int^*}({\cal Q}),{\rm
forb}({\cal Q}))$, since it is also a minimal chordal completion~of~${\rm int^*}({\cal Q})$.

Conversely, let $G'$ be a minimal chordal sandwich of $({\rm int^*}({\cal Q})$,
${\rm forb}({\cal Q}))$. Then $H'=\Psi(G')$ is a minimal chordal completion of
${\rm int}({\cal Q})$. Consider two cells $A_1$,\,$A_2$ of $\pi\in{\cal Q}$.
Since $A_1A_2$ is an edge of ${\rm forb}({\cal Q})$, and $G'$ is a minimal
chordal sandwich of $({\rm int^*}({\cal Q})$,${\rm forb}({\cal Q}))$, we have
$A_1A_2\not\in E(G')$.  Thus, $(A_1,\pi)(A_2,\pi)\not\in E(H')$ by
($\star\star$). This shows that $H'$ is a minimal restricted chordal completion
of ${\rm int}({\cal Q})$.

That concludes the proof.
\end{proofof}

\section{Proof of Theorem \ref{thm:one}}\label{sec:proof-one}

For the proof, we shall need the following simple properties of chordal graphs.

\begin{lemma}\label{lem:4cycle}
Let $G$ be a chordal graph, and let $a,b$ be non-adjacent vertices of $G$. Then
every two common neighbours of $a$ and $b$ are adjacent.
\end{lemma}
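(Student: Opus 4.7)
The plan is to argue by contradiction: suppose there exist two common neighbours $x, y$ of $a$ and $b$ that are non-adjacent in $G$. Then I would examine the four-vertex set $\{a, x, b, y\}$ and observe that the edges $ax$, $xb$, $by$, $ya$ are all present in $G$ (since $x$ and $y$ are common neighbours of $a$ and $b$), so these four vertices induce a cycle of length four, possibly with chords.

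Next, I would use chordality. The induced subgraph on $\{a, x, b, y\}$ must have at least one chord, since $G$ contains no induced $C_4$ or longer induced cycle. The two possible chords are $ab$ and $xy$. By hypothesis $ab \not\in E(G)$, and by the assumption for contradiction $xy \not\in E(G)$, so neither chord is present. This means $\{a, x, b, y\}$ induces a $C_4$ in $G$, contradicting chordality.

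This contradiction forces $xy \in E(G)$, completing the argument. The proof is essentially a one-line application of the definition of chordality, so there is no real obstacle; I would just need to phrase the four-cycle observation cleanly.
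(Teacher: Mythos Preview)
Your argument is correct and is exactly the standard one-line proof. The paper itself states this lemma without proof, listing it among ``simple properties of chordal graphs,'' so your proposal fills in precisely the elementary contradiction the authors had in mind.
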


\begin{lemma}\label{lem:5cycle}
Let $G$ be a chordal graph, and $C=\{a,b,c,d,e\}$ be a 5-cycle in $G$
with edges $ab,bc,cd,de,ae$.  Then
\vspace{-1.5ex}
\begin{enumerate}[(a)]
\item $bd,ce\not\in E(G)$ implies $ac,ad\in E(G)$, and
\item $bd,be\not\in E(G)$ implies $ac\in E(G)$.
\end{enumerate}
\vspace{-2ex}
\end{lemma}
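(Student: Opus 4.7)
The plan is to perform a short case analysis on which of the five possible chords $ac$, $ad$, $bd$, $be$, $ce$ of the 5-cycle $C$ are present in $G$. Since $G$ is chordal, the 5-cycle $C$ must contain at least one chord, and moreover every 4-cycle created by adding a single chord must itself contain a further chord. I will use these observations repeatedly.

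For part (a), the hypothesis $bd, ce \notin E(G)$ restricts any chord of $C$ to lie in $\{ac, ad, be\}$. I would first rule out $be$: if $be \in E(G)$, then $\{b,c,d,e\}$ would induce a 4-cycle whose only possible chords are $bd$ and $ce$, both forbidden, contradicting chordality. Hence at least one of $ac$ and $ad$ is an edge of $G$. If $ac \in E(G)$, then the 4-cycle on $\{a,c,d,e\}$ (with edges $ac, cd, de, ae$) requires a further chord, which can only be $ad$ since $ce$ is excluded. Symmetrically, if $ad \in E(G)$, then the 4-cycle on $\{a,b,c,d\}$ requires $ac$ since $bd$ is excluded. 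In either case, both $ac$ and $ad$ lie in $E(G)$.

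For part (b), the hypothesis $bd, be \notin E(G)$ restricts the chord to lie in $\{ac, ad, ce\}$. If $ac \in E(G)$ we are done. If $ad \in E(G)$, the 4-cycle on $\{a,b,c,d\}$ forces the chord $ac$ (since $bd$ is excluded). If $ce \in E(G)$, the 4-cycle on $\{a,b,c,e\}$ (with edges $ab, bc, ce, ae$) forces the chord $ac$ (since $be$ is excluded). In every subcase we conclude $ac \in E(G)$. There is no real obstacle here: the argument is a mechanical enumeration using only the absence of induced 4-cycles, and the only care needed is in tracking, within each subcase, which of the two possible chords of the resulting 4-cycle has been forbidden by the hypothesis.
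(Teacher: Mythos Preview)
Your argument is correct: the case analysis on the available chords, together with the fact that a chordal graph contains no induced $4$- or $5$-cycle, yields both conclusions exactly as you describe. The paper itself does not supply a proof of this lemma, listing it among ``simple properties of chordal graphs,'' so your write-up is a suitable filling-in of the omitted details.
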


\begin{lemma}\label{lem:6cycle}
Let $G$ be a chordal graph, and $C=\{a,b,c,d,e,f\}$ be a 6-cycle in $G$ with
edges $ab,bc,cd,de,ef,af$. Then\vspace{-1.5ex}
\begin{enumerate}[(a)]
\item $bd,ce,df\not\in E(G)$ implies $ac,ad,ae\in E(G)$,
\item $bd,ce,cf\not\in E(G)$ implies $ac,ad\in E(G)$, and
\item $be,bf,ce,cf\not\in E(G)$ implies $ad\in E(G)$.
\end{enumerate}
\vspace{-2ex}
\end{lemma}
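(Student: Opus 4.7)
The plan is to prove all three parts by a uniform case analysis. In each case I assume the desired chord is absent, list the chords of the 6-cycle that remain possible after removing the forbidden non-edges and the assumed-missing one, and show that every candidate forces either a 4-cycle whose two diagonals are both blocked (contradicting chordality via Lemma~\ref{lem:4cycle}) or a 5-cycle whose only admissible chords in turn produce such a 4-cycle. The whole argument is a case enumeration driven by Lemmas~\ref{lem:4cycle} and~\ref{lem:5cycle}.

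For part (a), I first establish $ad\in E(G)$. If $ad\notin E(G)$, the surviving chords are $\{ac,ae,be,bf,cf\}$. The chord $be$ produces the 4-cycle $bcde$ with diagonals $bd,ce$, both forbidden; $cf$ produces $cdef$ with diagonals $ce,df$. The chord $ac$ creates a 5-cycle $acdef$ whose only admissible chords are $ae$ or $cf$, and each then leaves a 4-cycle ($acde$ or $cdef$) with both diagonals blocked; $bf$ is handled identically on $bcdef$; and $ae$ is symmetric to $ac$ under the reflection $b\leftrightarrow f,\ c\leftrightarrow e$ fixing $a,d$, which also fixes the hypothesis and conclusion of~(a). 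Once $ad\in E(G)$ is established, the 6-cycle splits into the two 4-cycles $abcd$ and $adef$, and in each the unique surviving diagonal ($ac$ and $ae$, respectively) must be an edge.

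For part (b), the same scheme forces $ac\in E(G)$ first: if $ac\notin E(G)$, the candidates are $\{ad,ae,be,bf,df\}$, and each is eliminated in the same way---either directly by producing the 4-cycle $abcd$ or $bcde$ whose diagonals are blocked, or indirectly by splitting off a 5-cycle ($abcde$, $bcdef$, or $abcdf$) whose only admissible chords in turn yield such a 4-cycle. With $ac\in E(G)$ in hand, Lemma~\ref{lem:5cycle}(b) applied to the 5-cycle $acdef$, with the roles $a,b,c,d,e$ played by $a,c,d,e,f$, converts the non-edges $ce,cf$ into the lemma's hypothesis $bd,be\notin E(G)$ and so delivers $ad\in E(G)$. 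For part (c) the hypothesis is symmetric under the reflection $b\leftrightarrow f,\ c\leftrightarrow e$, halving the work. Assuming $ad\notin E(G)$, the candidates $\{ac,ae,bd,df\}$ pair up as $ac\sim ae$ and $bd\sim df$; the chord $ac$ gives a 5-cycle $acdef$ whose only remaining chords $ae$ and $df$ each produce a 4-cycle ($acde$ or $acdf$) with both diagonals blocked, while $bd$ does the same on the 5-cycle $abdef$ using the forbidden set $\{ad,be,bf\}$.

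The only obstacle is bookkeeping: each part spawns several 5-cycle sub-cases, but every one of them closes in a single step by pointing to a 4-cycle whose two diagonals both lie in the forbidden set, so no idea beyond the two preceding lemmas is required.
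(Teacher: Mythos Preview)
Your case analysis is correct: every branch closes exactly as you describe, and the symmetry you invoke in (a) and (c) under the reflection $b\leftrightarrow f,\ c\leftrightarrow e$ is valid since it fixes the respective hypotheses. The paper itself offers no proof of this lemma---it is stated alongside Lemmas~\ref{lem:4cycle} and~\ref{lem:5cycle} as a ``simple property of chordal graphs'' and left to the reader---so there is nothing to compare against beyond noting that your argument is precisely the kind of routine enumeration the authors had in mind.
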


To assist the reader in following the subsequent arguments, we list here the
cliques of ${\rm int^*}({\cal Q}_I)$ according to the elements from which they
arise:

\begin{enumerate}[a)]
\item for each $i\in\{1\ldots n\}$ where $j_1,j_2,\ldots,j_k$ are the elements of $\Delta_i$:
\vspace{0.5ex}

$\alpha_{v_i}$: $H_{v_i}$, $A_i$, $S^{j_1}_{v_i}$, $S^{j_2}_{v_i}$, \ldots,
$S^{j_t}_{v_i}$,
\hfill
$\alpha_{\overline{v_i}}$: $H_{\overline{v_i}}$, $A_i$,
$S^{j_1}_{\overline{v_i}}$, $S^{j_2}_{\overline{v_i}}$, \ldots,
$S^{j_t}_{\overline{v_i}}$,
\vspace{0.5ex}

\item for each $j\in\{1\ldots m\}$ where ${\cal C}_j=X\vee Y\vee Z$:
\vspace{0.5ex}

$\lambda^j$: $K^j_X$, $K^j_Y$, $K^j_Z$, $D^j_1$, $D^j_2$, $D^j_3$, $F^j$

\begin{tabular}{@{}lll}
$\gamma^j_1$: $K^j_{\overline X}$, $L^j_Z$, $D^j_1$
&
$\gamma^j_2$: $K^j_{\overline Y}$, $L^j_X$, $D^j_2$
&
$\gamma^j_3$: $K^j_{\overline Z}$, $L^j_Y$, $D^j_3$
\vspace{0.5ex}\\

$\beta^j_X$: $S^j_X$, $K^j_{\overline X}$
&
$\beta^j_Y$: $S^j_Y$, $K^j_{\overline Y}$
&
$\beta^j_Z$: $S^j_Z$, $K^j_{\overline Z}$
\vspace{0.5ex}\\

$\beta^j_{\overline X}$: $S^j_{\overline X}$, $K^j_X$,
$L^j_X$\quad\quad\quad\quad
&
$\beta^j_{\overline Y}$: $S^j_{\overline Y}$, $K^j_Y$,
$L^j_Y$\quad\quad\quad\quad
&
$\beta^j_{\overline Z}$: $S^j_{\overline Z}$, $K^j_Z$, $L^j_Z$
\vspace{0.5ex}
\end{tabular}

\item
$\delta$: $B$, $H_{v_1}$, \ldots, $H_{v_n}$, $H_{\overline{v_1}}$, \ldots,
$H_{\overline{v_n}}$

$\mu$: $B$, $F^1$, \ldots, $F^m$
\end{enumerate}

We start with a useful lemma describing an important property of ${\rm
int^*}({\cal Q}_I)$.

\begin{lemma}\label{lem:star}
Let $G'$ be a chordal sandwich of $({\rm int^*}({\cal Q}_I),{\rm forb}({\cal
Q}_I))$,  and $1\leq i\leq n$.\vspace{-1.5ex}

\begin{enumerate}[(a)]
\item there is $W\in\{v_i,\overline{v_i}\}$ such that for all
$j\in\Delta_i$, $K^j_W$ is adjacent to $B$.
\item for each $j\in\Delta_i$, and each $W\in\{v_i,\overline{v_i}\}$, if $K^j_W$
is adjacent to $B$, then the vertices $S^j_W$, $K^j_W$, $L^j_W$ (if exists) are
adjacent to $B$, $A_i$, $H_W$, $H_{\overline W}$, $F^j$.
\end{enumerate}
\end{lemma}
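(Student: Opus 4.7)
The plan is to derive both parts by selecting explicit induced subgraphs of ${\rm int^*}({\cal Q}_I)$, combining them with the forbidden edges of ${\rm forb}({\cal Q}_I)$, and invoking Lemmas~\ref{lem:4cycle}, \ref{lem:5cycle}, and~\ref{lem:6cycle} to conclude that chordality of $G'$ forces the claimed adjacencies.

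For part~(a), the central object is the 6-cycle
$$C^j_W:\; B - H_W - A_i - S^j_{\overline W} - K^j_W - F^j - B,$$
whose six edges all lie in ${\rm int^*}({\cal Q}_I)$ (consecutive cells share $\delta,\,\alpha_W,\,\alpha_{\overline W},\,\beta^j_{\overline W},\,\lambda^j,\,\mu$), and two of whose chords are already killed by ${\rm forb}({\cal Q}_I)$: the chord $B$-$A_i$ (from the partition $A_i|B$) and $H_W$-$F^j$ (from $H_W|F^j$). Applying Lemma~\ref{lem:6cycle} to $C^j_W$ and to its twin $C^j_{\overline W}$, then branching on which of the seven remaining chords are present and peeling off 5-cycles via Lemma~\ref{lem:5cycle} when needed, I expect to obtain first the pointwise statement: for each $j\in\Delta_i$, $B$ is adjacent in $G'$ to at least one of $K^j_{v_i}$, $K^j_{\overline{v_i}}$. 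For the global consistency in $j$, I would then argue by contradiction, assuming $B$-$K^{j_1}_{v_i}$ and $B$-$K^{j_2}_{\overline{v_i}}$ both lie in $G'$ for distinct $j_1,j_2\in\Delta_i$ (WLOG $j_1<j_2$), and assembling an induced cycle from these vertices together with $S^{j_1}_{\overline{v_i}},\,S^{j_2}_{v_i},\,F^{j_1},\,F^{j_2}$ whose potential chords are blocked by the forbidden families $S^j_{v_i}|S^{j'}_{\overline{v_i}}$, $S^j_{v_i}|K^{j'}_{\overline{v_i}}$, and $K^j_{v_i}|F^{j'}$ for $j<j'$, thereby exhibiting an induced cycle of length $\geq 4$ in $G'$.

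For part~(b), assuming $K^j_W\sim B$ in $G'$, I would propagate from $K^j_W$ outwards. The 5-cycle $K^j_W - S^j_{\overline W} - A_i - H_W - B - K^j_W$ (in which the chord $B$-$A_i$ is forbidden) combined with the 4-cycle $K^j_W - S^j_{\overline W} - H_{\overline W} - B - K^j_W$ (whose chord $S^j_{\overline W}$-$H_{\overline W}$ comes from the shared element $\alpha_{\overline W}$) forces $K^j_W\sim A_i$, $K^j_W\sim H_W$, and $K^j_W\sim H_{\overline W}$ by iterated use of Lemmas~\ref{lem:4cycle} and~\ref{lem:5cycle}. Once these three edges are secured, the remaining target edges out of $S^j_W$ and $L^j_W$ follow from short 4- and 5-cycles in which every chord candidate other than the target is killed by one of the clause-local forbidden edges $K^j_{\overline X}|K^j_X$, $K^j_{\overline X}|L^j_X$, $S^j_Y|K^j_X$, $S^j_Y|L^j_Z$ (and their rotational variants), together with the extant adjacencies $S^j_W\sim K^j_{\overline W}$ (via $\beta^j_W$) and $L^j_W\sim K^j_W$ (via $\beta^j_{\overline W}$).

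The main obstacle is the pointwise dichotomy of part~(a): with seven free chords in $C^j_W$, no single application of the chordal-cycle lemmas delivers $B$-$K^j_W$ in one shot, and one must systematically branch on which of them are present, reducing each case to a smaller induced cycle that has a forbidden chord. The consistency step is also delicate because $j_1$ and $j_2$ need not lie in the same clause, so the contradiction must be threaded through different clause gadgets via the long-range forbidden edges of the families indexed by $j<j'$. Once part~(a) is in hand, the propagation in part~(b) is mechanical but voluminous.
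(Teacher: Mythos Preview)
Your plan contains a concrete error in the very first building block. In the 6-cycle $C^j_W = B\text{--}H_W\text{--}A_i\text{--}S^j_{\overline W}\text{--}K^j_W\text{--}F^j\text{--}B$, you claim the edge $K^j_W\text{--}F^j$ is in ${\rm int^*}({\cal Q}_I)$ because both cells contain $\lambda^j$. But this is true for only one of the two values of $W$: if, say, $v_i$ appears in ${\cal C}_j$ as the literal $X$, then $K^j_{v_i}=K^j_X=\{\beta^j_{\overline X},\lambda^j\}$ does meet $F^j=\{\lambda^j,\mu\}$, whereas $K^j_{\overline{v_i}}=K^j_{\overline X}=\{\beta^j_X,\gamma^j_1\}$ does not. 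So the twin cycle $C^j_{\overline W}$ is not a cycle of ${\rm int^*}({\cal Q}_I)$ at all, and your branching scheme never gets off the ground for that value of $W$. The paper avoids this by inserting the vertex $D^j_p$ (which meets $K^j_W$ via $\lambda^j$ or via $\gamma^j_p$, whichever applies) and by using $H_{\overline W}$ directly adjacent to $S^j_{\overline W}$, obtaining the 6-cycle $K^j_W\text{--}S^j_{\overline W}\text{--}H_{\overline W}\text{--}B\text{--}F^j\text{--}D^j_p\text{--}K^j_W$. In this cycle three chords ($S^j_{\overline W}B$, $H_{\overline W}F^j$, $BD^j_p$) are simultaneously absent, so a single call to Lemma~\ref{lem:6cycle}(a) yields $K^j_WB$ in one shot---no branching needed. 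This actually proves the stronger dichotomy $(\star)$: for each $W$, one of $S^j_{\overline W},K^j_W$ is adjacent to $B$.

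Two further gaps follow from this. First, your consistency step is both misformulated (the contradiction hypothesis should be that $K^{j_1}_{v_i}$ and $K^{j_2}_{\overline{v_i}}$ are \emph{not} adjacent to $B$) and unnecessarily elaborate: from $(\star)$ one gets $S^{j_1}_{\overline{v_i}}\sim B$ and $S^{j_2}_{v_i}\sim B$, and then $\{S^{j_2}_{v_i},A_i,S^{j_1}_{\overline{v_i}},B\}$ is an induced 4-cycle since $A_i|B$ and $S^{j_2}_{v_i}|S^{j_1}_{\overline{v_i}}$ are both in ${\cal Q}_I$. No long-range forbidden edges or $F$-vertices are needed. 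Second, in part~(b) your 5-cycle $K^j_W\text{--}S^j_{\overline W}\text{--}A_i\text{--}H_W\text{--}B\text{--}K^j_W$ has only the single forbidden chord $A_iB$; Lemma~\ref{lem:5cycle} requires two known non-chords to conclude anything. The paper first argues $K^j_{\overline W}\not\sim B$ (else a 4-cycle on $\{K^j_W,B,K^j_{\overline W},D^j_p\}$), then uses $(\star)$ to get $S^j_W\sim B$, and only then runs Lemma~\ref{lem:5cycle} on $\{K^j_W,S^j_{\overline W},A_i,S^j_W,B\}$, where now both $A_i|B$ and $S^j_W|S^j_{\overline W}$ are forbidden. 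That missing step---establishing $S^j_W\sim B$ before touching $A_i$---is what makes the propagation go through.
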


\begin{proof}
Let $i\in\{1\ldots n\}$. First, we observe the following.
\medskip

\noindent($\star$)~{\em for each $j\in\Delta_i$, each
$W\in\{v_i,\overline{v_i}\}$, at least one of $S^j_{\overline W}$, $K^j_W$ is
adjacent to~$B$.}\medskip

\noindent We may assume that $S^j_{\overline W}$ is not adjacent to $B$,
otherwise we are done. Observe that $S^j_{\overline W}$ is adjacent to $K^j_W$,
since $\beta^j_{\overline W}\in K^j_W\cap S^j_{\overline W}$.  Moreover, there
exists $p\in\{1,2,3\}$ such that $K^j_W\cap D^j_p$ contains $\lambda^j$ or
$\gamma^j_p$, implying that $K^j_W$ is adjacent to $D^j_p$. Also, $F^j$ is
adjacent to $D^j_p$ and $B$, since $\lambda^j\in D^j_p\cap F^j$ and $\mu\in
B\cap F^j$, respectively. Further, $H_{\overline W}$ is adjacent to
$S^j_{\overline W}$ and $B$, since $\alpha_{\overline W}\in H_{\overline W}\cap
S^j_{\overline W}$ and $\delta\in H_{\overline W}\cap B$. Finally, $H_{\overline
W}$ is not adjacent to $F^j$, and $B$ is not adjacent to $D^j_p$, since
$H_{\overline W}|F^j$ and $D^j_p|B$ are in ${\cal Q}_I$. So, by Lemma
\ref{lem:6cycle} applied to the cycle $\{K^j_W$, $S^j_{\overline W}$,
$H_{\overline W}$, $B$, $F^j$, $D^j_p\}$, we conclude that $K^j_W$ is adjacent
to $B$. This proves ($\star$).  \medskip

Now, to prove (a), suppose for contradiction that there are $j,j'\in\Delta_i$
such that both $K^j_{\overline{v_i}}$ and $K^{j'}_{v_i}$ are not adjacent to
$B$.  Then by ($\star$), both $S^j_{v_i}$ and $S^{j'}_{\overline{v_i}}$ are
adjacent to $B$. Note also that $A_i$ is adjacent to both $S^j_{v_i}$,
$S^{j'}_{\overline{v_i}}$, since $\alpha_{v_i}\in A_i\cap S^j_{v_i}$ and
$\alpha_{\overline{v_i}}\in A_i\cap S^{j'}_{\overline{v_i}}$.  Further, note
that $A_i B$ and $S^j_{v_i} S^{j'}_{\overline{v_i}}$ are not edges of $G'$,
since $A_i|B$ and $S^j_{v_i}|S^{j'}_{\overline{v_i}}$ are in ${\cal Q}_I$. But
then $G'$ contains an induced 4-cycle on $\{S^j_{v_i}$, $A_i$,
$S^{j'}_{\overline{v_i}}$, $B\}$, which is impossible, since $G'$ is chordal.
This proves (a).

For (b), suppose that $K^j_W$ is adjacent to $B$ for $j\in\Delta_i$ and
$W\in\{v_i,\overline{v_i}\}$.  First observe that $K^j_W$ is adjacent to
$S^j_{\overline W}$, and the vertex $K^j_{\overline W}$ is adjacent to $S^j_W$,
since $\beta^j_{\overline W}\in K^j_W\cap S^j_{\overline W}$ and $\beta^j_W\in
K^j_{\overline W}\cap S^j_W$. Moreover, there exists $p\in\{1,2,3\}$ such that
$K^j_W\cap D^j_p$ and $K^j_{\overline W}\cap D^j_p$ contain $\gamma^j_p$ and
$\lambda^j$, respectively, or $\lambda^j$ and $\gamma^j_p$, respectively.  This
implies that $K^j_W$ and $K^j_{\overline W}$ are adjacent to $D^j_p$.  Also,
$A_i$ is adjacent to $S^j_W$ and $S^j_{\overline W}$, since $\alpha_W\in A_i\cap
S^j_W$ and $\alpha_{\overline W}\in A_i\cap S^j_{\overline W}$.  Further, note
that $D^j_p B$, $A_i B$, $K^j_W K^j_{\overline W}$, and $S^j_W S^j_{\overline
W}$ are not edges of $G'$, since $D^j_p|B$, $A_i|B$, $K^j_W|K^j_{\overline W}$,
and $S^j_W|S^j_{\overline W}$ are in ${\cal Q}_I$. This implies that
$K^j_{\overline W}$ is not adjacent to $B$, since otherwise $G'$ contains an
induced 4-cycle on $\{K^j_W$, $B$, $K^j_{\overline W}$, $D^j_p\}$. So, by
($\star$), we have that $S^j_W$ is adjacent to $B$. Thus, Lemma \ref{lem:5cycle}
applied to $\{K^j_W$, $S^j_{\overline W}$, $A_i$, $S^j_W$, $B\}$ yields that
$K^j_W$ is adjacent to $A_i$ and $S^j_W$. So, by Lemma \ref{lem:4cycle} applied
to $\{S^j_W$, $K^j_W$, $D^j_p$, $K^j_{\overline W}\}$, we have that $S^j_W$ is
adjacent to $D^j_p$.

Now, observe that $H_W$, $H_{\overline W}$ are adjacent to both $A_i$ and $B$,
since $\alpha_W\in H_W\cap A_i$, $\alpha_{\overline W}\in H_{\overline W}\cap
A_i$, and $\delta\in B\cap H_W\cap H_{\overline W}$.  Thus, by Lemma
\ref{lem:4cycle} applied to $\{u$, $A_i$, $u'$, $B\}$ where $u\in\{S^j_W$,
$K^j_W$$\}$ and $u'\in\{H_W,H_{\overline W}\}$ , we conclude
that $S^j_W$ and $K^j_W$  are adjacent to both $H_W$ and
$H_{\overline W}$.  Similarly, we observe that $F^j$ is adjacent to $B$ and
$D^j_p$, since $\mu\in F^j\cap B$ and $\lambda^j\in D^j_p\cap F^j$. Thus, Lemma
\ref{lem:4cycle} applied to $\{u$, $B$, $F^j$, $D^j_p\}$ yields that $S^j_W$
and $K^j_W$ are also adjacent to $F^j$.

Lastly, suppose that $L^j_W$ exists. Then there exists $q\in\{1,2,3\}$ such
that $\gamma^j_{q}\in D^j_{q}\cap L^j_W$ implying that $L^j_W$ is adjacent to
$D^j_{q}$.  Moreover, $F^j$ is adjacent to $D^j_{q}$ and $B$, since
$\lambda^j\in D^j_{q}\cap F^j$ and $\mu\in F^j\cap B$. Also, $H_{\overline W}$
is adjacent to $B$, $S^j_{\overline W}$, and the vertex $S^j_{\overline W}$ is
adjacent to $L^j_W$, since $\delta\in B\cap H_{\overline W}$, $\alpha_{\overline
W}\in H_{\overline W}\cap S^j_{\overline W}$, and $\beta^j_{\overline W}\in
S^j_{\overline W}\cap L^j_W$.  Further, $H_{\overline W} F^j$ and $D^j_{q}B$
are not edges of $G'$, since $H_{\overline W}|F^j$ and $D^j_{q}|B$ are in
${\cal Q}_I$.  Also, $S^j_{\overline W} B$ is not an edge of $G'$, since
otherwise $G'$ contains an induced 4-cycle on $\{S^j_W$, $B$, $S^j_{\overline
W}$, $A_i\}$.  Thus, by Lemma \ref{lem:5cycle} applied to $\{L^j_W$,
$S^j_{\overline W}$, $H_{\overline W}$, $B$, $F^j$, $D^j_{q}\}$, we conclude
that $L^j_W$ is adjacent to $H_{\overline W}$, $B$, and $F^j$.  Moreover,
by Lemma \ref{lem:5cycle} applied to $\{L^j_W$, $B$, $S^j_W$, $A_i$,
$S^j_{\overline W}\}$, we conclude that $L^j_W$ is adjacent to $A_i$.
Finally, recall that $H_W$ is adjacent to both $A_i$ and $B$. Thus, Lemma
\ref{lem:4cycle} applied to $\{L^j_W$, $A_i$, $H_W$, $B\}$ yields that $L^j_W$
is also adjacent to $H_W$.

That concludes the proof.
\end{proof}

\begin{figure}[h!t]
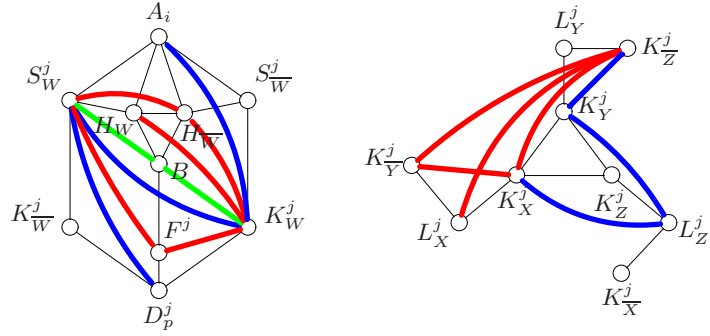

\centering
\vskip -1ex

\small 
\raisebox{-2ex}{$\xy/r4pc/:
(-0.7,-0.5)*[o][F]{\phantom{S}}="k'";
(0.7,-0.5)*[o][F]{\phantom{S}}="k";
(-0.7,0.5)*[o][F]{\phantom{S}}="s";
(0.7,0.5)*[o][F]{\phantom{S}}="s'";
(0,-1)*[o][F]{\phantom{S}}="d";
(0,1)*[o][F]{\phantom{S}}="a";
(0,0)*[o][F]{\phantom{S}}="b";
(-0.2,0.4)*[o][F]{\phantom{S}}="h";
(0.2,0.4)*[o][F]{\phantom{S}}="h'";
(0,-0.7)*[o][F]{\phantom{S}}="f";
{\ar@{-} "k";"d"};
{\ar@{-} "k'";"d"};
{\ar@{-} "s";"a"};
{\ar@{-} "s'";"a"};
{\ar@{-} "s";"k'"};
{\ar@{-} "s'";"k"};
{\ar@{-} "s";"h"};
{\ar@{-} "h";"h'"};
{\ar@{-} "h'";"s'"};
{\ar@{-} "h'";"a"};
{\ar@{-} "h";"a"};
{\ar@{-} "h";"b"};
{\ar@{-} "h'";"b"};
{\ar@{-} "f";"b"};
{\ar@{-} "f";"d"};
{\ar@{-}@*{[blue]}@*{[|2pt]}@/_0.8pc/ "s";"k"};
{\ar@{-}@*{[green]}@*{[|2pt]} "s";"b"};
{\ar@{-}@*{[green]}@*{[|2pt]} "k";"b"};
{\ar@{-}@*{[red]}@*{[|2pt]}@/_0.3pc/ "k";"h"};
{\ar@{-}@*{[red]}@*{[|2pt]}@/_0.3pc/ "k";"h'"};
{\ar@{-}@*{[blue]}@*{[|2pt]}@/_0.7pc/ "k";"a"};
{\ar@{-}@*{[red]}@*{[|2pt]} "k";"f"};
{\ar@{-}@*{[blue]}@*{[|2pt]}@/_0.5pc/ "s";"d"};
{\ar@{-}@*{[red]}@*{[|2pt]}@/_0.2pc/ "s";"f"};
{\ar@{-}@*{[red]}@*{[|2pt]}@/^0.4pc/ "s";"h'"};
"a"+(0,0.2)*{A_i};
"d"+(0,-0.2)*{D^j_p};
"s"+(-0.2,0.2)*{S^j_W};
"s'"+(0.2,0.2)*{S^j_{\overline W}};
"h"+(-0.15,-0.1)*{H_W};
"h'"+(0.13,-0.18)*{H_{\overline W}};
"f"+(0.15,0.2)*{F^j};
"b"+(0.16,-0.05)*{B};
"k"+(0.3,0.1)*{K^j_W};
"k'"+(-0.3,0.1)*{K^j_{\overline W}};
\endxy$}
\qquad$\xy/r2pc/:
(-0.75,-0.5)*[o][F]{\phantom{S}}="x";
(-1.65,-1.25)*[o][F]{\phantom{S}}="lx";
(0,0.5)*[o][F]{\phantom{S}}="y";
(0,1.5)*[o][F]{\phantom{S}}="ly";
(0.75,-0.5)*[o][F]{\phantom{S}}="z";
(1.65,-1.25)*[o][F]{\phantom{S}}="lz";
(0.9,-2.05)*[o][F]{\phantom{S}}="x'";
(-2.4,-0.35)*[o][F]{\phantom{S}}="y'";
(1,1.5)*[o][F]{\phantom{S}}="z'";
{\ar@{-} "x";"lx"};
{\ar@{-} "lx";"y'"};
{\ar@{-} "y";"ly"};
{\ar@{-} "ly";"z'"};
{\ar@{-} "z";"lz"};
{\ar@{-} "lz";"x'"};
{\ar@{-} "x";"y"};
{\ar@{-} "y";"z"};
{\ar@{-} "z";"x"};
"x'"+(0,-0.4)*{K^j_{\overline X}};
"y'"+(-0.45,0.1)*{K^j_{\overline Y}};
"z'"+(0.5,0)*{K^j_{\overline Z}};
"lx"+(-0.4,-0.15)*{L^j_X};
"x"+(0,-0.4)*{K^j_X};
"ly"+(0.1,0.45)*{L^j_Y};
"y"+(0.5,0.1)*{K^j_Y};
"lz"+(0.4,-0.1)*{L^j_Z};
"z"+(0,-0.45)*{K^j_Z};
{\ar@{-}@*{[red]}@*{[|2pt]} "x";"y'"};
{\ar@{-}@*{[red]}@*{[|2pt]}@/^0.7pc/ "x";"z'"};
{\ar@{-}@*{[red]}@*{[|2pt]}@/^0.5pc/ "y'";"z'"};
{\ar@{-}@*{[red]}@*{[|2pt]}@/^1.05pc/ "lx";"z'"};
{\ar@{-}@*{[blue]}@*{[|2pt]}@/_0.6pc/ "x";"lz"};
{\ar@{-}@*{[blue]}@*{[|2pt]} "y";"z'"};
{\ar@{-}@*{[blue]}@*{[|2pt]}@/^0.3pc/ "y";"lz"};
\endxy$
\caption{The fill-in edges for {\em a)} $W=1$, {\em b)} $X=1$, $Y=0$, $Z=0$.\label{fig:2}}
\end{figure}

Let $\sigma$ be a truth assignment for the instance $I$.  Recall that, for
simplicity, we write $X=0$ and $X=1$ in place of $\sigma(X)=0$ and
$\sigma(X)=1$, respectively.

To facilitate the arguments in the proof, we introduce a naming convention for
the vertices in ${\rm int^*}({\cal Q}_I)$ similar to that of \cite{twostrikes}.
The vertices $S^j_W$ for all meaningful choices of $j$ and $W$ are called {\em
shoulders}.  For a fixed $j$, we call them {\em shoulders of the clause ${\cal
C}_j$}, and for a fixed $W$, we call them {\em shoulders of the literal $W$}.  A
shoulder is a {\em a true shoulder} if $W=1$. Otherwise, it is a {\em false
shoulder}.  The vertices $K^j_W$, $L^j_W$ for all meaningful choices of $j$ and
$W$ are called {\em knees}. For a fixed $j$, we call them {\em knees of the
clause ${\cal C}_j$}, and for a fixed $W$, we call them {\em knees of the
literal $W$}.  A knee is a {\em true knee} if $W=1$. Otherwise, it is a {\em
false knee}.  The vertices $A_i$, $D^j_p$, $H_W$, $F^j$ for all meaningful
choices of indices are called {\em $A$-vertices}, {\em $D$-vertices}, {\em
$H$-vertices}, and {\em $F$-vertices}, respectively.

\medskip

Let $G_{\sigma}$ be the graph constructed from ${\rm int^*}({\cal Q}_I)$ by
performing the following:
\begin{enumerate}[(i)]
\item\label{enum:1}
make $B$ adjacent to all true knees and true shoulders
\end{enumerate}

Let $G'_\sigma$ be the graph constructed from $G_\sigma$ by performing the
following steps:
\begin{enumerate}[(i)]
\setcounter{enumi}{1}
\item\label{enum:2}
make $\{$true knees, true shoulders$\}$ into a complete graph

\item\label{enum:3}
for all $i\in\{1\ldots n\}$, make $A_i$ adjacent to all true knees of the
literals $v_i$,$\overline{v_i}$,

\item\label{enum:4}
for all $1\leq i'\leq i\leq n$, make $H_{v_i}$, $H_{\overline{v_i}}$ adjacent
to all true knees and true shoulders of the literals $v_{i'}$, $\overline{v_{i'}}$

\item\label{enum:5}
for all $1\leq j\leq j'\leq m$, make $F^j$ adjacent to all true knees and
true shoulders of the clause ${\cal C}_{j'}$,

\item\label{enum:6}
for all $1\leq i\leq n$ and all $j,j'\in\Delta_i$ such that $j\leq j'$:
\begin{enumerate}[a)]
\item if $v_i=1$, make $S^{j'}_{\overline{v_i}}$ adjacent to $K^j_{v_i}$,
$L^j_{v_i}$ (if exists)

\item if $v_i=0$, make $S^{j'}_{v_i}$ adjacent to $K^j_{\overline{v_i}}$,
$L^j_{\overline{v_i}}$ (if exists)
\end{enumerate}

\end{enumerate}

Finally, let $G^*_\sigma$ be constructed from $G'_\sigma$ by adding the following edges.

\begin{enumerate}[(i)]
\setcounter{enumi}{6}
\item\label{enum:7}
for all $j\in\{1\ldots m\}$ where ${\cal C}_j=X\vee Y\vee Z$:
\begin{enumerate}[a)]
\item if $X=1$, then add edges $F^j L^j_Z$, $K^j_X L^j_Z$, $K^j_Y K^j_{\overline
Z}$, $D^j_2 K^j_{\overline Z}$, $D^j_2 S^j_{\overline Y}$, $D^j_3 S^j_{\overline
Y}$ and make $\{D^j_1$, $D^j_2$, $D^j_3$, $S^j_X$, $S^j_{\overline Z}$, $L^j_Z$,
$K^j_Y\}$ into a complete graph

\item if $Y=1$, then add edges $F^j L^j_X$, $K^j_Y L^j_X$, $K^j_Z K^j_{\overline
X}$, $D^j_3 K^j_{\overline X}$, $D^j_3 S^j_{\overline Z}$, $D^j_1 S^j_{\overline
Z}$ and make $\{D^j_1$, $D^j_2$, $D^j_3$, $S^j_Y$, $S^j_{\overline X}$, $L^j_X$,
$K^j_Z\}$ into a complete graph

\item if $Z=1$, then add edges $F^j L^j_Y$, $K^j_Z L^j_Y$, $K^j_X K^j_{\overline
Y}$, $D^j_1 K^j_{\overline Y}$, $D^j_1 S^j_{\overline X}$, $D^j_2 S^j_{\overline
X}$ and make $\{D^j_1$, $D^j_2$, $D^j_3$, $S^j_Z$, $S^j_{\overline Y}$, $L^j_Y$,
$K^j_X\}$ into a complete graph
\end{enumerate}

\end{enumerate}

\begin{lemma}\label{lem:sigma1}
$G'_\sigma$ is a subgraph of every chordal sandwich of $(G_\sigma,{\rm
forb}({\cal Q}_I))$.
\end{lemma}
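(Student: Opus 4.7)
The plan is to show that every edge added in steps (ii)--(vi) of the construction of $G'_\sigma$ must lie in any chordal sandwich $G^+$ of $(G_\sigma,{\rm forb}({\cal Q}_I))$. The tools are Lemma~\ref{lem:star}(b), together with the three chordal cycle lemmas (Lemmas~\ref{lem:4cycle}, \ref{lem:5cycle}, \ref{lem:6cycle}), whose hypotheses will be verified using the explicit list of forbidden edges provided by the quartets in~${\cal Q}_I$.

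The base layer is a single application of Lemma~\ref{lem:star}(b) for each variable~$i$, taking $W$ to be the true literal of~$i$. Step~(i) of the construction of $G_\sigma$ guarantees that $K^j_W$ is adjacent to $B$ for every $j\in\Delta_i$, which is exactly the hypothesis of the lemma. The conclusion then delivers, in a single stroke, all of step~(iii), the $i'=i$ subcase of step~(iv) (namely, $H_{v_i}$ and $H_{\overline{v_i}}$ adjacent to every true knee or shoulder of variable~$i$), and the $j=j'$ subcase of step~(v) (namely, $F^j$ adjacent to every true knee or shoulder of clause~${\cal C}_j$).

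For the cross-index edges in (iv) and (v), I would use 4-, 5-, and 6-cycle arguments in the same spirit as in the proof of Lemma~\ref{lem:star}. In each case we pick a short cycle through the two endpoints whose remaining vertices lie in the base layer; every non-target diagonal of the cycle is then either already an edge by the base layer or is explicitly forbidden by a quartet in~${\cal Q}_I$ (most frequently one of $A_i|B$, $H_W|F^j$, $D^j_p|B$, or $K^j_{\overline W}|K^j_W$). Once these portions of (iv) and~(v) are in place, step~(ii) follows by Lemma~\ref{lem:4cycle}: for any two true knees or shoulders $u,v$, the set of common neighbours in $G^+$ now includes an appropriate $H$-vertex and an $F$-vertex, and these two are forbidden adjacent by a quartet $H_W|F^j$, so $u$ and~$v$ must themselves be adjacent. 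Step~(vi) is handled analogously, now using step~(ii) together with the quartets $S^j_W|S^{j'}_{\overline W}$ and $S^j_W|K^{j'}_{\overline W}$ as the diagonal-excluding pairs.

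The main obstacle is the bookkeeping: for each of the many index configurations one must choose the right cycle, verify that every non-target diagonal is either already an edge (by a previously-established step) or forbidden by some specific quartet, and chase a correct order of implications so that no case is circular. The rotational symmetry between $X,Y,Z$ within each clause, and the $v_i/\overline{v_i}$ symmetry between a literal and its negation, reduce the number of essentially distinct cases to a small fixed set, which is what keeps the argument manageable.
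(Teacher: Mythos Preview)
Your proposal is correct and uses the same machinery as the paper: Lemma~\ref{lem:star}(b) for the base layer, then the cycle Lemmas~\ref{lem:4cycle}--\ref{lem:6cycle} with the quartets of~${\cal Q}_I$ supplying the forbidden diagonals. The only substantive difference is the order in which the steps are discharged. The paper establishes~(\ref{enum:2}) first, directly, by applying Lemma~\ref{lem:6cycle} (or its 4/5-cycle degenerations when $W=W'$ or $j=j'$) to the cycle $\{u,H_W,H_{W'},u',F^{j'},F^j\}$, using only the base-layer adjacencies from Lemma~\ref{lem:star}(b) and the four forbidden diagonals $H_W|F^j,\,H_W|F^{j'},\,H_{W'}|F^j,\,H_{W'}|F^{j'}$. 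Once~(\ref{enum:2}) is in hand, each of (\ref{enum:3})--(\ref{enum:6}) is a single application of Lemma~\ref{lem:4cycle}. Your route --- first proving the cross-index cases of (\ref{enum:4}) and~(\ref{enum:5}), then deducing~(\ref{enum:2}) by a 4-cycle through a shared $H$-vertex and $F$-vertex --- also works, but the 6-cycles you would need for those cross-index cases (e.g.\ $\{u',H_{W'},H_{v_i},S^j_W,F^j,F^{j'}\}$) are essentially the paper's cycle with one endpoint relabelled, so the detour buys nothing and costs a bit of extra case analysis. One small slip: for step~(\ref{enum:6}) the diagonal-excluding quartet you want is $S^j_{\overline W}\,|\,K^{j'}_W$ (false shoulder against true knee, with $j<j'$), not $S^j_W\,|\,K^{j'}_{\overline W}$.
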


\begin{proof}
Let $G'$ be a chordal sandwich of $(G_\sigma,{\rm forb}({\cal Q}_I))$.  We prove
the claim by showing that $G'$ contains all edges defined in
(\ref{enum:2})-(\ref{enum:6}).

For (\ref{enum:2}), let us consider true shoulders $S^j_W$, $S^{j'}_{W'}$ and
true knees $K^j_W$, $K^{j'}_{W'}$ and $L^j_W$, $L^{j'}_{W'}$ (if they exist). We
allow that $W$ is possibly equal to $W'$ and possibly $j=j'$. First, we observe
that, by (\ref{enum:1}), the true knees $K^j_W$ and $K^{j'}_{W'}$ are adjacent
to $B$.  Therefore, by Lemma \ref{lem:star}, the vertices $S^j_W$, $K^j_W$,
$L^j_W$ are adjacent to $H_W$ and $F^j$, whereas $S^{j'}_{W'}$, $K^{j'}_{W'}$,
$L^{j'}_{W'}$ are adjacent to $H_{W'}$ and $F^{j'}$. Also, $H_W$ is adjacent to
$H_{W'}$ and $F^j$ is adjacent to $F^{j'}$, since $\delta\in H_W\cap H_{W'}$ and
$\mu\in F^j\cap F^{j'}$, respectively.  Further, $H_W F^j$, $H_W F^{j'}$,
$H_{W'} F^j$, $H_{W'} F^{j'}$ are not edges of $G'$, since $H_W|F^j$,
$H_W|F^{j'}$, $H_{W'}|F^j$, $H_{W'}|F^{j'}$ are in ${\cal Q}_I$.  Thus, if
$j=j'$ and $W$ is equal to $W'$, then, by Lemma~\ref{lem:4cycle} applied to
cycles $\{u$,\,$H_W$,\,$u'$,\,$F^j\}$ where $u,u'\in\{S^j_W$, $S^{j'}_{W'}$,
$K^j_W$, $K^{j'}_{W'}$, $L^j_W$, $L^{j'}_{W'}\}$, we conclude that $\{S^j_W$,
$S^{j'}_{W'}$, $K^j_W$, $K^{j'}_{W'}$, $L^j_W$, $L^{j'}_{W'}\}$ forms a complete
graph in $G'$.  If $j\neq j'$ and $W$ is not equal to $W'$, we reach the same
conclusion by Lemma~\ref{lem:6cycle} applied to the cycles
$\{u$,\,$H_W$,\,$H_{W'}$,\,$u'$,\,$F^{j'}$,\,$F^j\}$. Otherwise, we obtain the
conclusion by applying Lemma~\ref{lem:5cycle} either to cycles
$\{u$,\,$H_W$,\,$u'$,\,$F^{j'}$,\,$F^j\}$ or cycles
$\{u$,\,$F^j$,\,$u'$,\,$H_{W'}$,\,$H_W\}$.~This~proves~(\ref{enum:2}).

For (\ref{enum:3}), consider the vertex $A_i$ for $i\in\{1\ldots n\}$.  Let
$W\in\{v_i,\overline{v_i}\}$ be such that $W=1$.  Then, for each $j\in\Delta_i$,
the vertex $K^j_W$ is adjacent to $B$ by (\ref{enum:1}). Thus, by Lemma
\ref{lem:star}, both $K^j_W$ and $L^j_W$ (if exists) are adjacent to $A_i$.
This proves (\ref{enum:3}).

For (\ref{enum:4}), we consider $1\leq i'\leq i\leq n$.  Let
$W'\in\{v_{i'},\overline{v_{i'}}\}$ be such that $W'=1$.  Then, for all
$j\in\Delta_{i'}$, the vertex $K^j_{W'}$ is adjacent to $B$ by (\ref{enum:1}),
and hence, the vertices $S^j_{W'}$, $K^j_{W'}$ and $L^j_{W'}$ (if exists) are
adjacent by Lemma \ref{lem:star} to $H_{v_{i'}}$, $H_{\overline{v_{i'}}}$.  In
other words, the vertices $H_{v_{i'}}$, $H_{\overline{v_{i'}}}$ are adjacent to
all true knees and true shoulders of the literals $v_{i'}$, $\overline{v_{i'}}$.
Thus, we may assume that~$i'<i$. Now, the vertex $H_{v_{i'}}$ is adjacent to
$H_{v_i}$,$H_{\overline{v_i}}$, since $\delta\in H_{v_i}\cap
H_{\overline{v_i}}\cap H_{v_{i'}}$.  Let $W\in\{v_i,\overline{v_i}\}$ be such
that $W=1$.  Then $K^j_W$ is adjacent to $B$ by (\ref{enum:1}), and hence,
$S^j_W$ is adjacent to $H_{v_i}$, $H_{\overline {v_i}}$ by Lemma \ref{lem:star}.
Also, $S^j_W$ is adjacent to all true knees and true shoulders of the literals
$v_{i'}$, $\overline{v_{i'}}$, by (\ref{enum:2}).  Further, the vertex $S^j_W$
is not adjacent to $H_{v_{i'}}$, since $H_{v_{i'}}|S^j_W$ is in ${\cal Q}_I$.
Thus, by Lemma~\ref{lem:4cycle}, both $H_{v_i}$ and $H_{\overline{v_i}}$ are
adjacent to all true knees and true shoulders of the literals $v_{i'}$,
$\overline{v_{i'}}$.  This proves (\ref{enum:4}).

For (\ref{enum:5}), consider $1\leq j\leq j'\leq m$.  Again, we observe that if
$K^{j'}_W$ is a true knee, then $K^{j'}_W$ is adjacent to $B$ by (\ref{enum:1}),
and hence, $S^{j'}_W$, $K^{j'}_W$, and $L^{j'}_W$ (if exists) are adjacent to
$F^{j'}$ by Lemma \ref{lem:star}. In other words, the vertex $F^{j'}$ is
adjacent to all true knees and true shoulders of the clause ${\cal C}_{j'}$. So,
we may assume that $j<j'$. Now, let $K^j_W$ be any true knee of the clause
${\cal C}_j$.  Then $K^j_W$ is adjacent to $B$, and hence, to $F^j$ by
(\ref{enum:1}) and Lemma \ref{lem:star}, respectively.  Also, $K^j_W$ is
adjacent to all true shoulders and true knees of ${\cal C}_{j'}$ by
(\ref{enum:2}).  Further, $F^j$ is adjacent to $F^{j'}$, since $\mu\in F^j\cap
F^{j'}$, and the vertex $K^j_W$ is not adjacent to $F^{j'}$, since
$K^j_W|F^{j'}$ is in ${\cal Q}_I$. Thus, by Lemma~\ref{lem:4cycle}, the vertex
$F^j$ is adjacent to all true knees and true shoulders of the clause ${\cal
C}_{j'}$. This~proves~(\ref{enum:5}).

For (\ref{enum:6}), let $i\in\{1\ldots n\}$ and consider $j,j'\in\Delta_i$ with
$j\leq j'$. Let $W\in\{v_i,\overline{v_i}\}$ be such that $W=1$.  Observe that
$K^j_W$ is adjacent to $S^j_{\overline W}$, since $\beta^j_{\overline W}\in
S^j_{\overline W}\cap K^j_W$. If $L^j_W$ exists, also $L^j_W$ is adjacent to
$S^j_{\overline W}$, since then $\beta^j_{\overline W}\in S^j_{\overline W}\cap
L^j_W$.  Thus, we may assume that $j<j'$.  Now, $S^{j'}_{\overline W}$ is
adjacent to $S^j_{\overline W}$ and $K^{j'}_W$, since $\alpha_{\overline W}\in
S^j_{\overline W}\cap S^{j'}_{\overline W}$, and $\beta^{j'}_{\overline W}\in
S^{j'}_{\overline W}\cap K^{j'}_W$.  Also, $K^j_W$ and $L^j_W$ (if exists) are
adjacent to $K^{j'}_W$ by (\ref{enum:2}). Further, $S^j_{\overline W} K^{j'}_W$
is not an edge of $G'$, since $S^j_{\overline W}|K^{j'}_W$ is in ${\cal Q}_I$.
Thus, by Lemma~\ref{lem:4cycle}, the vertices $K^j_W$, $L^j_W$ (if exists) are
adjacent to $S^{j'}_{\overline W}$. This proves (\ref{enum:6}).

The proof is now complete.
\end{proof}

\begin{lemma}\label{lem:sigma2}
If $\sigma$ is a satisfying assignment for $I$, then $G^*_\sigma$ is a subgraph
of every chordal sandwich of $(G_\sigma,{\rm forb}({\cal Q}_I))$.
\end{lemma}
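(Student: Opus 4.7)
My plan is to build on Lemma \ref{lem:sigma1}: any chordal sandwich $G'$ of $(G_\sigma, {\rm forb}({\cal Q}_I))$ already contains $G'_\sigma$ as a subgraph, so it suffices to show that the extra edges introduced in step (vii) are forced in $G'$. Since $\sigma$ is a satisfying assignment, every clause ${\cal C}_j = X \vee Y \vee Z$ has exactly one true literal, and by the rotational symmetry between $X$, $Y$, $Z$ in the definition of ${\cal Q}_I$, it is enough to treat case (vii)(a), i.e., $X=1$ and $Y=Z=0$; the other two cases follow by relabeling.

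Before the main derivation I would assemble the edges of $G'$ that are ``free'' beyond those already present in $G'_\sigma$. Lemma \ref{lem:star}(b), applied to each true literal $W$ of the clause ${\cal C}_j$, tells us that $S^j_W$, $K^j_W$, and $L^j_W$ (when it exists) are adjacent in $G'$ to the unique $D^j_p$ with $K^j_{\overline W}\cap D^j_p\neq\emptyset$. Under $X=1$ this produces $S^j_X \sim D^j_1$, $L^j_X \sim D^j_1$, $S^j_{\overline Y} \sim D^j_2$, and $S^j_{\overline Z} \sim D^j_3$. Combined with the $\lambda^j$-clique $\{F^j, D^j_1, D^j_2, D^j_3, K^j_X, K^j_Y, K^j_Z\}$ arising from the shared element $\lambda^j$; the clique on all true shoulders and true knees from step (\ref{enum:2}) of the construction of $G'_\sigma$; the adjacencies of those true vertices with the relevant $H$-, $A$-, $B$-, and $F$-vertices furnished by Lemma \ref{lem:star}; and the many forbidden edges coming from quartets of ${\cal Q}_I$ such as $A_i|B$, $D^j_p|B$, $K^j_{\overline W}|K^j_W$, $K^j_{\overline W}|L^j_W$, and the six clause-local quartets $S^j_Y|K^j_X$, $S^j_Z|K^j_Y$, $S^j_X|K^j_Z$, $S^j_Z|L^j_X$, $S^j_X|L^j_Y$, $S^j_Y|L^j_Z$, we have all the tools we need.

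The derivation then consists of exhibiting, for each edge listed in (vii)(a), a $4$-, $5$-, or $6$-cycle in $G'$ on which Lemma \ref{lem:4cycle}, \ref{lem:5cycle}, or \ref{lem:6cycle} forces the target chord, with the cycle chosen so that its other potential chords are ruled out by edges of ${\rm forb}({\cal Q}_I)$. I would proceed in a carefully chosen order: first the ``peripheral'' edges $D^j_3 S^j_{\overline Y}$, $D^j_2 K^j_{\overline Z}$, $K^j_Y K^j_{\overline Z}$, $F^j L^j_Z$, and $K^j_X L^j_Z$, each of which can be extracted from a short cycle that combines one or two true vertices with one or two vertices of the $\lambda^j$-clique; and then, using these newly derived edges, close off the seven-vertex clique on $\{D^j_1, D^j_2, D^j_3, S^j_X, S^j_{\overline Z}, L^j_Z, K^j_Y\}$ by repeated application of Lemma \ref{lem:4cycle} to 4-cycles whose forbidden diagonals are precisely the quartet-forbidden pairs.

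The main obstacle is purely combinatorial: the edges to be established are numerous and interlock, so one must commit to an ordering in which every cycle invoked has all of its non-chord edges already present in $G'$ (either from $G'_\sigma$, from Lemma \ref{lem:star}, or from a previous step of the current argument). The forbidden pairs $A_i|B$ and $D^j_p|B$ are particularly important here, since they prevent short-cutting through $B$ and thereby force the true shoulders, true knees, and $D$-vertices to be welded together by chords rather than merely by paths that pass through $B$. Once a correct ordering is fixed, each individual step reduces to a mechanical invocation of one of the three chord lemmas, and no ingredients are needed beyond Lemmas \ref{lem:sigma1}, \ref{lem:star}, \ref{lem:4cycle}, \ref{lem:5cycle}, and \ref{lem:6cycle}.
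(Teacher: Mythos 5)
Your overall plan---reduce to the edges of (\ref{enum:7}) via Lemma~\ref{lem:sigma1}, fix $X=1$, $Y=Z=0$ by the rotational symmetry, and then force each remaining edge by exhibiting a cycle to which Lemma~\ref{lem:4cycle}, \ref{lem:5cycle} or \ref{lem:6cycle} applies---is exactly the paper's. But there are two genuine problems. First, your stock of ``free'' edges contains a false one: $L^j_X D^j_1$ is \emph{not} forced. The sets $L^j_X=\{\beta^j_{\overline X},\gamma^j_2\}$ and $D^j_1=\{\gamma^j_1,\lambda^j\}$ are disjoint, this pair is added by none of (\ref{enum:1})--(\ref{enum:7}), and $G^*_\sigma$ is itself a chordal sandwich of $(G_\sigma,{\rm forb}({\cal Q}_I))$ by Lemma~\ref{lem:sigma3}, so no valid argument can derive $L^j_XD^j_1$ in every sandwich. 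The error comes from your reading of Lemma~\ref{lem:star}(b): its statement gives adjacency only to $B$, $A_i$, $H_W$, $H_{\overline W}$, $F^j$ and says nothing about $D$-vertices, and your selection rule ``the unique $D^j_p$ with $K^j_{\overline W}\cap D^j_p\neq\emptyset$'' is not even well defined for $W\in\{\overline Y,\overline Z\}$, since $K^j_Y$ and $K^j_Z$ contain $\lambda^j$, which lies in all three $D^j_p$. (The adjacencies $S^j_XD^j_1$, $S^j_{\overline Y}D^j_2$, $S^j_{\overline Z}D^j_3$ you want are true, but they must be derived inside this proof, as the paper does; they cannot be quoted from Lemma~\ref{lem:star}.)

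Second, and more seriously, the entire content of the lemma is the explicit, ordered sequence of cycle applications, and you never produce it: the assertion that ``once a correct ordering is fixed, each individual step reduces to a mechanical invocation'' of the chord lemmas is precisely what needs to be proved. The one ordering hint you give is not obviously workable: you propose to obtain $D^j_3S^j_{\overline Y}$ first from a short cycle, but the common neighbours of $D^j_3$ and $S^j_{\overline Y}$ available at the outset ($K^j_Y$, $L^j_Y$, $F^j$ and the true knees $K^j_X$, $K^j_{\overline Z}$) contain no pair that is forbidden or otherwise known to be non-adjacent at that stage, so Lemma~\ref{lem:4cycle} does not apply directly; in the paper's derivation this edge comes \emph{last}, from the $4$-cycle $\{D^j_3, L^j_Y, S^j_{\overline Y}, S^j_X\}$, which requires $D^j_3S^j_X$---itself among the last edges obtained. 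So you have identified the right method, but the actual proof (a verified order together with the specific cycles) is missing, and the one concrete shortcut offered in its place is incorrect.
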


\begin{proof}
Let $G'$ be a chordal sandwich of $(G_\sigma,{\rm forb}({\cal Q}_I))$, and
assume that $\sigma$ is a satisfying assignment for $I$. That is, in each clause
${\cal C}_j=X\vee Y\vee Z$, either $X=1$, $Y=Z=0$, or $Y=1$, $X=Z=0$, or $Z=1$,
$X=Y=0$. 

By Lemma \ref{lem:sigma1}, the graph $G'$ contain all edges defined in
(\ref{enum:2})-(\ref{enum:6}). Thus it remains to prove that it also contains
the edges defined in (\ref{enum:7}).

Consider $j\in\{1\ldots m\}$ where ${\cal C}_j=X\vee Y\vee Z$.  By the
rotational symmetry between $X$, $Y$, and $Z$, we may assume that $X=1$, $Y=0$,
and $Z=0$.  Observe that $K^j_Z$ is adjacent to $K^j_X$ and $L^j_Z$, since
$\lambda^j\in K^j_Z\cap K^j_X$ and $\beta^j_{\overline Z}\in K^j_Z\cap L^j_Z$.
Further, $K^j_{\overline X}$ is adjacent to $L^j_Z$ and $S^j_X$, since
$\gamma^j_1\in L^j_Z\cap K^j_X$ and $\beta^j_X\in K^j_{\overline X}\cap S^j_X$.
By (\ref{enum:2}), also $K^j_X$ is adjacent to $S^j_X$.  Moreover, $S^j_X K^j_Z$
and $K^j_X K^j_{\overline X}$ are not edges of $G'$, since $S^j_X|K^j_Z$,
$K^j_X|K^j_{\overline X}$ are in ${\cal Q}_I$.  Thus, by Lemma~\ref{lem:5cycle}
applied to the cycle $\{L^j_Z$, $K^j_Z$, $K^j_X$, $S^j_X$, $K^j_{\overline
X}\}$, we conclude that $L^j_Z$ is adjacent to $S^j_X$ and $K^j_X$.  Now,
observe that $L^j_Y$ is adjacent to $K^j_Y$ and $K^j_{\overline Z}$, since
$\beta^j_{\overline Y}\in L^j_Y\cap K^j_Y$ and $\gamma^j_3\in L^j_Y\cap
K^j_{\overline Z}$. Recall that $K^j_Z$ is adjacent to $L^j_Z$ and also to
$K^j_Y$, since $\lambda^j\in K^j_Z\cap K^j_Y$.  Moreover, $S^j_X$ is adjacent to
$K^j_{\overline Z}$ and $L^j_Z$ by (\ref{enum:2}) and the above.  Further,
$K^j_{\overline Z} L^j_Z$, $S^j_X L^j_Y$, $S^j_X K^j_Z$ are not edges of $G'$,
since $K^j_{\overline Z}|L^j_Z$, $S^j_X|L^j_Y$, $S^j_X|K^j_Z$ are in ${\cal
Q}_I$.  Thus, by Lemma~\ref{lem:6cycle} applied to the cycle $\{K^j_Y$, $L^j_Y$,
$K^j_{\overline Z}$, $S^j_X$, $L^j_Z$, $K^j_Z\}$, we conclude that $K^j_Y$ is
adjacent to $K^j_{\overline Z}$, $S^j_X$, and $L^j_Z$.  Next, observe that
$S^j_{\overline Z}$ is adjacent to $K^j_{\overline Z}$ and $K^j_Z$ by
(\ref{enum:2}) and since $\beta^j_{\overline Z}\in S^j_{\overline Z}\cap K^j_Z$.
Recall that $K^j_Y$ is adjacent to $K^j_{\overline Z}$ and $K^j_Z$. Further,
$K^j_Z K^j_{\overline Z}$ is not an edge of $G'$, since $K^j_Z|K^j_{\overline
Z}$ is in ${\cal Q}_I$.  Thus, by Lemma~\ref{lem:4cycle}, the vertex
$S^j_{\overline Z}$ is adjacent to $K^j_Y$.  Now, recall that $L^j_Z$ is
adjacent to  $S^j_X$ and $K^j_Z$, and $S^j_X K^j_Z$ is not an edge of $G'$.
Also, $F^j$ is adjacent to $S^j_X$ and $K^j_Z$ by (\ref{enum:5}) and since
$\lambda^j\in F^j\cap K^j_Z$. Thus, by Lemma~\ref{lem:4cycle}, the vertex
$L^j_Z$ is adjacent to $F^j$.  Now, observe that $D^j_1$ is adjacent to $K^j_X$,
$K^j_{\overline X}$, since $\lambda^j\in D^j_1\cap K^j_X$ and $\gamma^j_1\in
D^j_1\cap K^j_{\overline X}$.  Recall that also $S_X$ is adjacent to both
$K^j_X$ and $K^j_{\overline X}$, and that $K^j_X K^j_{\overline X}$ is not an
edge of $G'$. Thus, by Lemma~\ref{lem:4cycle}, we have that $D^j_1$ is adjacent
to $S^j_X$.  Next, observe that $D^j_2$ is adjacent to $K^j_Y$, $K^j_{\overline
Y}$, since $\lambda^j\in D^j_2\cap K^j_Y$ and $\gamma^j_2\in D^j_2\cap
K^j_{\overline Y}$.  Recall that $K^j_Y$ is adjacent to $K^j_{\overline Z}$ and
$S^j_X$. Also, $K^j_{\overline Y}$ is adjacent to $S^j_X$, $S^j_{\overline Y}$,
$K^j_{\overline Z}$ by (\ref{enum:2}), and $K^j_Y$ is adjacent to $S^j_{\overline
Y}$, since $\beta^j_{\overline Y}\in K^j_Y\cap S^j_{\overline Y}$.  Further,
$K^j_Y K^j_{\overline Y}$ is not an edge of $G'$, since $K^j_Y|K^j_{\overline
Y}$ is in ${\cal Q}_I$. Thus, by Lemma~\ref{lem:4cycle}, the vertices $S^j_X$,
$S^j_{\overline Y}$, $K^j_{\overline Z}$ are adjacent to $D^j_2$.  Now, observe
that $D^j_1$, $D^j_2$ are adjacent to $K^j_Z$, since $\lambda^j\in D^j_1\cap
D^j_2\cap K^j_Z$. Also, recall that $S^j_X$ is adjacent to $D^j_1$, $D^j_2$,
$L^j_Z$, the vertex $K^j_Z$ is adjacent to $S^j_{\overline Z}$, $L^j_Z$, and
$S^j_X K^j_Z$ is not an edge of $G'$.  Further, $S^j_X$ is adjacent to
$S^j_{\overline Z}$ by (\ref{enum:2}).  Thus, by Lemma~\ref{lem:4cycle}, both
$D^j_1$ and $D^j_2$ are adjacent to $S^j_{\overline Z}$ and $L^j_Z$.
Next, observe that $D^j_3$ is adjacent to $K^j_Z$, $K^j_{\overline Z}$, since
$\lambda^j\in D^j_3\cap K^j_Z$ and $\gamma^j_3\in D^j_3\cap K^j_{\overline Z}$.
Recall that also $S^j_{\overline Z}$ is adjacent to $K^j_Z$, $K^j_{\overline
Z}$, and that $K^j_Z K^j_{\overline Z}$ is not an edge of $G'$.  Thus, by
Lemma~\ref{lem:4cycle}, the vertex $D^j_3$ is adjacent to $S^j_{\overline Z}$.
Further, recall that $L^j_Z$ is adjacent to $K^j_Z$, $S^j_X$, the vertex
$K^j_{\overline Z}$ is adjacent to $S^j_X$, and $S^j_X K^j_Z$ and
$K^j_{\overline Z} L^j_Z$ are not edges of $G'$.  Thus, Lemma~\ref{lem:5cycle}
applied to $\{D^j_3$, $K^j_Z$, $L^j_Z$, $S^j_X$, $K^j_{\overline Z}\}$ yields
that $D^j_3$ is adjacent to both $L^j_Z$ and $S^j_X$. Moveover, $S^j_{\overline
Y}$ is also adjacent to $S^j_X$ by (\ref{enum:2}), and $L^j_Y$ is also adjacent
to $D^j_3$, $S^j_{\overline Y}$, since $\gamma^j_3\in D^j_3\cap L^j_Y$ and
$\beta^j_{\overline Y}\in S^j_{\overline Y}\cap L^j_Y$. Further, recall that
$S^j_X L^j_Y$ is not an edge of $G'$. Thus, by Lemma~\ref{lem:4cycle} applied to
$\{D^j_3$, $L^j_Y$, $S^j_{\overline Y}$, $S^j_X\}$, the vertex $D^j_3$ is
adjacent to $S^j_{\overline Y}$.

To prove (\ref{enum:7}), we observe that the above analysis yields that $G'$
contains edges $F^j L^j_Z$, $K^j_X L^j_Z$, $K^j_Y K^j_{\overline Z}$, $D^j_2
K^j_{\overline Z}$, $D^j_2 S^j_{\overline Y}$, and $D^j_3 S^j_{\overline Y}$.
It remains to show that $\{D^j_1$,~$D^j_2$, $D^j_3$, $S^j_X$, $S^j_{\overline
Z}$, $L^j_Z$, $K^j_Y\}$ forms a complete graph.  By the above paragraph, we have
that $S^j_X$, $S^j_{\overline Z}$, $L^j_Z$ are adjacent to $D^j_1$, $D^j_2$,
$D^j_3$. Also, $D^j_1$, $D^j_2$, $D^j_3$ and $K^j_Y$ are pair-wise adjacent,
since $\lambda^j\in D^j_1\cap D^j_2\cap D^j_3\cap K^j_Y$. Further, $L^j_Z$ is
adjacent to $S^j_X$, and $K^j_Y$ is adjacent to $S^j_X$, $S^j_{\overline Z}$,
$L^j_Z$, by the above paragraph.  Finally, $S^j_{\overline Z}$ is adjacent to
$S^j_X$ and $L^j_Z$ by (\ref{enum:2}) and since $\beta^j_{\overline Z}\in
S^j_{\overline Z}\cap L^j_Z$.  This proves~(\ref{enum:7}).

The proof is now complete.
\end{proof}

\begin{lemma}\label{lem:sigma3}
If $\sigma$ is a satisfying assignment for $I$, then $G^*_\sigma$ is chordal.
\end{lemma}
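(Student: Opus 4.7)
The plan is to prove chordality of $G^*_\sigma$ by exhibiting a perfect elimination ordering; by the Fulkerson--Gross characterisation, existence of such an ordering is equivalent to chordality. The ordering is suggested by the auxiliary tree $T_I$ of Section \ref{sec:unique-trees}: each cell $A=\{x,y\}$ is naturally ``positioned'' along the path in $T_I$ between the leaves $\phi_\sigma(x)$ and $\phi_\sigma(y)$, and the strategy is to eliminate cells whose paths reach deepest into some gadget ${\cal A}_i$ or ${\cal B}_j$ first, working inward toward the backbone $y_0 y_1 \ldots y_n u_1 \ldots u_m u_0$.

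Concretely, I would first process the clauses. Fix ${\cal C}_j=X\vee Y\vee Z$ and, using the rotational symmetry between $X,Y,Z$, assume $X=1$, $Y=Z=0$. Rule (vii) then supplies the explicit $7$-clique $\{D^j_1,D^j_2,D^j_3,S^j_X,S^j_{\overline Z},L^j_Z,K^j_Y\}$ together with a few additional triangles ($F^j L^j_Z$, $K^j_X L^j_Z$, $K^j_Y K^j_{\overline Z}$, $D^j_2 K^j_{\overline Z}$, $D^j_2 S^j_{\overline Y}$, $D^j_3 S^j_{\overline Y}$). One first eliminates the false knees and shoulders of ${\cal C}_j$ in a carefully chosen order: each becomes simplicial at its turn precisely because rule (vii) has been tuned to supply the missing chords among its surviving neighbours. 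Next the true knees, true shoulders and $D$-vertices of ${\cal C}_j$ are eliminated; after all clause-internal cells have been removed, rules (ii) and (v) reduce their neighbourhoods to subcliques of $\{B,F^1,\ldots,F^m\}\cup\{\text{true knees and true shoulders}\}$. After all clauses have been processed, I would handle each variable $v_i$ similarly: eliminate the false shoulders $S^j_{v_i}$ or $S^j_{\overline{v_i}}$ (simplicial by rule (vi)), then $A_i$, then $H_{v_i}$ and $H_{\overline{v_i}}$ (simplicial by (iii) and (iv)). Finally, the remaining cells $B,F^1,\ldots,F^m$ form a clique by (v) and may be eliminated in any order.

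The main obstacle is the bookkeeping inside a satisfied clause gadget after rule (vii). The asymmetry between the satisfied literal and the two falsified ones means that one must carefully determine, for each false knee or shoulder of ${\cal C}_j$, exactly which edges it has at the moment of its elimination and verify that those form a clique. This is a finite check on a constant-size subgraph, and by the rotational symmetry in the construction it suffices to carry it out for a single choice of satisfied literal. The analogous checks for the variable gadgets and the backbone are essentially routine, because rules (ii)--(vi) were designed precisely so that at the moment an $A$-, $H$-, or $F$-vertex is eliminated, its remaining neighbourhood is exactly one of the designated cliques. Chordality of $G^*_\sigma$ then follows from the Fulkerson--Gross characterisation.
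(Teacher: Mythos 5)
Your overall strategy---exhibit a perfect elimination ordering---is exactly the paper's, and your grouping of the vertices is close to its partition. But the macro-structure of your ordering has a genuine flaw: you propose to eliminate the true knees, true shoulders and $D$-vertices of each clause during the per-clause phase, \emph{before} the variable phase, claiming their neighbourhoods then reduce to subcliques of $\{B,F^1,\ldots,F^m\}\cup\{\text{true knees and true shoulders}\}$. That claim is false. A true shoulder $S^j_W$ contains $\alpha_W$ and is therefore adjacent to $A_{i}$ (where $W\in\{v_i,\overline{v_i}\}$), and it is adjacent to $B$ by rule (i); but $A_iB$ is a forbidden edge ($A_i|B\in{\cal Q}_I$), so $N(S^j_W)$ contains a non-adjacent pair and $S^j_W$ is not simplicial until $A_i$ has been removed. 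The same obstruction kills the true knees: $K^j_W$ is adjacent to $A_i$ by rule (iii) and to $B$ by rule (i); moreover $K^{j}_W$ is adjacent to $H_{v_{i'}}$ (rule (iv)) and to $F^{j'}$ with $j'\leq j$ (rule (v)), and $H_{v_{i'}}F^{j'}$ is forbidden. So the true knees and shoulders can only be eliminated \emph{after} all $A$-, $H$- and $F$-vertices, i.e.\ at the very end together with $B$, where rules (i) and (ii) make them a clique. This is precisely what the paper does: its order is (false knees and $D$-vertices, clause by clause), then false shoulders, then $A$-vertices, then $H$-vertices followed by $F^m,\ldots,F^1$, and finally $\{$true knees, true shoulders, $B\}$.

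A second, smaller problem is your sub-ordering within a clause: you defer all $D$-vertices until after all false knees. With ${\cal C}_j=X\vee Y\vee Z$ and $X=1$, the false knee $K^j_Y$ is adjacent both to $D^j_1$ (via $\lambda^j$) and to the true shoulder $S^j_{\overline Y}$ (via $\beta^j_{\overline Y}$), yet $D^j_1S^j_{\overline Y}$ is not an edge of $G^*_\sigma$ (they share no element and rule (vii) adds only $D^j_2S^j_{\overline Y}$ and $D^j_3S^j_{\overline Y}$). Since $S^j_{\overline Y}$ survives to the end, $D^j_1$ must be eliminated \emph{before} $K^j_Y$; the $D$-vertices have to be interleaved with the false knees (the paper's order is $K^j_{\overline X}$, $K^j_Z$, $L^j_Y$, $L^j_Z$, $D^j_1$, $K^j_Y$, $D^j_3$, $D^j_2$). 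So the part you dismiss as a deferred finite check does not go through with the sub-ordering you state, and the placement of the true knees and shoulders is not a bookkeeping issue but a structural one.
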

\begin{proof}
Again, assume that $\sigma$ is a satisfying assignment for $I$. That is, for each
clause ${\cal C}_j=X\vee Y\vee Z$, either $X=1$, $Y=Z=0$, or $Y=1$, $X=Z=0$, or
$Z=1$, $X=Y=0$.
Consider the following partition $V_1\cup V_2\cup V_3\cup V_4\cup V_5$ of
$V(G^*_\sigma)$ where
$V_1=\{$false knees, $D$-vertices$\}$,
$V_2=\{$false shoulders$\}$,
$V_3=\{A$-vertices$\}$,
$V_4=\{H$-vertices, $F$-vertices$\}$, and
$V_5=\{$true knees, true shoulders,~the~vertex~$B\}$.

Let $\pi$ be an enumeration of $V(G^*_\sigma)$ constructed by listing the elements
of $V_1$, $V_2$, $V_3$, $V_4$, $V_5$ in that order such that:
\begin{enumerate}[($\bullet$)]
\item the elements of $V_1$ are listed by considering each clause ${\cal C}_j=X\vee
Y\vee Z$ and listing vertices (based on the truth assignment) as follows:

\begin{enumerate}[a)]
\item if $X=1$, then list $K^j_{\overline X}$, $K^j_Z$, $L^j_Y$, $L^j_Z$,
$D^j_1$, $K^j_Y$, $D^j_3$, $D^j_2$ in that order,

\item if $Y=1$, then list $K^j_{\overline Y}$, $K^j_X$, $L^j_Z$, $L^j_X$,
$D^j_2$, $K^j_Z$, $D^j_1$, $D^j_3$ in that order,

\item if $Z=1$, then list $K^j_{\overline Z}$, $K^j_Y$, $L^j_X$, $L^j_Y$,
$D^j_3$, $K^j_X$, $D^j_2$, $D^j_1$ in that order,
\end{enumerate}
\item the elements of $V_2$ (the false shoulders) are listed by listing the
false shoulders of the clauses ${\cal C}_1$, ${\cal C}_2$, \ldots, ${\cal C}_m$
in that order,

\item the elements of $V_4$ are listed as follows: first the vertices
$H_{v_1}$,$H_{\overline{v_1}}$, $H_{v_2}$,$H_{\overline{v_2}}$, \ldots
$H_{v_n}$,$H_{\overline{v_n}}$ in that order, then $F^m$, $F^{m-1}$, \ldots,
$F^1$ in that order,

\item the elements of $V_3$ and $V_5$ are listed in any order.
\end{enumerate}

\noindent We show that $\pi$ is a perfect elimination ordering of $G^*_\sigma$ which
implies the claim.

First, consider $V_1$. Let $j\in\{1\ldots m\}$ and let ${\cal C}_j=X\vee Y\vee
Z$. By the rotational symmetry between $X$, $Y$, $Z$, assume that $X=1$ and
$Y=Z=0$. So, $\pi$ lists the false knees and $D$-vertices of ${\cal C}_j$ as
$K^j_{\overline X}$, $K^j_Z$, $L^j_Y$, $L^j_Z$, $D^j_1$, $K^j_Y$, $D^j_3$,
$D^j_2$.

First, consider the vertex $K^j_{\overline X}$. Recall that $K^j_{\overline
X}=\{\beta^j_X, \gamma^j_1\}$. Observe that $S^j_X$ is the only other vertex
containing $\beta^j_X$, and $L^j_Z$, $D^j_1$ are the only other vertices
containing $\gamma^j_1$. Moreover, none of the rules
(\ref{enum:1})-(\ref{enum:7}) adds edges incident to $K^j_{\overline X}$.  Thus,
$S^j_X$, $L^j_Z$, $D^j_1$ are the only neighbours of $K^j_{\overline X}$, and
they are pair-wise adjacent by (\ref{enum:7}). This proves that $K^j_{\overline
X}$ is indeed simplicial in $G^*_\sigma$.

Next, consider $K^j_Z$. Since $K^j_Z=\{\beta^j_Z, \lambda^j\}$, we conclude that
$K^j_Z$ is adjacent to $S^j_{\overline Z}$, $L^j_Z$, $K^j_X$, $K^j_Y$, $D^1_j$,
$D^2_j$, $D^3_j$, and $F^j$. Moreover, $K^j_Z$ has no other neighbours by
observing the rules (\ref{enum:1})-(\ref{enum:7}).  Now, by (\ref{enum:7}), we
conclude that $S^j_{\overline Z}$, $L^j_Z$, $K^j_Y$, $D^j_1$, $D^j_2$, $D^j_3$
are pair-wise adjacent.  Also, the vertices $F^j$, $K^j_X$, $K^j_Y$, $D^j_1$,
$D^j_2$, $D^j_3$ are pair-wise adjacent, since they all contain $\lambda^j$.
Further, $F^j$ is adjacent to $S^j_{\overline Z}$ and $L^j_Z$ by (\ref{enum:5})
and (\ref{enum:7}), respectively, and $K^j_X$ is adjacent to $S^j_{\overline Z}$
and $L^j_Z$ by (\ref{enum:2}) and (\ref{enum:7}), respectively.  This proves
that $K^j_Z$ is simplicial in $G^*_\sigma$.

Now, consider $L^j_Y$. The neighbours of $L^j_Y$ are $S^j_{\overline Y}$,
$K^j_Y$, $K^j_{\overline Z}$, and $D^j_3$.  So, $S^j_{\overline Y}$ is adjacent
to $K^j_{\overline Z}$, $D^j_3$,  and $K^j_Y$  by (\ref{enum:2}),
(\ref{enum:7}), and since $\beta^j_{\overline Y}\in S^j_{\overline Y}\cap
K^j_Y$.  Similarly, $K^j_Y$ is adjacent to $K^j_{\overline Z}$ and $D^j_3$ by
(\ref{enum:7}) and since $\lambda^j\in K^j_Y\cap D^j_3$. Finally,
$K^j_{\overline Z}$ is adjacent to $D^j_3$, since $\gamma^j_3\in K^j_{\overline
Z}\cap D^j_3$. Thus $L^j_Y$ is simplicial in $G^*_\sigma$.

Next, consider $L^j_Z$. The neighbours of $L^j_Z$ are $F^j$, $K^j_X$, $K^j_Y$,
$K^j_Z$, $D^j_1$, $D^j_2$, $D^j_3$, $S^j_X$, $S^j_{\overline Z}$, and
$K^j_{\overline X}$. By (\ref{enum:7}), the vertices $D^j_1$, $D^j_2$, $D^j_3$,
$S^j_X$, $S^j_{\overline Z}$, $K^j_Y$ are pair-wise adjacent. Also, $F^j$,
$K^j_X$, $K^j_Y$, $D^j_1$, $D^j_2$, $D^j_3$ are pair-wise adjacent, since they
all contain $\lambda_j$.  Further,  $K^j_X$ and $F^j$ are adjacent to $S^j_X$,
$S^j_{\overline Z}$ by (\ref{enum:2}) and (\ref{enum:5}), respectively. This
proves that $L^j_Z$ is simplicial in $G^*_\sigma-\{K^j_{\overline X}, K^j_Z\}$.

Now, consider $D^j_1$. The neighbours of $D^j_1$ are $F^j$, $K^j_X$, $K^j_Y$,
$K^j_Z$, $D^j_2$, $D^j_3$, $S^j_X$, $S^j_{\overline Z}$, $L^j_Z$, and
$K^j_{\overline X}$. By (\ref{enum:7}), the vertices $D^j_2$, $D^j_3$, $S^j_X$,
$S^j_{\overline Z}$, $K^j_Y$ are pair-wise adjacent.  Also, $F^j$, $K^j_X$,
$K^j_Y$, $D^j_2$, $D^j_3$ are pair-wise adjacent, since they all
contain~$\lambda^j$.  Further, $K^j_X$ and $F^j$ are adjacent to $S^j_X$,
$S^j_{\overline Z}$ by (\ref{enum:2}) and (\ref{enum:5}), respectively. This
proves that $D^j_1$ is simplicial in $G^*_\sigma-\{K^j_{\overline X}, K^j_Z,
L^j_Z\}$.

Next, consider $K^j_Y$. The neighbours of $K^j_Y$ are $F^j$, $K^j_X$, $K^j_Z$,
$D^j_1$, $D^j_2$, $D^j_3$, $S^j_X$, $S^j_{\overline Y}$, $S^j_{\overline Z}$,
$K^j_{\overline Z}$, $L^j_Y$, and $L^j_Z$.  By (\ref{enum:7}), the vertices
$D^j_2$, $D^j_3$, $S^j_X$, $S^j_{\overline Z}$ are pair-wise adjacent. Also,
$F$, $K^j_X$, $D^j_2$, $D^j_3$ are pair-wise adjacent, since they all contain
$\lambda^j$. Further, by (\ref{enum:2}), the vertices $S^j_X$, $S^j_{\overline
Y}$, $S^j_{\overline Z}$, $K^j_X$, and $K^j_{\overline Z}$ are pair-wise
adjacent, and are adjacent to $F^j$ by (\ref{enum:5}).  Moreover, by
(\ref{enum:7}), both $S^j_{\overline Y}$ and $K^j_{\overline Z}$ are adjacent
$D^j_2$, and are also adjacent to $D^j_3$ by  (\ref{enum:7}) and since
$\gamma^j_3\in K^j_{\overline Z}\cap D^j_3$, respectively.  This proves that
$K^j_Y$ is simplicial in $G^*_\sigma-\{K^j_Z, L^j_Y, L^j_Z, D^j_1\}$.

Now, consider $D^3_j$. The neighbours of $D^3_j$ are $F^j$, $K^j_X$, $K^j_Y$,
$K^j_Z$, $D^j_1$, $D^j_2$, $S^j_X$, $S^j_{\overline Y}$, $S^j_{\overline Z}$,
$K^j_{\overline Z}$, $L^j_Z$, and $L^j_Y$.  By (\ref{enum:2}), the vertices
$S^j_X$, $S^j_{\overline Y}$, $S^j_{\overline Z}$, $K^j_X$, $K^j_{\overline Z}$ are
pair-wise adjacent. Also, $F^j$, $K^j_X$, $D^j_2$ are pair-wise adjacent, since
they all contain $\lambda^j$. Further, $F^j$ and $D^j_2$ are adjacent to 
$S^j_X$, $S^j_{\overline Y}$, $S^j_{\overline Z}$, $K^j_{\overline Z}$ by
(\ref{enum:5}) and (\ref{enum:7}), respectively. Thus $D^3_j$ is simplicial in
$G^*_\sigma-\{K^j_Z, L^j_Y, L^j_Z, D^j_1, K^j_Y\}$.

Finally, consider $D^2_j$. The neighbours of $D^2_j$ are $F^j$, $K^j_X$,
$K^j_Y$, $K^j_Z$, $D^1_j$, $D^3_j$, $S^j_X$, $S^j_{\overline Y}$,
$S^j_{\overline Z}$, $K^j_{\overline Z}$, $K^j_{\overline Y}$, $L^j_X$ and
$L^j_Z$. By (\ref{enum:2}), the vertices $S^j_X$, $S^j_{\overline Y}$,
$S^j_{\overline Z}$, $K^j_X$, $L^j_X$, $K^j_{\overline Y}$, $K^j_{\overline Z}$
are pair-wise adjacent, and are adjacent to $F$ by (\ref{enum:5}).  Thus $D^2_j$
is simplicial in $G^*_\sigma-\{K^j_Z, L^j_Z, D^j_1, K^j_Y, D^j_3\}$. This concludes
the vertices in $V_1$.

We now consider $V_2$. Let $j\in\{1\ldots m\}$ and consider a false shoulder
$S^j_W$ for some $W=0$. Let $i$ be such that $W=v_i$ or $W=\overline{v_i}$.
Then the neighbours of $S^j_W$ are the vertices $H_W$, $A_i$, and the elements
of the following sets:

${\cal S}^-=\{S^{j'}_W~|~j'\in\Delta_i~{\rm and}~j'<j\}$
\qquad
${\cal S}^+=\{S^{j'}_W~|~j'\in\Delta_i~{\rm and}~j<j'\}$

${\cal K}^-=\{K^{j'}_{\overline W}, L^{j'}_{\overline W}~{\rm (if~exists)}~|~
j'\in\Delta_i~{\rm and}~j'\leq j\}$

\noindent By (\ref{enum:2}), the elements of ${\cal K}^-$ are pair-wise
adjacent.  Similarly, the elements of $\{H_W$, $A_i\}\cup{\cal S}^+$ are pair-wise
adjacent, since they all contain $\alpha_W$.  Further, each element of ${\cal
S}^+$ is adjacent to every element of ${\cal K}^-$ by (\ref{enum:6}), and each
element of~${\cal K}^-$ is adjacent to $A_i$ and $H_W$ by (\ref{enum:3}) and
(\ref{enum:4}), respectively. This proves that $S^j_W$ is simplicial in
$G^*_{\sigma}-{\cal S}^-$.  Finally, note that the elements of ${\cal S}^-$ are false
shoulders in clauses ${\cal C}_1$, \ldots, ${\cal C}_{j-1}$.  This concludes
the elements of $V_2$.

For $V_3$, let $i\in\{1\ldots n\}$ and consider the vertex $A_i$.  The
neighbours of $A_i$ are the vertices $H_{v_i}$, $H_{\overline{v_i}}$, all
shoulders of the literals $v_i$, $\overline{v_i}$, and all true knees
of~$v_i$,\,$\overline{v_i}$. By (\ref{enum:2}), the true knees and true
shoulders of $v_i$, $\overline{v_i}$ are pair-wise adjacent, and are adjacent to
both $H_{v_i}$ and $H_{\overline{v_i}}$ by (\ref{enum:4}). Also, $H_{v_i}$ is
adjacent to $H_{\overline{v_i}}$, since $\delta\in H_{v_i}\cap
H_{\overline{v_i}}$. Thus $A_i$ is simplicial in $G^*_\sigma-V_2$. This
concludes $V_3$.

Now, we consider $V_4$. Let $i\in\{1\ldots n\}$ and consider $H_{v_i}$,
$H_{\overline{v_i}}$.  The vertices $H_{v_i}$, $H_{\overline{v_i}}$ are adjacent
to the vertices $B$, $A_i$, the elements of the following sets

${\cal H}^-=\{ H_{v_{i'}}, H_{\overline{v_{i'}}}~|~i'<i\}$
\qquad
${\cal H}^+=\{ H_{v_{i'}}, H_{\overline{v_{i'}}}~|~i<i'\}$

\noindent and all true knees, true shoulders of $v_{i'}$, $\overline{v_{i'}}$
for all $i'\in\{1\ldots i\}$. Further, $H_{v_i}$ is adjacent to
$H_{\overline{v_i}}$, to all shoulders of $v_i$ and to no other vertices,
whereas $H_{\overline{v_i}}$ is adjacent $H_{v_i}$, to all shoulders of
$\overline{v_i}$ and to no other vertices.  Now, by (\ref{enum:2}), the true
knees and true shoulders of $v_{i'}$, $\overline{v_{i'}}$ for all
$i'\in\{1\ldots i\}$, are pair-wise adjacent, and are adjacent to $B$ and each
element of ${\cal H}^+$ by (\ref{enum:1}) and (\ref{enum:4}), respectively.
Also, the elements of $\{B\}\cup{\cal H}^+$ are pair-wise adjacent, since they
all contain $\delta$.  Finally, observe that the false shoulders of $v_i$,
$\overline{v_i}$ belong to $V_2$. This proves that both $H_{v_i}$ and
$H_{\overline{v_i}}$ are simplicial in $G^*_\sigma-(V_2\cup V_3\cup {\cal H}^-)$
as required.

Next, let $j\in\{1\ldots m\}$ and consider $F^j$. Let ${\cal C}_j=X\vee Y\vee
Z$, and by the rotational symmetry, assume that $X=1$ and $Y=Z=0$.  Then the
neighbours of $F^j$ are $B$, $K^j_Y$, $K^j_Z$, $D^j_1$, $D^j_2$,
$D^j_3$, $L^j_Z$, the elements of the following sets

${\cal F}^-=\{F^{j'}~|~j'<j\}$\qquad ${\cal F}^+=\{F^{j'}~|~j<j'\}$

\noindent and all true knees and true shoulders of the clause ${\cal C}_{j'}$
for all $j'\in\{j\ldots m\}$.  By~(\ref{enum:2}), the true knees and true
shoulders of the clause ${\cal C}_{j'}$ for all $j'\in\{j\ldots m\}$, are
pair-wise adjacent, and are adjacent to $B$ and each elements of ${\cal F}^-$ by
(\ref{enum:1}) and (\ref{enum:5}), respectively.
Also, the vertices of $\{B\}\cup{\cal F}^-$ are pair-wise adjacent, since they
all contain $\mu$. Thus $F^j$ is simplicial in $G^*_\sigma-(V_1\cup{\cal
F}^+)$. This concludes $V_4$.

Finally, observe that all vertices of $V_5$ are pair-wise adjacent by
(\ref{enum:1}) and (\ref{enum:2}).
That concludes the proof.
\end{proof}

\begin{lemma}\label{lem:sigma4}
For every chordal sandwich $G'$ of $({\rm int^*}({\cal Q}_I),{\rm forb}({\cal
Q}_I))$, there is $\sigma$ such that $G_\sigma$ is a subgraph of $G'$, and
such that $\sigma$ is a satisfying assignment~for~$I$.
\end{lemma}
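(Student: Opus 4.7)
The plan is in four steps: construct $\sigma$ from $G'$, show $G_\sigma \subseteq G'$, invoke Lemma \ref{lem:sigma1} to get $G'_\sigma \subseteq G'$, and then verify that $\sigma$ is $1$-in-$3$ satisfying clause by clause.

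For the first three steps I will apply Lemma \ref{lem:star}(a) to each variable $v_i$ to choose a literal $W_i \in \{v_i, \overline{v_i}\}$ for which $K^j_{W_i}$ is adjacent to $B$ in $G'$ for every $j \in \Delta_i$, and define $\sigma$ by $W_i = 1$. The edges of $G_\sigma$ not already in ${\rm int}^*({\cal Q}_I)$ are exactly the edges from $B$ to the true knees and true shoulders (including $B\sim L^j_{W_i}$ when $L^j_{W_i}$ exists, since such an $L$-vertex is a true knee); the knee edges hold by the choice of $W_i$, and Lemma \ref{lem:star}(b) applied at each true knee $K^j_{W_i}$ supplies the corresponding shoulder and $L$-knee edges. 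Hence $G_\sigma \subseteq G'$, so $G'$ is a chordal sandwich of $(G_\sigma, {\rm forb}({\cal Q}_I))$, and Lemma \ref{lem:sigma1} delivers $G'_\sigma \subseteq G'$. In particular, rule (\ref{enum:2}) forces the set of all true knees and true shoulders to form a clique in $G'$.

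For the last step, fix a clause ${\cal C}_j = X \vee Y \vee Z$ and let $r$ be the number of literals in $\{X, Y, Z\}$ assigned $1$ by $\sigma$. If $r = 3$, the pair $S^j_X, K^j_Z$ lies in the clique, contradicting $S^j_X|K^j_Z \in {\cal Q}_I$. If $r = 2$, the rotational symmetry lets me assume $X = Y = 1, Z = 0$, whence $S^j_X, L^j_Y$ are in the clique, contradicting $S^j_X|L^j_Y \in {\cal Q}_I$. For $r = 0$ I will argue more carefully. A preliminary observation is that $K^j_W$ and $K^j_{\overline W}$ can never both be adjacent to $B$: the 4-cycle $B, K^j_W, D^j_p, K^j_{\overline W}$ through their common ${\rm int}^*({\cal Q}_I)$-neighbour $D^j_p$ (via $\gamma^j_p$ and $\lambda^j$) has only two candidate chords, $BD^j_p$ and $K^j_W K^j_{\overline W}$, and both lie in ${\rm forb}({\cal Q}_I)$; this makes $W_i$ uniquely determined from $G'$. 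In the $r = 0$ sub-case I iterate the same 4-cycle/chordality trick to force further fill-in edges: e.g.\ the 4-cycle $K^j_{\overline X}, D^j_1, K^j_X, S^j_{\overline X}$ (with cycle edge $K^j_{\overline X} S^j_{\overline X}$ supplied by the clique) and forbidden chord $K^j_{\overline X} K^j_X$ compels $D^j_1 \sim S^j_{\overline X}$ in $G'$, and symmetrically $D^j_2 \sim S^j_{\overline Y}$ and $D^j_3 \sim S^j_{\overline Z}$. Combining these new edges with the clique and the triangle-plus-pendants structure of the clause gadget (Figure \ref{fig:1}), together with Lemma \ref{lem:6cycle} applied to suitable 6-cycles such as $K^j_{\overline X}, L^j_Z, K^j_Z, K^j_X, L^j_X, K^j_{\overline Y}$, eventually produces a cycle in $G'$ whose every candidate chord is in ${\rm forb}({\cal Q}_I)$, the desired contradiction.

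The hard part will be the $r = 0$ sub-case: unlike $r = 2$ or $r = 3$, no pair of vertices in the true-knee-shoulder clique is itself in ${\rm forb}({\cal Q}_I)$, so the contradiction must be extracted from an extended cycle produced by iteratively forced fill-in edges rather than from a one-step clique/forb clash.
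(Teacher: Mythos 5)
Your set-up is exactly the paper's: choose $W_i$ by Lemma \ref{lem:star}(a), use Lemma \ref{lem:star}(b) to get all the $B$--(true shoulder/knee) edges of rule (\ref{enum:1}), conclude $G_\sigma\subseteq G'$, and invoke Lemma \ref{lem:sigma1} so that rule (\ref{enum:2}) makes the true knees and true shoulders a clique in $G'$. The cases $r=3$ and $r=2$ are then correctly dispatched by exhibiting a pair from that clique that is a cell pair of some quartet in ${\cal Q}_I$ (e.g.\ $S^j_X,K^j_Z$ and $S^j_X,L^j_Y$), again just as in the paper.

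The genuine gap is the $r=0$ case, which you yourself flag as ``the hard part'' and then do not carry out. The deductions you do make there are correct but do not close the argument: the forced edges $D^j_1S^j_{\overline X}$, $D^j_2S^j_{\overline Y}$, $D^j_3S^j_{\overline Z}$ are never used again, and the one 6-cycle you name, $K^j_{\overline X},L^j_Z,K^j_Z,K^j_X,L^j_X,K^j_{\overline Y}$, does \emph{not} have all candidate chords in ${\rm forb}({\cal Q}_I)$ (only $K^j_{\overline X}K^j_X$ and $K^j_{\overline X}L^j_X$ are forbidden; chords such as $L^j_ZK^j_X$, $K^j_ZL^j_X$, $K^j_XK^j_{\overline Y}$ are perfectly legal), so chordality of $G'$ yields no contradiction from it. The phrase ``eventually produces a cycle whose every candidate chord is forbidden'' is a promissory note, and in fact no such cycle is what does the job in the paper either. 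The paper's argument is a two-step one: first it shows that neither $K^j_X$ nor $L^j_X$ can be adjacent to $K^j_{\overline Z}$, because either adjacency, fed into Lemma \ref{lem:5cycle} or \ref{lem:6cycle} on the cycle through $K^j_{\overline X},L^j_Z,K^j_Z$, would force the forbidden edge $K^j_XK^j_{\overline X}$; then it applies Lemma \ref{lem:6cycle}(b) to the 6-cycle $K^j_X,L^j_X,K^j_{\overline Y},K^j_{\overline Z},L^j_Y,K^j_Y$ (using the forbidden chords $K^j_{\overline Y}L^j_Y$ and $K^j_{\overline Y}K^j_Y$ together with the just-established non-adjacency $L^j_XK^j_{\overline Z}$) to force the edge $K^j_XK^j_{\overline Z}$, contradicting the first step. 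Until you supply an argument of this kind, the lemma's essential content --- that the assignment read off from $G'$ is one-in-three satisfying --- is unproved in the only case that is not immediate.
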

\begin{proof}
By Lemma \ref{lem:star}, for each $i\in\{1\ldots n\}$, there is
$W\in\{v_i,\overline{v_i}\}$ such that for all $j\in\Delta_i$, the vertices
$S^j_W$, $K^j_W$, and $L^j_W$(if exists) are adjacent to $B$. Set
$\sigma(v_i)=1$ if $W=v_i$, and otherwise set $\sigma(v_i)=0$.  For such a
mapping $\sigma$, the graph $G'$ clearly contains all edges of $G_\sigma$. Thus,
by Lemma \ref{lem:sigma2}, the graph $G'_\sigma$ is a subgraph of $G'$, that is,
$G'$ contains the edges defined in~(\ref{enum:2})-(\ref{enum:6}).

It remains to prove that $\sigma$ is a satisfying assignment for $I$.  Let
$j\in\{1\ldots m\}$ and consider the clause ${\cal C}_j=X\vee Y\vee Z$.  If
$X=Y=1$, then the vertex $S^j_Y$ is a true shoulder, and $K^j_X$ is a true knee.
Thus, by (\ref{enum:2}), we conclude that $S^j_Y$ is adjacent $K^j_X$. However,
this is impossible, since $S^j_Y|K^j_X$ is in ${\cal Q}_Y$.  Similarly, if
$X=Z=1$, we have that $S^j_X$ is adjacent to $K^j_Z$ by (\ref{enum:2}) while
$S^j_X|K^j_Z$ is in ${\cal Q}_I$, and if $Y=Z=1$, then $S^j_Z$ is adjacent to
$K^j_Y$ by (\ref{enum:2}) while $S^j_Z|K^j_Y$ is in ${\cal Q}_I$.

Now, suppose that $X=Y=Z=0$. First, observe that $K^j_X$ is adjacent to $L^j_X$,
$K^j_Z$, and the vertex $L^j_Z$ is adjacent to $K^j_Z$, $K^j_{\overline X}$,
since $\beta^j_{\overline X}\in K^j_X\cap L^j_X$, $\lambda^j\in K^j_X\cap
K^j_Z$, $\beta^j_{\overline Z}\in L^j_Z\cap K^j_Z$, and $\gamma^j_1\in L^j_Z\cap
K^j_{\overline X}$.  Also, $K^j_{\overline X}$ is adjacent to $K^j_{\overline
Z}$ by (\ref{enum:2}).  Further, $K^j_{\overline Z} K^j_Z$, $K^j_{\overline Z}
L^j_Z$ and $K^j_{\overline X}L^j_X$ are not edges of $G'$, since $K^j_{\overline
Z}|K^j_Z$, $K^j_{\overline Z}|L^j_Z$, and $K^j_{\overline X}|L^j_X$ and in
${\cal Q}_I$.  Thus, if $L^j_X$ is adjacent to $K^j_{\overline Z}$, then by
Lemma~\ref{lem:6cycle} applied to $\{K^j_X$, $L^j_X$, $K^j_{\overline Z}$,
$K^j_{\overline X}$, $L^j_Z$, $K^j_Z\}$, we conclude that $K^j_X$ is adjacent to
$K^j_{\overline X}$, which is impossible since $K^j_{\overline X}|K^j_X$ is in
${\cal Q}_I$. Similarly, if $K^j_X$ is adjacent to $K^j_{\overline Z}$, then by
Lemma~\ref{lem:5cycle} applied to $\{K^j_X$, $K^j_{\overline Z}$,
$K^j_{\overline X}$, $L^j_Z$, $K^j_Z\}$, we again conclude that $K^j_X$ is
adjacent to $K^j_{\overline X}$, a contradiction.  So, we may assume that both
$K^j_X$ and $L^j_X$ are not adjacent to $K^j_{\overline Z}$.  Now, observe that
$L^j_Y$ is adjacent to $K^j_{\overline Z}$, $K^j_Y$, and the vertex $K^j_X$ is
adjacent to $L^j_X$, $K^j_Y$, since $\gamma^j_3\in K^j_{\overline Z}\cap L^j_Y$,
$\beta^j_{\overline Y}\in L^j_Y\cap K^j_Y$, $\beta^j_{\overline X}\in K^j_X\cap
L^j_X$, and $\lambda^j\in K^j_Y\cap K^j_X$.  Also, $K^j_{\overline Y}$ is
adjacent to $K^j_{\overline Z}$ and $L^j_X$ by (\ref{enum:2}) and since
$\gamma^j_2\in K^j_{\overline Y}\cap L^j_X$. Further, $K^j_{\overline Y} K^j_Y$
and $K^j_{\overline Y}L^j_Y$ are not edges of $G'$, since $K^j_{\overline
Y}|K^j_Y$ and $K^j_{\overline Y}|L^j_Y$ are in ${\cal Q}_I$. Recall that $K^j_X$
and $L^j_X$ are not adjacent to $K^j_{\overline Z}$.  Then this contradicts
Lemma \ref{lem:6cycle} when applied to $\{K^j_X$, $L^j_X$, $K^j_{\overline Y}$,
$K^j_{\overline Z}$, $L^j_Y$, $K^j_Y\}$.

Thus, it is not the case that $X=Y=Z=0$, and by the above also not $X=Y=1$, nor
$X=Z=1$, nor $Y=Z=1$. Therefore, either $X=1$, $Y=Z=0$, or $Y=1$, $X=Z=0$, or
$Z=1$, $X=Y=0$. This proves that $\sigma$ is indeed a satisfying assignment for
$I$, which concludes the proof.
\end{proof}

We are finally ready to prove Theorem \ref{thm:one}.\medskip

\begin{proofof}{Theorem \ref{thm:one}}
Let $G'$ be a minimal chordal sandwich of $({\rm int^*}({\cal Q}_I)$, ${\rm
forb}({\cal Q}_I))$.  By Lemma \ref{lem:sigma4}, there exists $\sigma$, a
satisfying assignment  for $I$, such that $G_{\sigma}$ is a subgraph fo $G'$.
Thus, $G'$ is also a chordal sandwich of $(G_{\sigma},{\rm forb}({\cal Q}_I))$,
and hence, $G^*_\sigma$ is a subgraph of $G'$ by Lemma \ref{lem:sigma2}.  But by
Lemma \ref{lem:sigma3}, $G^*_\sigma$ is chordal, and so $G'$ is isomorphic to
$G^*_\sigma$ by the minimality of $G'$.

Conversely, if $\sigma$ is a satisfying assignment for $I$, then the graph
$G^*_\sigma$ is chordal by Lemma \ref{lem:sigma3}. Moreover,  ${\rm int^*}({\cal
Q}_I)$ is a subgraph of $G^*_\sigma$, by definition, and $G^*_\sigma$ contains
no edges of ${\rm forb}({\cal Q}_I)$, also by definition.  Thus, $G^*_\sigma$ is
a chordal sandwich of $({\rm int^*}({\cal Q}_I),{\rm forb}({\cal Q}_I))$, and it
is minimal by Lemma \ref{lem:sigma2}.

This proves that by mapping each satisfying assigment $\sigma$ to the graph
$G^*_\sigma$, we obtain the required bijection. That concludes the proof.
\end{proofof}

Finally, we have all the pieces to prove Theorem \ref{thm:unique-phyl}.\medskip

\section{Proof of Theorem \ref{thm:unique-phyl}}\label{sec:proof-unique-phyl}

Consider an instance $I$ to {\sc one-in-three-3sat} and a satisfying assignment
for $I$. We construct  the collection ${\cal Q}_I$ of quartet trees, as well as
the ternary phylogenetic tree ${\cal T}_\sigma$ as described in Sections
\ref{sec:constr} and \ref{sec:unique-trees}, respectively.  Clearly,
constructing ${\cal Q}_I$ and ${\cal T}_\sigma$ takes polynomial time.  By
combining Theorem \ref{thm:sandwich} with Theorems \ref{thm:one} and
\ref{thm:unique-trees}, we obtain that $\sigma$ is the unique satisfying
assignment of $I$ if and only if ${\cal T}_\sigma$ is the only phylogenetic tree
that displays ${\cal Q}_I$. Since, by Theorem \ref{thm:unique-3sat}, it is
$NP$-hard to determine if an instance to {\sc one-in-three-3sat} has a unique
satisfying assignment, it is therefore $NP$-hard to decide, for a given
phylogenetic tree ${\cal T}$ and a collection of quartet trees ${\cal Q}$,
whether or not ${\cal Q}$ defines ${\cal T}$. 

That concludes the proof.

\bibliographystyle{acm}
\bibliography{phyl2}

\end{document}